\patchcmd{\@maketitle}{\begin{center}}{\begin{flushleft}}{}{}
\patchcmd{\@maketitle}{\begin{tabular}[t]{c}}{\begin{tabular}[t]{@{}l}}{}{}
\patchcmd{\@maketitle}{\end{center}}{\end{flushleft}}{}{}
\let\oldnl\nl
\newcommand{\nonl}{\renewcommand{\nl}{\let\nl\oldnl}}
\renewenvironment{abstract}
{\small\section*
{\bfseries\noindent{\raisebox{-.15\baselineskip}{\normalsize\abstractname}}\hrulefill} 
}
\newtheorem{theorem}{Theorem}
\newtheorem{lemma}{Lemma}
\author{hkhj\\hjk}
\author{hjjkhj\\hjk}
\begin{document}
 \pagenumbering{gobble}

\title{Arbitrary Pattern Formation by Asynchronous Opaque Robots with Lights \footnote{A preliminary version \cite{Bose2019} of this paper appeared in the 26th International Colloquium on Structural Information and Communication Complexity (SIROCCO 2019), July 1-4, 2019, L’Aquila, Italy.}}
    \author{    
    Kaustav Bose\\
    \small \emph{Department of Mathematics, Jadavpur University, India}\\
    \small \emph{kaustavbose.rs@jadavpuruniversity.in}\\\\
    Manash Kumar Kundu\\
    \small \emph{Gayeshpur Government Polytechnic, Kalyani, India}\\
    \small \emph{manashkrkundu.rs@jadavpuruniversity.in}\\\\
    Ranendu Adhikary\\
    \small \emph{Department of Mathematics, Jadavpur University, India}\\
    \small \emph{ranenduadhikary.rs@jadavpuruniversity.in}\\\\
    Buddhadeb Sau\\
    \small \emph{Department of Mathematics, Jadavpur University, India}\\
    \small \emph{buddhadeb.sau@jadavpuruniversity.in}\\\\
    }
    \date{}
    
  \maketitle

\begin{abstract}


The \textsc{Arbitrary Pattern Formation} problem asks for a distributed algorithm that moves a set of autonomous mobile robots to form any arbitrary pattern given as input. The robots are assumed to be autonomous, anonymous and identical. They operate in Look-Compute-Move cycles under an asynchronous scheduler. The robots do not have access to any global coordinate system. The movement of the robots is assumed to be \emph{rigid}, which means that each robot is able to reach its desired destination without interruption. The existing literature that investigates this problem, considers robots with unobstructed visibility. This work considers the problem in the more realistic \emph{obstructed visibility} model, where the view of a robot can be obstructed by the presence of other robots. The robots are assumed to be punctiform and equipped with visible lights that can assume a constant number of predefined colors. We have studied the problem in two settings based on the level of consistency among the local coordinate systems of the robots: \emph{two axis agreement} (they agree on the direction and orientation of both coordinate axes) and \emph{one axis agreement} (they agree on the direction and orientation of only one coordinate axis). In both settings, we have provided a full characterization of initial configurations from where any arbitrary pattern can be formed. 

%

\end{abstract}

\section{Introduction}

One of the recent trends of research in robotics is to use a swarm of simple and inexpensive robots to collaboratively execute complex tasks, as opposed to using a single robot system consisting of one powerful and expensive robot. Robot swarms offer several advantages over single robot systems, such as scalability, robustness and versatility. Algorithmic aspects of decentralized coordination of robot swarms have been extensively studied in the  literature over the last two decades. In theoretical studies, the traditional framework  models the robot swarm as a set of autonomous, anonymous and identical computational entities freely
 moving in the plane. The robots do not have access to any global coordinate system. Each robot is equipped with sensor capabilities to perceive the positions of other robots. The robots operate in \textsc{Look-Compute-Move} (LCM) cycles. In each cycle, a robot takes a snapshot of the positions of the other robots \textsc{(Look)}; based on this snapshot, it executes a deterministic algorithm to determine a destination \textsc{(Compute)}; and it moves towards the computed destination \textsc{(Move)}. Based on the activation and timing of the robots, there are three types of schedulers considered in the literature. In the fully synchronous setting \textsc{(FSync)}, the robots operate in synchronous rounds. All the robots are activated in each round, where they take their snapshots at the same time, and then execute their moves simultaneously. The semi-synchronous model \textsc{(SSync)} is same as the \textsc{FSync} model, except for the fact that not all robots are necessarily activated in each round. The most general model is the asynchronous model \textsc{(ASync)} where there are no assumptions regarding the synchronization and duration of the actions of the robots.

 \textsc{Arbitrary Pattern Formation} or $\mathcal{APF}$ is a fundamental coordination problem in swarm robotics where the robots are required to form any specific but arbitrary geometric pattern given as input. The majority of the literature that studies this problem, considers robots with unlimited and unobstructed visibility. However, if the robots are equipped with camera sensors, then it is unrealistic to assume that each robot can observe all other robots in the swarm, as the line of sight of a robot can be obstructed by the presence of other robots. This setting is known as the \emph{opaque robot} or \emph{obstructed visibility} model. Formally, the robots are modeled as points on the plane and it is assumed that a robot is able to see another robot if and only if no other robot lies on the line segment joining them. Two recent works \cite{FelettiMP18, VaidyanathanST18} investigated two formation problems in this model. In \cite{FelettiMP18}, \textsc{Uniform Circle Formation} problem was studied, where the robots are required to form a circle by positioning themselves on the vertices of a regular polygon. Their approach is to first solve the \textsc{Mutual Visibility} problem as a subroutine where the robots arrange themselves in a configuration in which each robot can see all other robots. Then they solved the original problem from a mutually visible configuration. The more general \textsc{Arbitrary Pattern Formation} problem was first studied in the obstructed visibility model in \cite{VaidyanathanST18}. Their algorithm first solves \textsc{Mutual Visibility} and then solves \textsc{Leader Election} by a randomized algorithm. \textsc{Leader Election} is closely related to $\mathcal{APF}$. For a set of anonymous and identical robots, \textsc{Leader Election} is solved based on geometric characteristics of the configuration. In particular, if the configuration is asymmetric then the asymmetry can be used to elect a unique leader. The same can also be done in certain symmetric configurations. This is easy to do in the unlimited and unobstructed visibility model: the robots take snapshot of the entire configuration and  inspect it to determine the leader. This is not possible in the obstructed visibility model where the robots may not have the full view of the configuration. Therefore, the naive approach would be to first solve \textsc{Mutual Visibility} so that the robots can see the entire configuration. However, since \textsc{Leader Election} and also $\mathcal{APF}$ are deterministically unsolvable from some symmetric configurations, trying to first solve \textsc{Mutual Visibility} may create new symmetries, from where the problem becomes deterministically unsolvable.  Hence, if we want to first bring the robots to a mutually visible configuration, then we have to design a \textsc{Mutual Visibility} algorithm that does not create such symmetries. The existing algorithms in the literature for \textsc{Mutual Visibility} do not have this feature. For this reason, after solving \textsc{Mutual Visibility}, \textsc{Leader Election} was solved in  \cite{VaidyanathanST18} by a randomized algorithm.

 \subsection{Our Contribution}
 
  In this work, our aim is to provide a deterministic solution for $\mathcal{APF}$ in the obstructed visibility model. We consider a system of autonomous, anonymous and identical robots that operate under an asynchronous (\textsc{ASync}) scheduler. We consider the problem in two settings based on the level of consistency among the local coordinate systems of the robots: \emph{two axis agreement}, where the robots agree on the direction and orientation of both coordinate axes and \emph{one axis agreement}, where the robots agree on the direction and orientation of only one coordinate axis, say the $X$-axis. In other words, in the first setting, the robots have a common notion of `up', `down', `left' and `right', while in the second setting, the robots agree on only `left' and `right'. We have shown that if the robots are \emph{oblivious}, i.e., the robots have no memory of their past LCM cycles, then $\mathcal{APF}$ is unsolvable even in the two axis agreement setting. Therefore, we consider the \emph{luminous robots} model in which the robots are equipped with persistent visible lights that can assume a constant number of predefined colors. In this model, we have shown that $\mathcal{APF}$ can be solved from any initial configuration in the two axis agreement setting. However, in the one axis agreement setting, $\mathcal{APF}$ is not deterministically solvable from some initial configurations even for luminous robots with unlimited and unobstructed visibility. In particular, if the robots agree on the direction and orientation of only one coordinate axis, say the $X$ axis, then $\mathcal{APF}$ is unsolvable when the initial configuration has a reflectional symmetry with respect to a line $\mathcal{K}$ which is parallel to the $X$ axis and has no robots lying on $\mathcal{K}$. For all other initial configurations, we show that $\mathcal{APF}$ is solvable.

  Our algorithms require 3 and 6 colors respectively for two axis and one axis agreement. We assume that the movements of the robots are \emph{rigid} in the sense that each robot is able to reach its desired destination without any interruption. Our algorithms work in two stages. In the first stage we solve \textsc{Leader Election}. The leader helps to fix a common global coordinate system. Then in the second stage, the robots arrange themselves to form the input pattern in the global coordinate system. Therefore, this paper also provides a solution for \textsc{Leader Election} in the obstructed visibility model. For robots with obstructed visibility and having only partial agreement in coordinate system, deterministic \textsc{Leader Election} is difficult and is of independent interest. In the two axis agreement setting, \textsc{Leader Election} is trivial: if there is a unique leftmost robot then it becomes the leader and if there are multiple leftmost robots then the bottommost among them becomes the leader. In the one axis agreement setting, the algorithm is much more involved. To the best of our knowledge, this is the first algorithm that solves \textsc{Leader Election} in this model.

 
\subsection{Earlier Works}

The study of \textsc{Arbitrary Pattern Formation}  was initiated in \cite{Yamashita96,Yamashita99}. In these papers, they fully characterized the class of patterns formable by autonomous and anonymous robots with an unbounded amount of memory in \textsc{FSync} and \textsc{SSync}. They characterized the class of formable patterns by using the notion of \emph{symmetricity} which is essentially the order of the cyclic group that acts on the initial configuration. They also showed that in \textsc{SSync}, \textsc{Gathering}, i.e., the point formation problem, for two oblivious robots is unsolvable. However, it is trivially solvable for non-oblivious robots. This differentiates non-oblivious \textsc{SSync} robots from oblivious ones. Later, in \cite{Yamashita10}, they characterized the families of patterns formable by \emph{oblivious} robots in \textsc{FSync} and \textsc{SSync}. The problem was first studied in the weak setting of oblivious robots in \textsc{ASync} in \cite{Flocchini08}. They showed that if the robots have no common agreement on coordinate system, then it is impossible to form an arbitrary pattern. If the robots have one axis agreement, then any odd number of robots can form an arbitrary pattern, but an even number of robots cannot, in the worst case. If the robots agree on both $X$ and $Y$ axes, then any pattern is formable from any configuration of robots. They also proved that it is possible to elect a leader for $n \geq 3$ robots if it is possible to form any arbitrary pattern. In \cite{Petit09,Dieudonne10}, the authors studied the relationship between \textsc{Arbitrary Pattern Formation} and \textsc{Leader election}.  They proved that any arbitrary pattern can be formed from any initial configuration wherein the leader election is possible. More precisely, their algorithms work for four or more robots with chirality and for at least five robots without chirality. Having chirality means that the robots agree on a cyclic orientation. Combined with the result in \cite{Flocchini08}, it follows that \textsc{Arbitrary Pattern Formation} and \textsc{Leader election} are equivalent, i.e., it is possible to solve \textsc{Arbitrary Pattern Formation} for $n \geq 4$ with chirality (resp. $n \geq 5$ without chirality) if and only if \textsc{Leader election} is solvable.  Recently, the case of $n=4$ robots was fully characterized in \cite{Bramas18} in \textsc{ASync}, with and without agreement in chirality. They proposed a new geometric invariant that exists in any configuration with four robots, and using this invariant, they presented an algorithm that forms any target pattern from any solvable initial configuration. In \cite{Cicerone17,Fujinaga15}, the problem was studied in \textsc{ASync} allowing the pattern to have multiplicities. The problem of forming a sequence of patterns in a given order was studied in \cite{Das15}. Randomized algorithms for pattern formation were studied in \cite{Yamauchi14, BramasT16}. In \cite{cicerone2018embedded,Fujinaga15}, the so-called \textsc{Embedded Pattern Formation} problem was studied where the pattern to be formed is provided as a set of fixed and visible points in the plane. In \cite{bose18}, the problem was considered in a grid based terrain where the movements of the robots are restricted only along grid lines and only by a unit distance in each step. They showed that a set of asynchronous robots without agreement in coordinate system can form any arbitrary pattern, if their starting configuration is asymmetric. 


All the aforementioned works considered robots with unlimited and unobstructed visibility. In \cite{Yamauchi13}, the problem was first studied in the the limited, but unobstructed, visibility setting, where a robot can see all robots that are within a fixed limited distance $V$. They first showed that oblivious \textsc{FSync} robots with limited visibility cannot solve $\mathcal{APF}$. Therefore, they considered non-oblivious robots with unlimited memory, each of which can record the history of local views and outputs during execution. For these robots, they presented algorithms that work in \textsc{FSync} with non-rigid movements and  \textsc{SSync} with rigid movements. Their strategy is to first converge the robots towards a point (using the algorithm from \cite{AndoOSY99}) until they all observe each other. Once this is achieved, the situation reduces to the full visibility setting and the problem can be solved using existing techniques. However, the convergence phase can create new symmetries in the configuration. These symmetries are broken using the local memory of the robots in which all the snapshots taken since the first round are stored. Pattern formation in the limited visibility setting has been considered recently in \cite{Lukovszki14} for position aware robots. The paper considers \textsc{FSync} robots with constant size memory operating on an infinite grid with a common coordinate system. Furthermore, robots are given a fixed point on the grid so that they can form a connected configuration containing it.  In the obstructed visibility model, the \textsc{Gathering} problem has been studied for robots in three dimensional space \cite{Bhagat18}. \textsc{Gathering} has also been studied in the \emph{opaque fat robots} model in the plane \cite{agathangelou2013distributed}. In this model, the robots are modeled as disks of the same size and it is assumed that any two robots $r$ and $r'$ can see each other if and only if there exist point $p$ and $p'$ on the circumference of $r$ and $r'$ respectively such that the line segment joining $p$ and $p'$ does not intersect any other robot point. A related problem in obstructed visibility model is the \textsc{Mutual Visibility} problem. In this problem, starting from arbitrary configuration, the robots have to reposition themselves to a configuration in which every robot can see all other robots in the team.  The problem has been extensively studied in the literature under various settings \cite{di2017mutual, sharma2017log,sharma2016complete,bhagat2017optimum,sharma2015mutual,Adhikary18}. \textsc{Arbitrary Pattern Formation} in the obstructed visibility model was first studied recently in \cite{VaidyanathanST18}, without any agreement in coordinate system, where the authors proved runtime bounds in terms of the time required to solve \textsc{Leader election}. However, they did not provide any deterministic solution for \textsc{Leader election} and it is yet to be studied in the literature in the obstructed visibility model.

\subsection{Outline}

 The paper is organized as follows. In Section \ref{sec:model}, the robot model under study is presented. In Section \ref{sec: preli}, the formal definition of the problem is given along with some basic notations and terminology. In Section \ref{sec: imp}, we present some impossibility results. In Section \ref{sec: one axis}, we describe the algorithm for $\mathcal{APF}$ under one axis agreement. In Section \ref{sec: main result and proof}, the main results of the paper are presented along with formal proofs.

\section{Robot Model}\label{sec:model}

A set of $n$ mobile computational entities, called \emph{robots}, are initially placed at distinct points in the Euclidean plane. Each robot can move freely in the plane. The robots are assumed to be:
\begin{description}
  \item [anonymous:] they have no unique identifiers that they can use in a computation
 
 \item [identical:] they are indistinguishable by their physical appearance
 
 \item [autonomous:] there is no centralized control
 
 \item [homogeneous:] they execute the same deterministic algorithm.
\end{description}

The robots are modeled as points in the plane, i.e., they do not have any physical extent. The robots do not have access to any global coordinate system. Each robot is provided with its own local coordinate system centered at its current position, and its own notion of unit distance and handedness. However, the robots may have a priori agreement about the direction and orientation of
the axes in their local coordinate system. Based on this, we consider the following two models.
\begin{description}
  \item [Two axis agreement:] They agree on the direction and orientation of both axes.
 
 \item [One axis agreement:] They agree on the direction and orientation of only one axis. 
 
\end{description}

The \emph{opaque robot} or \emph{obstructed visibility} model assumes that visibility of a robot can be obstructed by the presence of other robots. We assume that the robots have unlimited but obstructed visibility, i.e., the robots are equipped with a 360 vision camera sensor that enables the robots to take snapshots of the entire plane, but its vision can be obstructed by the presence of other robots. Formally, a point $p$ on the plane is visible to a robot $r$ if and only if the line segment joining $p$ and $r$ does not contain any other robot. Hence, two robots $r_1$ and $r_2$ are able to see each other if and only if no other robot lies on the line segment joining them.

This paper studies the pattern formation problem in the \emph{robots with lights} or \emph{luminous robot} model, introduced by Peleg \cite{peleg2005distributed}. In this model, each robot is equipped with a visible light which can assume a constant number of predefined colors. The lights serve both as a weak explicit communication mechanism and a form of internal memory. We denote  the set of colors available to the robots by $\mathcal{C}$. Notice that a robot having light with only one possible color has the same capability as the one with no light. Therefore, the luminous robot model generalizes the classical model.

The robots, when active, operate according to the so-called \emph{LOOK-COMPUTE-MOVE} \emph{(LCM)} cycle. In each cycle, a previously idle or inactive robot wakes up and executes the following steps.
\begin{description}
 \item[LOOK:] The robot takes a snapshot of the current configuration, i.e., it obtains the positions, expressed in its local
coordinate system, of all robots visible to it, along with their respective colors. The robot also knows its own color.
 
 \item[COMPUTE:] Based on the perceived configuration, the robot performs computations according to a deterministic algorithm to decide a destination point $x \in \mathbb{R}^2$ (expressed in its local coordinate system) and a color $c \in \mathcal{C}$. As mentioned earlier, the deterministic algorithm is  same for all robots. The robot then sets its light to $c$.
 
 \item[MOVE:] The robot then moves towards the point $x$.
\end{description}

After executing a LOOK-COMPUTE-MOVE cycle, a robot becomes inactive. Then after some finite time, it wakes up again to perform another LOOK-COMPUTE-MOVE cycle. Notice that after a robot sets its light to a particular color in the COMPUTE phase of a cycle, it maintains its color until the COMPUTE phase of the next LCM cycle. When a robot transitions from one LCM cycle to the next, all of its local memory (past computations and snapshots) are erased, except for the color of the light.

Based on the activation and timing of the robots, there are three types of schedulers considered in the literature.
\begin{description}
 \item[Fully synchronous:] In the fully synchronous setting \textsc{(FSync)}, time can be logically divided into global rounds. In each round, all the robots are activated. They take the snapshots at the same time, and then perform their moves simultaneously. 
 
 \item [Semi-synchronous:] The semi-synchronous setting \textsc{(SSync)} coincides with the \textsc{FSync} model, with the only difference that not all robots are necessarily activated in each round. However, every robot is activated infinitely often.
 
 \item[Asynchronous:] The asynchronous setting \textsc{(ASync)} is the most general model. The robots are activated independently and each robot executes its cycles independently. The amount of time spent in LOOK, COMPUTE, MOVE and inactive states is finite but unbounded, unpredictable and not same for different robots. As a result, the robots do not have a common notion of time. Moreover, a robot can be seen while moving, and hence, computations can be made based on obsolete information about positions. Also, the configuration perceived by a robot during the LOOK phase may significantly change before it makes a move and therefore, may cause a collision. 

\end{description}

 The scheduler that controls the activations and the durations of the operations can be thought of as an adversary. In the \textsc{Non-Rigid} movement model, the adversary also has the power to stop the movement of a robot before it reaches its destination. However, $\exists$ $\delta > 0$ so that each robot traverses at least the distance $\delta$ unless its destination is closer than $\delta$. This restriction imposed on the adversary is necessary, because otherwise, the adversary can stop a robot after moving distances $\frac{1}{2}, \frac{1}{4}, \frac{1}{8}, \ldots$ and thus, not allowing any robot to traverse a distance of more than 1. In the \textsc{Rigid} movement model, each robot is able to reach its desired destination without any interruption. In this paper, the robots are assumed to have \textsc{Rigid} movements. The adversary also has the power to choose the local coordinate systems of individual robots (obeying the agreement assumptions). However, for each robot, the local coordinate system set by the adversary when it is first activated, remains unchanged. In other words, the local coordinate systems of the robots are persistent. Of course, in any COMPUTE phase, if instructed by the algorithm, a robot can logically define a different coordinate system, based on the snapshot taken, and transform the coordinates of the positions of the observed robots in the new coordinate system. This coordinate system is obviously not retained by the robot in the next LCM cycle.

\section{Preliminaries} \label{sec: preli}

\subsection{Definitions and Notations}

We assume that a set $\mathcal{R}$ of $n$ robots are placed at distinct positions in the Euclidean plane. For any time $t$, $\mathbb{C}(t)$ will denote the configuration of the robots at time $t$. For a robot $r$, its position at time $t$ will be denoted by $r(t)$. When there is no ambiguity, $r$ will represent both the robot and the point in the plane occupied by it. By $1_r$, we shall denote the unit distance according to the local coordinate system of $r$. At any time $t$, $r(t).light$ or simply $r.light$ will denote the color of the light of $r$ at $t$.

Suppose that a robot $r$, positioned at point $p$, takes a snapshot at time $t_1$. Based on this snapshot, suppose that the deterministic algorithm run in the COMPUTE phase instructs the robot to change its color (Case 1) or move to a different point (Case 2) or both (Case 3). In case 1, assume that it changes its color at time $t_2 > t_1$. In case 2, assume that it starts moving at time $t_3 > t_1$. Note that when we say that it starts moving at $t_3$, we mean that $r(t_3) = p$, but $r(t_3 + \epsilon) \neq p$ for sufficiently small $\epsilon > 0$. 
For case 3, assume that $r$ changes it color at  $t_2 > t_1$ and starts moving at  $t_3 > t_2$. Then we say that $r$ has a \emph{pending move} at $t$ if $t \in (t_1, t_2)$ in case 1 or $t \in (t_1, t_3]$ in case 2 and 3. A robot $r$ is said to be \emph{stable} at time $t$, if $r$ is stationary and has no pending move at $t$. A configuration at time $t$ is said to be a \emph{stable configuration} if every robot is stable at $t$. A configuration at time $t$ is said to be a \emph{final configuration} if 
\begin{enumerate}                                                                                                                                                                                                                                                                                                \item every robot at $t$ is stable,                                                                                                                                                                                                                                                                                                                                                                                                                                                                                                                                                                                   \item any robot taking a snapshot at $t$ will not decide to move or change its color.                                                                                                                                                                                                                                                                                                 \end{enumerate}

With respect to the local coordinate system of a robot, positive and negative directions of the $X$-axis will be referred to as \emph{right} and \emph{left} respectively, while the positive and negative directions of the $Y$-axis will be referred to as \emph{up} and \emph{down} respectively. Furthermore, any straight line parallel to the $X$-axes ($Y$-axes) will be referred to as a horizontal (resp. vertical) line. For a robot $r$, $\mathcal{L}_{V}(r)$ and $\mathcal{L}_{H}(r)$ are respectively the vertical and horizontal line passing through $r$. We denote by $\mathcal{H}_{U}^{O}(r)$ (resp. $\mathcal{H}_{U}^{C}(r)$) and $\mathcal{H}_{B}^{O}(r)$ (resp. $\mathcal{H}_{B}^{C}(r)$) the upper and bottom open (resp. closed) half-plane delimited by $\mathcal{L}_{H}(r)$ respectively. Similarly, $\mathcal{H}_{L}^{O}(r)$ (resp. $\mathcal{H}_{L}^{C}(r)$) and $\mathcal{H}_{R}^{O}(r)$ (resp. $\mathcal{H}_{R}^{C}(r)$) are the left and right open (resp. closed) half-plane delimited by $\mathcal{L}_{V}(r)$ respectively. We define $\mathcal{H}^{O}(r_1, r_2)$ (resp. $\mathcal{H}^{C}(r_1, r_2)$) as the horizontal open (resp. closed) strip delimited by $\mathcal{L}_{H}(r_1)$ and $\mathcal{L}_{H}(r_2)$. For a robot $r$ and a straight line $\mathcal{L}$ passing through it, $r$ will be called \emph{non-terminal on $\mathcal{L}$} if it lies between two other robots on $\mathcal{L}$, and otherwise it will be called \emph{terminal on $\mathcal{L}$}.

\subsection{Problem Definition}\label{sec:preli}

A swarm of $n$ robots is arbitrarily deployed at distinct positions in the Euclidean plane. Initially, the lights of all the robots are set to a specific color called \texttt{off}. Each robot is given a pattern $\mathbb{P}$ as input, which is a list of $n$ elements from $\mathbb{R}^2$. Our algorithms will also work if the given pattern $\mathbb{P}$ has multiplicities. However, for simplicity we assume that $\mathbb{P}$ has no multiplicities, i.e., the $n$ elements in the list are all distinct.  Notice that the robots, through the input, implicitly obtain $n$, the total number of robots. Without loss of generality, we assume that 

\begin{enumerate}
 \item the $n$ elements in $\mathbb{P}$ are arranged in lexicographic order, i.e., $(x,y) < (x',y')$ iff either $x < x'$ or $x = x'$, $y < y'$,
 
 \item $(x,y) \in \mathbb{R}^2_{\geq 0} = \{ (a,b) \in \mathbb{R}^2 | a, b \geq 0 \}$ for each element $(x,y)$ in $\mathbb{P}$.
\end{enumerate}

The goal of the \textsc{Arbitrary Pattern Formation}, or in short $\mathcal{APF}$, is to design a distributed algorithm so that there is a  time $t$ such that

\begin{enumerate}
 \item $\mathbb{C}(t)$ is a final configuration,
 
 \item  the lights of all the robots at $t$ are set to the same color,
 
 \item $\mathbb{P}$ can be obtained from $\mathbb{C}(t)$ by a sequence of translation, reflection, rotation, uniform scaling operations.
 
\end{enumerate}

Furthermore, it is required that the movements must be \emph{collisionless} in the sense that at any time $t' \in [0,t]$, no two robots  occupy the same position in the plane.

\section{Some Impossibility Results}\label{sec: imp}

In this section, we prove some impossibility results. We first show that in the obstructed visibility the model, $\mathcal{APF}$ is unsolvable by oblivious robots. The result holds even for robots with two axis agreement.

\begin{theorem}\label{thm: theory1}
 $\mathcal{APF}$ is unsolvable by oblivious opaque robots even with two axis agreement.
\end{theorem}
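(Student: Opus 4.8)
The plan is to argue by contradiction: suppose some deterministic algorithm $\mathcal{A}$ solves $\mathcal{APF}$ for oblivious opaque robots with two axis agreement under \textsc{ASync}. The leverage I would use is that, for oblivious robots, the action taken in a cycle is a deterministic function $f$ of the snapshot alone, and that two axis agreement fixes a common orientation of the axes. Hence two robots whose snapshots are \emph{identical} (same relative positions and colours of the visible robots) necessarily compute the same destination vector and the same colour; if they are activated together, their \emph{relative displacement is preserved}. I would exploit this to build a configuration in which a whole group of robots is rendered mutually indistinguishable by obstruction, and then force them to move in lockstep.

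Concretely, I would first take an initial configuration $C_0$ exhibiting a regular, obstruction-limited structure — the cleanest candidate being a set of collinear, equally spaced robots on a vertical line, where every \emph{non-terminal} robot sees exactly one robot one unit up and one robot one unit down, and nothing else. All such interior robots share one and the same snapshot $V_0$, so $\mathcal{A}$ assigns them a single displacement $d_0=f(V_0)$ and a single colour. Letting the adversary realise a fully synchronous activation, every interior robot moves by the same $d_0$; since they move identically, they stay collinear and equally spaced, i.e. the whole interior block is merely translated. This is the invariant I want to ride: the interior robots can never alter their mutual offsets, so the sub-pattern they realise is frozen up to a rigid translation throughout the execution.

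Next I would choose the target pattern $\mathbb{P}$ so that no similarity image of it contains $n-2$ collinear, equally spaced points occupying the roles these interior robots are constrained to fill — for instance a pattern in which the corresponding robots are forced into distinct, non-uniform offsets. Because a final configuration must be stable and must present $\mathbb{P}$ up to translation, rotation, reflection and scaling, while the frozen interior block can only ever present an equally spaced collinear sub-pattern, $\mathcal{A}$ could never reach a valid final configuration, contradicting its correctness.

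The hard part is guaranteeing that the indistinguishability genuinely \emph{persists} for the entire execution. Because visibility is unlimited in range, the moment a terminal robot steps off the line it may become visible to an interior robot, enriching its snapshot beyond $V_0$ and breaking the lockstep after a single round; so a naive collinear configuration is not by itself enough. The technical heart of the proof is therefore to engineer the configuration — and the adversarial activation, including the handling of the \emph{pending moves} inherent to \textsc{ASync} — so that the symmetry-breaking movements of the non-indistinguishable robots remain permanently hidden from the indistinguishable group: either by shielding the group so that its members can only ever perceive a locally identical neighbourhood, or by arguing that $\mathcal{A}$ is forced either to keep the group in perpetual lockstep or to act on it before any distinguishing robot can enter its field of view. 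Making this invariant robust against the asynchronous adversary, rather than merely against one synchronous round, is what I expect to be the principal obstacle.
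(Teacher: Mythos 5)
There is a genuine gap, and it is exactly the one you flag at the end: your entire argument rests on the invariant that the interior robots remain mutually indistinguishable \emph{throughout the execution}, and you do not establish it. In fact it fails in general after a single round. If the common displacement $d_0=f(V_0)$ is nonzero, the interior block translates while the two terminal robots (whose views differ from $V_0$) may move by something else entirely; since visibility is unlimited in range and is blocked only by robots lying exactly on the joining segment, the interior robots adjacent to the former terminals will then see those terminals at new, generally distinct relative positions, their snapshots diverge from $V_0$, and the lockstep is broken. If $d_0=0$ the block is momentarily frozen, but the algorithm is free to move the terminal robots so as to enrich the interior views later and unfreeze the block. Neither the ``shielding'' nor the ``forced perpetual lockstep'' alternative you sketch is proved, and with point robots and unlimited visibility a persistent shield is not available. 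So the proposal, as written, does not yield a contradiction.

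The paper's proof avoids reasoning about the dynamics altogether, and this is the idea you are missing. It argues about \emph{fixed points}: choose the target pattern $\mathbb{P}$ itself to be a collinear configuration (with $n=3k+2$ points and carefully chosen gaps) in which, because of obstruction, every robot sees only its two neighbours on the line, so the whole configuration presents only five distinct local views. Since a correct algorithm must make the formed configuration final, the algorithm must answer ``do nothing'' on each of those five views. One then exhibits a second collinear configuration $\mathbb{C}'$, \emph{not} similar to $\mathbb{P}$, every one of whose local views coincides with one of those same five views. Hence every robot in $\mathbb{C}'$, whenever activated and under any scheduler, does nothing; $\mathbb{C}'$ is a fixed point that is not a valid final configuration, contradicting correctness. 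No invariant about the evolution of the configuration is needed. If you want to rescue your approach, you would have to replace your dynamic lockstep invariant by this kind of static view-covering argument (or find another way to make indistinguishability persist, which appears substantially harder).
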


\begin{proof}
 For the sake of contradiction, assume that there is a distributed algorithm $\mathcal{A}$ that solves $\mathcal{APF}$ for oblivious opaque robots. Consider a swarm of $n = 3k+2$, $k > 1$ robots. Suppose that the given pattern $\mathbb{P}$ is the following: $\mathbb{P}[0] = (0,0), \mathbb{P}[n-1] = (4k+2, 0)$ and for $j = 1, \ldots, k,$ $\mathbb{P}[3j-2] = (4j-2,0)$, $\mathbb{P}[3j-1] = (4j-1,0)$, $\mathbb{P}[3j] = (4j,0)$. Here $\mathbb{P}[i]$ denotes the $i$th element of $\mathbb{P}$. See Fig. \ref{fig: 1theory1} for an instance with $k = 4$, i.e., $n = 14$. Now assume that starting from some initial configuration, the algorithm $\mathcal{A}$ forms a final configuration $\mathbb{C}$ which is similar to $\mathbb{P}$, i.e., there is a coordinate system $C$ such that the robots, say $r_0, r_1, \ldots, r_{n-1}$, are at $\mathbb{P}[0], \mathbb{P}[1], \ldots, \mathbb{P}[n-1]$ respectively. Partition the robots as $S_0 = \{r_0\}, S_1 = \{r_{i} | i = 3j-2, j = 1, \ldots, k\}, S_2 = \{r_{i} | i = 3j-1, j = 1, \ldots, k\}, S_3 = \{r_{i} | i = 3j, j = 1, \ldots, k\}, S_4 = \{r_{n-1}\}$. It is easy to see that the local views of the robots belonging to the same partition are identical. Also, since the configuration is final and an algorithm for oblivious robots depends only on the local view of a robot and the input pattern, for each of the five types of views, the algorithm $\mathcal{A}$ instructs to do nothing. Now consider an initial configuration $\mathbb{C}'$ in which the robots are at points $p_0, p_1, \ldots, p_{n-1}$ such that with respect to the coordinate system $C$, $p_0 = (0,0), p_{n-1} = (3k+3,0)$ and for $j = 1, \ldots, 3k$, $p_j = (1+j,0)$. See Fig. \ref{fig: 1theory2}. Notice that the views of the robots at $p_j$, $j = 2, \ldots, 3k-1$, in $\mathbb{C}'$ are identical to the views of the robots in $S_2$ in $\mathbb{C}$. Also the views of the robots at $p_0$, $p_1$, $p_{n-2}$ and $p_{n-1}$ in $\mathbb{C}'$ are respectively identical to the views of the robots in $S_0, S_1, S_3$ and $S_{4}$ in $\mathbb{C}$. Therefore, the robots in $\mathbb{C}'$ will remain stationary according to algorithm $\mathcal{A}$ and hence, the given pattern $\mathbb{P}$ can not be formed.      
\end{proof}

 \begin{figure}[h]
\centering
\subcaptionbox[Short Subcaption]{ The final configuration $\mathbb{C}$ similar to the input pattern $\mathbb{P}$
       \label{fig: 1theory1}
}
[
    0.6\textwidth 
]
{
    \fontsize{8pt}{8pt}\selectfont
    \def\svgwidth{0.6\textwidth}
    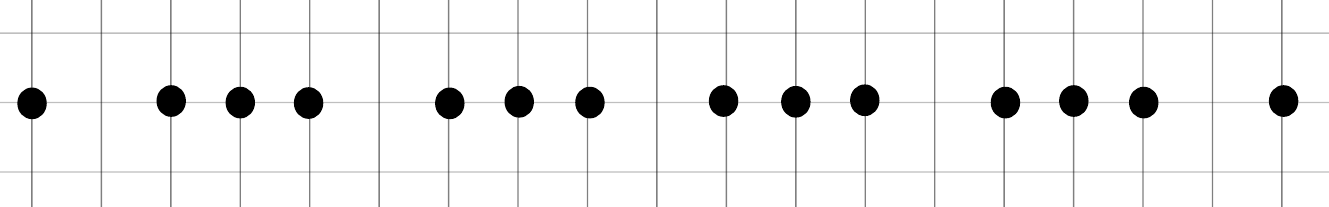
}
\\
\subcaptionbox[Short Subcaption]{ The initial configuration $\mathbb{C}'$.
       \label{fig: 1theory2}
}
[
    0.6\textwidth 
]
{
    \fontsize{8pt}{8pt}\selectfont
    \def\svgwidth{0.6\textwidth}
    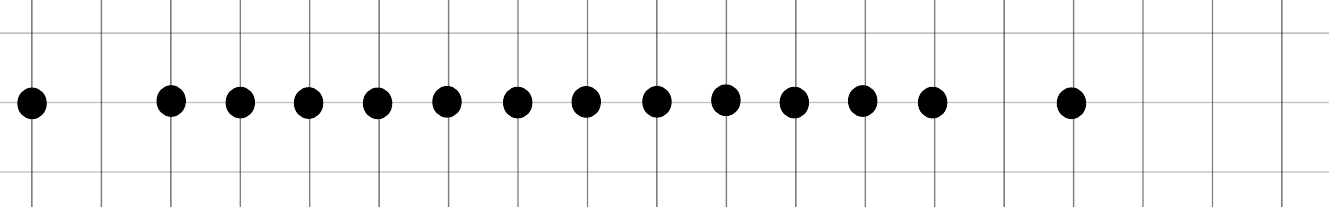
}

\caption[Short Caption]{Illustrations supporting the proof of Theorem \ref{thm: theory1}.}
\label{}
\end{figure}

Next we show that even for robots with lights and full visibility, $\mathcal{APF}$ is unsolvable in one axis agreement setting if the initial configuration has certain symmetries. The result is proved using the arguments in the proof of Theorem 2.2 in \cite{Flocchini08}.

\begin{theorem}\label{thm: theory}
 Consider a set of luminous robots that agree on the direction and orientation of only one axis, say $X$ axis. Then $\mathcal{APF}$ is unsolvable even in \textsc{FSync} with full visibility if the initial configuration has a reflectional symmetry with respect to a line $\mathcal{K}$ which is parallel to the $X$ axis with no robots lying on $\mathcal{K}$.
\end{theorem}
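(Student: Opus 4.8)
The plan is to argue by contradiction, exploiting the fact that a reflectional symmetry across a horizontal line $\mathcal{K}$ is preserved by any deterministic algorithm when the robots agree only on the $X$ axis but not on the $Y$ axis. First I would set up the symmetry formally. Let $\mathbb{C}(0)$ be an initial configuration invariant under reflection $\rho$ across the horizontal line $\mathcal{K}$, with no robot on $\mathcal{K}$, so that the robots split into pairs $\{r, r'\}$ with $r' = \rho(r)$, each a mirror image of the other. Because the robots agree only on the direction and orientation of the $X$ axis, the adversary is free to assign to $r'$ the local coordinate system that is the mirror image (across $\mathcal{K}$) of the one assigned to $r$; this is a legal choice since reflecting across a horizontal line preserves the agreed $X$-axis direction while flipping the unconstrained $Y$-axis orientation and the handedness. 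Under this adversarial assignment, $r$ and $r'$ have identical colors initially (all lights are \texttt{off}) and their local views of the entire configuration are identical, since the snapshot of $r'$ is exactly the $\rho$-image of the snapshot of $r$, and $\rho$ is an isometry.

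Next I would propagate this symmetry through the LCM cycles. Working in \textsc{FSync} with full visibility, in each round every robot is activated, takes its snapshot simultaneously, and moves simultaneously. I would prove by induction on rounds that the configuration remains $\rho$-symmetric and that for every mirror pair, the two robots always have equal lights and mirror-image local views. The key step is: since $r$ and $r'$ run the same deterministic algorithm on identical (up to their respective coordinate systems) inputs, they compute the same color and the same destination point in their own local coordinates; translated back to the global frame, this means $r'$ moves to the $\rho$-image of $r$'s destination and sets its light to the same color. Hence the symmetry across $\mathcal{K}$ is maintained after the simultaneous move, and in particular no robot ever crosses onto $\mathcal{K}$ (a robot landing on $\mathcal{K}$ would have to coincide with its mirror image, forcing a collision, which is forbidden; more carefully, its destination's image would be itself, so the destination lies on $\mathcal{K}$ only if the robot's own computation sends it there symmetrically, which I would rule out using collisionlessness). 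Therefore $\mathbb{C}(t)$ is $\rho$-symmetric with an equal number of robots strictly on each side of $\mathcal{K}$ for all $t$.

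Finally I would derive the contradiction by choosing the target pattern $\mathbb{P}$ to be one that admits no such horizontal reflectional symmetry — for instance, a pattern whose multiset of points cannot be partitioned into mirror pairs across any horizontal line, such as an asymmetric configuration or simply one with an odd imbalance of points above and below every horizontal axis. Any configuration similar to $\mathbb{P}$ (via translation, rotation, reflection, or scaling) inherits the absence of this particular symmetry only if chosen correctly; the cleaner route is to pick $\mathbb{P}$ so that \emph{no} similarity image of it has a reflectional symmetry about a horizontal line, which I can guarantee by taking a pattern with trivial symmetry group and an odd number of points so that any reflection axis would have to pass through a point. Since every reachable configuration $\mathbb{C}(t)$ is forced to be symmetric about the horizontal line $\mathcal{K}$, it can never be similar to such a $\mathbb{P}$, contradicting the assumption that the algorithm forms $\mathbb{P}$.

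I expect the main obstacle to be the careful treatment of the boundary line $\mathcal{K}$: I must argue rigorously that the preserved symmetry genuinely obstructs pattern formation, which requires (i) ruling out that a symmetric pair could collapse onto $\mathcal{K}$ without violating collisionlessness, and (ii) pinning down exactly which target patterns are unreachable, i.e., ensuring the chosen $\mathbb{P}$ has no horizontal-line reflective symmetry under any admissible similarity transformation. The adversarial coordinate-assignment argument and the induction are routine once the \textsc{FSync} full-visibility setting is fixed, so the crux is matching the invariant symmetry of the reachable configurations against the asymmetry of the target.
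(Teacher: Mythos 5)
Your proposal is correct and follows essentially the same route as the paper: the adversary mirrors the local $Y$-axes of specular partners across $\mathcal{K}$, an induction over synchronous rounds shows that the reflectional symmetry (with no robot on $\mathcal{K}$ and equal lights on mirror pairs) is preserved, collisions rule out destinations on or across $\mathcal{K}$, and an asymmetric target pattern yields the contradiction. One small caveat: since the symmetric initial configuration with no robot on $\mathcal{K}$ forces $n$ to be even, your ``odd number of points'' fallback is unavailable, but your primary choice of a pattern with trivial symmetry group works.
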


\begin{proof}
 Assume that the initial configuration has a reflectional symmetry with respect to a horizontal line $\mathcal{K}$ and there are no robots lying on $\mathcal{K}$. Initially, the lights of all the robots are set to a specific color called \texttt{off}. Let $\mathcal{H}$ and $\mathcal{H}'$ be the two open half-planes delimited by $\mathcal{K}$. Assume that the adversary sets the $Y$-axes of the local coordinate systems of the robots of $\mathcal{H}$ and $\mathcal{H}'$ in opposite directions (see Fig. \ref{fig: theory}). We show that in each round, 1) the configuration is symmetric with respect to $\mathcal{K}$, 2) there are no robots lying on $\mathcal{K}$, and 3) if $r$ and $r'$ are in symmetric positions with respect to $\mathcal{K}$, then their lights are set to the same color and the $Y$-axes of their local coordinate systems are in opposite directions. It follows from our assumptions about the initial configuration that 1)-3) are true in round 1. Now assume that 1)-3) are true in round $t$, $t \geq 1$. The robots in round $t$ can be partitioned into $\frac{n}{2}$ disjoint pairs $\{r, r'\}$, where $r$ is in $\mathcal{H}$ and $r'$ is its specular partner in  $\mathcal{H}'$. Consider a pair of specular robots $r$ and $r'$. Assume that $r$ decides to move to a point $p$. 
 
 \textbf{Case 1:} Let $p \in \mathcal{K}$. Then $r'$ also decides to move to $p$, as they have the same view, same color, and they both execute the same deterministic algorithm. Hence, they will collide at $p$.
 
 \textbf{Case 2:} Let $p \in \mathcal{H'}$. By the same logic, $r'$ will decide to move to $p' \in \mathcal{H}$, which is the mirror image of $p$ with respect to $\mathcal{K}$. Similar to the previous case, they will collide on $\mathcal{K}$ (see Fig. \ref{fig: theory 2}).
 
 \textbf{Case 3:} Therefore, the only possible case is $p \in \mathcal{H}$. Again, $r'$ will decide to move to $p' \in \mathcal{H}'$, which is the mirror image of $p$ with respect to $\mathcal{K}$. Also, if they change their lights, both will set the same color. 
 
 Since the same happens for all pairs of specular robots, at the end of the round, the configuration remains symmetric with respect to $\mathcal{K}$, with $\mathcal{K}$ containing no robots and the specular partners having same colors and the $Y$-axes of their local coordinate systems in opposite directions (see Fig. \ref{fig: theory 3}). Hence, 1)-3) are true in round $t+1$. 
 
 Therefore, it is easy to see that it is impossible to form an arbitrary pattern, for example, an asymmetric pattern. Hence, $\mathcal{APF}$ is unsolvable.   
\end{proof}

 \begin{figure}[h]
\centering
\subcaptionbox[Short Subcaption]{
       \label{fig: theory}
}
[
    0.32\textwidth 
]
{
    \fontsize{8pt}{8pt}\selectfont
    \def\svgwidth{0.32\textwidth}
    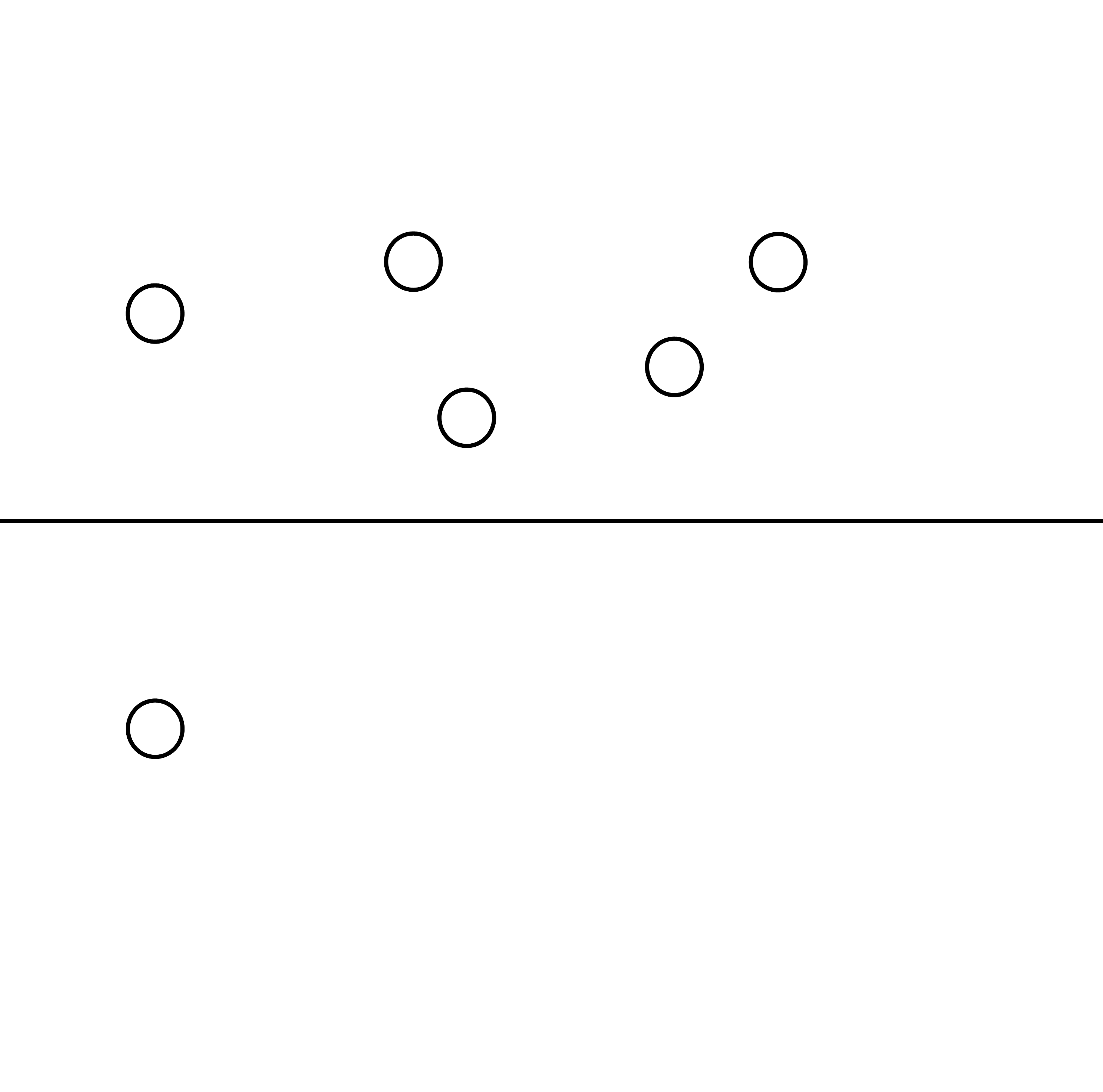
}
\hfill
\subcaptionbox[Short Subcaption]{
     \label{fig: theory 2}
}
[
    0.32\textwidth 
]
{
    \fontsize{8pt}{8pt}\selectfont
    \def\svgwidth{0.32\textwidth}
    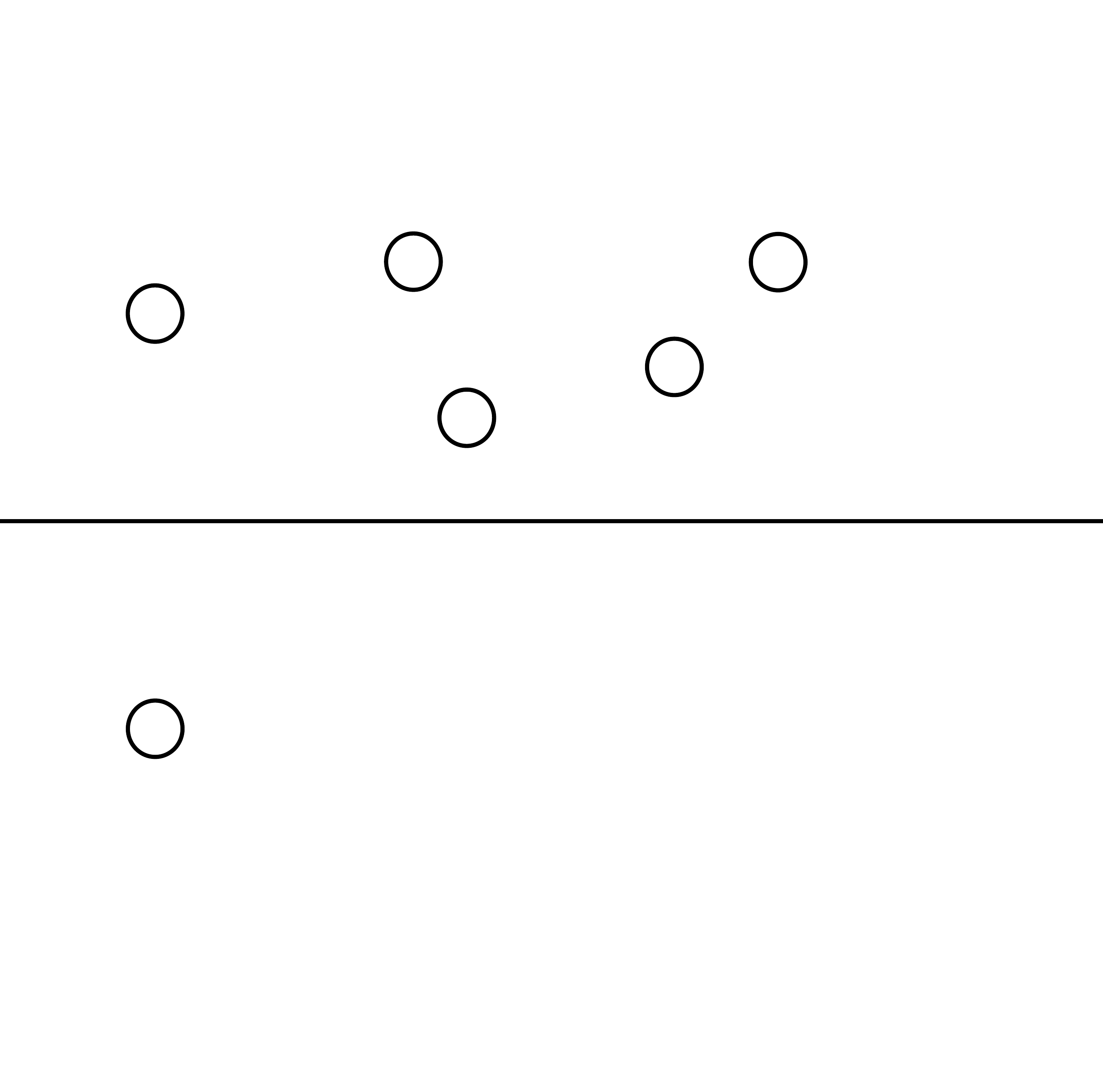
}
\hfill
\subcaptionbox[Short Subcaption]{
       \label{fig: theory 3}
}
[
    0.32\textwidth 
]
{
    \fontsize{8pt}{8pt}\selectfont
    \def\svgwidth{0.32\textwidth}
    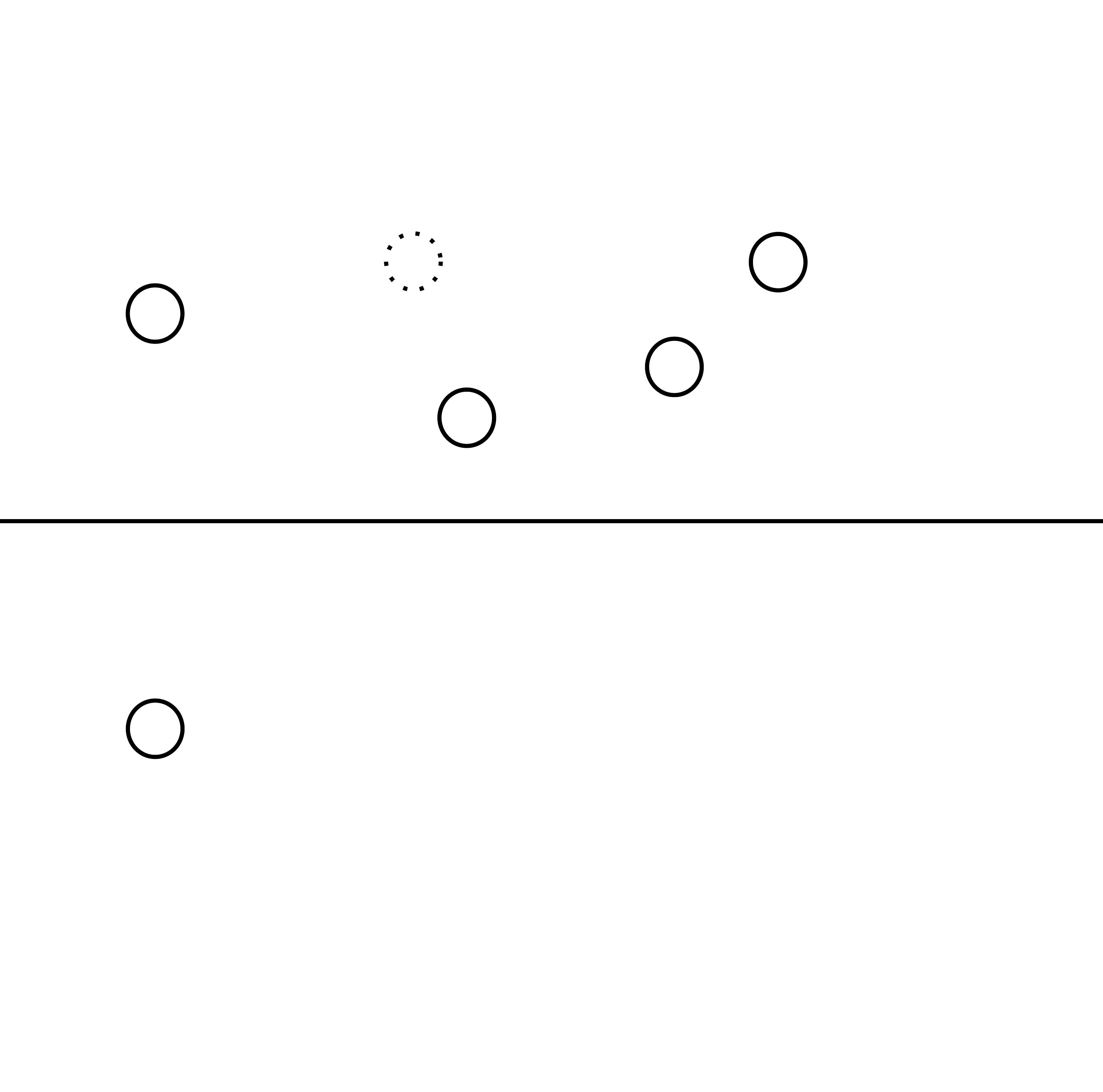
}

\caption[Short Caption]{Illustrations supporting the proof of Theorem \ref{thm: theory}.}
\label{fig: compute destination}
\end{figure}

\section{Arbitrary Pattern Formation under One Axis Agreement}\label{sec: one axis}

In this section, we shall discuss $\mathcal{APF}$ under the one axis agreement model in \textsc{ASync}. We assume that the robots agree on the direction and orientation of only $X$ axis. According to Theorem \ref{thm: theory}, $\mathcal{APF}$ can not be solved by a deterministic algorithm if the initial configuration is symmetric with respect to a line that is parallel to $X$ axis and does not contain any robot on it. Therefore, we shall assume that the initial configuration does not have such a symmetry. We shall prove that with this assumption, $\mathcal{APF}$ is solvable in \textsc{ASync}. Our algorithm requires six colors, namely \texttt{off}, \texttt{terminal}, \texttt{candidate}, \texttt{symmetry}, \texttt{leader}, and \texttt{done}. As mentioned earlier, initially the lights of all the robots are set to \texttt{off}.

The goal of $\mathcal{APF}$ is that the robots have to arrange themselves to a configuration which is similar to $\mathbb{P}$ with respect to translation, rotation, reflection and uniform scaling. Since the robots do not have access to any global coordinate system, there is no agreement regarding where and how the pattern $\mathbb{P}$ is to be embedded in the plane. To resolve this ambiguity, we shall first elect a robot in the team as the leader. The relationship between leader election and arbitrary pattern formation is well established in the literature. Once a leader is elected, it is not difficult to reach an agreement on a suitable coordinate system. Then the robots have to reconfigure themselves to form the given pattern $\mathbb{P}$ with respect to this coordinate system. Thus, the algorithm is divided into two \emph{stages}, namely \emph{leader election} and \emph{pattern formation from leader configuration}. The leader election stage is again logically divided into two \emph{phases}, \emph{phase 1} and \emph{phase 2}. In each LCM cycle, it has to infer in which stage or phase it currently is, from certain geometric conditions and the lights of the robots in the perceived configuration. These conditions are described in Algorithm \ref{main_algorithm}. Notice that due to the obstructed visibility, two robots taking snapshots at the same time can have quite different views of the configuration. Therefore, it may happen that they decide to execute instructions corresponding to different stages or phases of the algorithm. 
$\\  \\$
  \begin{algorithm}[H]
    \setstretch{.1}
    \SetKwInOut{Input}{Input}
    \SetKwInOut{Output}{Output}
    \SetKwProg{Fn}{Function}{}{}
    \SetKwProg{Pr}{Procedure}{}{}

    \Input{The configuration of robots visible to me.}
    
    \Pr{\textsc{ArbitraryPatternFormation()}}{

    \uIf(\tcp*[f]{stage 2}){there is a robot with light set to \texttt{leader}}{
    
	\textsc{PatternFormationFromLeaderConfiguration()}
    
      }
      
      \Else(\tcp*[f]{stage 1}){
    
	\uIf{there are two robots with light set to \texttt{candidate} on the same vertical line or at least one robot with light set to \texttt{symmetry}}{
	
	    \textsc{Phase2()} }
	    
	\Else{\textsc{Phase1()}}

    }
    
    }
\caption{Arbitrary Pattern Formation}
    \label{main_algorithm} 
\end{algorithm}

\subsection{Leader Election}\label{sec:election}

For a group of anonymous and identical robots, leader election is solved on the basis of the relative positions of the robots in the configuration. But this is only possible if the robots can see the entire configuration. Therefore, the naive approach would be to first bring the robots to a mutually visible configuration where each robot can see all other robots. But as mentioned earlier, this can create unwanted symmetries in the configuration from where arbitrary pattern formation may be unsolvable. Therefore, we shall employ a different strategy that does not require solving mutual visibility.

The aim of the leader election stage is to obtain a stable configuration where there is a unique robot $r_l$ with its light set to \texttt{leader} and the remaining $n-1$ robots have their lights set to \texttt{off}. But we would like the configuration to satisfy some additional properties that will be useful in the next stage. Formally, our aim is to obtain a stable configuration where there is a unique robot $r_l$ such that
\begin{enumerate}
   \item $r_l.light =$ \texttt{leader}
   
   \item $r.light = $ \texttt{off} for all $r \in$ $\mathcal{R} \setminus \{r_l\}$
   
   \item $r \in \mathcal{H}_{R}^{O}(r_l) \cap \mathcal{H}_{U}^{O}(r_l)$ for all $r \in$ $\mathcal{R} \setminus \{r_l\}$.
\end{enumerate}
We shall call this a \emph{leader configuration}, and call $r_l$ the \emph{leader}. As mentioned earlier, the leader election algorithm consists of two phases, namely phase 1 and phase 2 (see Algorithm \ref{main_algorithm}). We describe the phases in detail in the following.

\subsubsection{Phase 1}\label{sec: phase 1}

\paragraph{Overview.}

Since the robots agree on left and right, if there is a unique leftmost robot in the configuration, then it can identify this from its local view and elect itself as the leader.  However, there might be more than one leftmost robots in the configuration. Assume that there are $k \geq 2$ leftmost robots in the configuration. Our aim in this phase is to reduce the number of leftmost robots to $k = 2$, or if possible, to $k = 1$. Suppose that the $k \geq 2$ leftmost robots are lying on the vertical line $\mathcal{L}$. We shall ask the two terminal robots on $\mathcal{L}$, say $r_1$ and $r_2$, to move horizontally towards left. If both robots move synchronously by the same distance, then the new configuration will have two leftmost robots. However, $r_1$ and $r_2$ can not distinguish between this configuration and the initial configuration, i.e., they can not ascertain from their local view if they are the only robots on the vertical line due to the obstructed visibility. To resolve this, we shall ask $r_1$ and $r_2$ to change lights to \texttt{terminal} before moving. Now, consider the case where $r_1$ and $r_2$ move different distances with $r_1$ moving further. Suppose that $r_2$ reaches its destination first, and when it takes the next snapshot, it finds that $r_1$ (with light \texttt{terminal}) is on the same vertical line. So, $r_2$ incorrectly concludes that the two terminal robots have been brought on the same vertical line, while actually $r_1$ is still moving leftwards. To avert this situation, we shall use the color \texttt{candidate}. After moving, the robots will change their lights to \texttt{candidate} to indicate that they have completed their moves. So, if we have two robots with light \texttt{candidate} on the same vertical line, then we are done. On the other hand, if we end up with the two robots with light \texttt{candidate} not on the same vertical line, then the one on the left will become the leader.

\paragraph{Formal Description.}

Before describing the algorithm, let us give a definition. We define a \emph{candidate configuration} to be a stable configuration where there are two robots $r_1$ and $r_2$ such that
\begin{enumerate}
   \item $r_1.light = r_2.light =$ \texttt{candidate}
      
   \item $r.light =$ \texttt{off} for all $r \in$ $\mathcal{R} \setminus \{r_1,r_2\}$
   
   \item  $r_1$ and $r_2$ are on the same vertical line
   
   \item $r \in \mathcal{H}_{R}^{O}(r_1)$ for all $r \in$ $\mathcal{R} \setminus \{r_1, r_2\}$.
\end{enumerate}

 Our aim in this phase is to create either a leader configuration or a candidate configuration. A pseudocode description of phase 1 is given in Algorithm \ref{algo:phase1}.
 $\\  \\$
 \begin{algorithm}[H]
    \setstretch{.1}
    \SetKwInOut{Input}{Input}
    \SetKwInOut{Output}{Output}
    \SetKwProg{Fn}{Function}{}{}
    \SetKwProg{Pr}{Procedure}{}{}

    \Input{The configuration of robots visible to me.}
    
    \Pr{\textsc{Phase1()}}{

    $r \leftarrow$ myself
    
    \uIf{$r.light =$ \texttt{off}}{
    
	\uIf{there are no robots in $\mathcal{H}_{L}^{C}(r)$ other than itself and all robots have their lights set to \texttt{off}}{
	
	  \textsc{BecomeLeader()}
	
	}
    
	\ElseIf{\textsc{LeftMostTerminal()} = True}{
	
	    $(x^*,y^*) \leftarrow$ \textsc{ComputeDestination()}
	    
	    $r.light \leftarrow$ \texttt{terminal}
	    
	    Move to $(x^*,y^*)$

	    }

      }

      \uElseIf{$r.light =$ \texttt{terminal}}{
	$r.light \leftarrow $ \texttt{candidate}
      }

      \ElseIf{$r.light = $ \texttt{candidate}}{

      \uIf{there is a robot with light \texttt{candidate} in $\mathcal{H}_{R}^{O}(r)$}{
      
	$r.light \leftarrow$ \texttt{off}
      
      }
      
      \ElseIf{there is a robot with light \texttt{off} in $\mathcal{H}_{L}^{O}(r)$}{
      
	$r.light \leftarrow$ \texttt{off}
      
      }
      
      }

    }
    

\Pr{\textsc{BecomeLeader()}}{

    \uIf{there are no robots in $\mathcal{H}_{B}^{C}(r)$ other than itself}{
	    
	    $r.light \leftarrow$ \texttt{leader}
	    
	  }
	  
	  \Else{
	  
	  $r' \leftarrow$ the bottommost robot
	  
	  $d \leftarrow$ $|r'.y|$
	  
	  Move $d+1_r$ distance down vertically 
	  }

    }


\Fn{\textsc{LeftMostTerminal()}}{

    $result \leftarrow False$
    
    \uIf{there are no robots in $\mathcal{H}_{L}^{O}(r)$}{
    
      \If{there is $\mathcal{H} \in \{\mathcal{H}_{B}^{O}(r), \mathcal{H}_{U}^{O}(r)\}$ such that $\mathcal{H} \cap \mathcal{L}_{V}(r)$ contains no robots}{$result \leftarrow True$}
    
    }
    
    \ElseIf{there is exactly one robot $r'$ in $\mathcal{H}_{L}^{O}(r)$ and $r'.light$ = \texttt{candidate}}{
    
      Let $\mathcal{H} \in \{\mathcal{H}_{B}^{O}(r), \mathcal{H}_{U}^{O}(r)\}$ be the open half-plane not containing $r'$
    
      \If{$\mathcal{H} \cap \mathcal{L}_{V}(r)$ contains no robots \label{code: condition terminal} }{$result \leftarrow True$}
    
    }
    
    return $result$
    
    }
    
    
    \Fn{\textsc{ComputeDestination()} \label{code: ComputeDestination}}{

    \uIf{there are no robots in $\mathcal{H}_{R}^{O}(r)$}{
    
      return $(-1_r, 0)$
    
    }
    
    \Else{
    
	  $\mathcal{L}' \leftarrow$ the leftmost vertical line containing a robot in $\mathcal{H}_{R}^{O}(r)$
	  
	  $d \leftarrow$ the horizontal distance between $r$ and $\mathcal{L}'$
	  
	  return $(-d, 0)$
    
    }
    
    }

\caption{Phase 1 of Leader Election}
    \label{algo:phase1} 
\end{algorithm}
$\\  \\$
Suppose that a robot $r$ with $r.light =$ \texttt{off} takes a snapshot at time $t$ and finds that 
 \begin{enumerate}
  \item there are no robots in $\mathcal{H}_{L}^{C}(r)$ other than itself, i.e., $r$ is the unique leftmost robot in the configuration,
  
  \item all robots have their lights set to \texttt{off}.
 \end{enumerate}
Then we shall say that the robot $r$ \emph{finds itself eligible to become leader at time $t$}. In this case, the robot $r$ will not immediately change its light to \texttt{leader}. It will start executing \textsc{BecomeLeader()}. \textsc{BecomeLeader()} makes it move downwards (according to its local coordinate system) until there are no robots in $\mathcal{H}_{B}^{C}(r)$ other than $r$ itself. This ensures that when $r$ changes its light to \texttt{leader}, we obtain a leader configuration.

 In the case where $r$ with $r.light = $ \texttt{off} is not a unique leftmost robot, it calls the function \textsc{LeftMostTerminal()}. If \textsc{LeftMostTerminal()} returns True, $r$ will move leftwards, and otherwise it will stay put. For a robot $r$ with $r.light = $ \texttt{off}, \textsc{LeftMostTerminal()} returns True if one of the following holds.
 \begin{enumerate}
  \item There are no robots in $\mathcal{H}_{L}^{O}(r)$, i.e., it is a leftmost robot. Furthermore, it is terminal on its vertical line.  
  \item There is exactly one robot $r'$ in $\mathcal{H}_{L}^{O}(r)$ and has light set to \texttt{candidate}. Furthermore, if $\mathcal{H} \in \{\mathcal{H}_{B}^{O}(r), \mathcal{H}_{U}^{O}(r)\}$ is the open half-plane not containing $r'$, then $\mathcal{H} \cap \mathcal{L}_{V}(r)$ must not contain any robot. 
 \end{enumerate}

 The second condition is necessary in the asynchronous setting. Let $r_1$ and $r_2$ be the leftmost robots terminal on their vertical line $\mathcal{L}$ in the initial configuration. Suppose that $r_1$ takes its first snapshot earlier and hence decides to move due to the first condition. Suppose that it leaves $\mathcal{L}$ at time $t$. If $r_2$ takes a snapshot before $t$, it will decide to move. But in absence of the second condition, $r_2$ will not move if it is activated after $t$ as it is no longer a leftmost robot. So, it will become impossible to know whether $r_2$ will move future or not. The second condition ensures that both $r_1$ and $r_2$ eventually leave $\mathcal{L}$.

 \begin{figure}[h]
\centering
\subcaptionbox[Short Subcaption]{
       \label{}
}
[
    0.36\textwidth 
]
{
    \fontsize{8pt}{8pt}\selectfont
    \def\svgwidth{0.36\textwidth}
    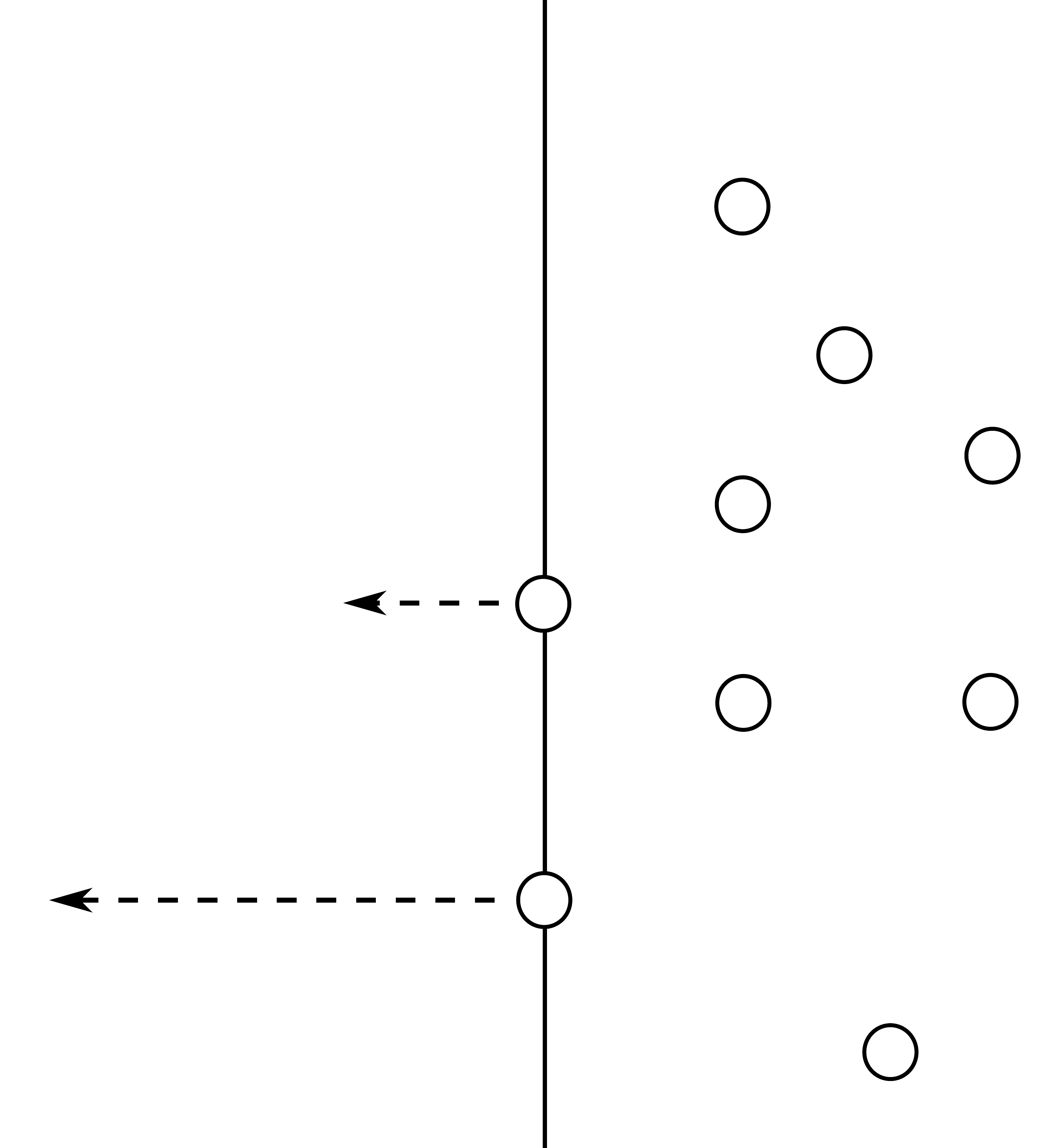
}
\hspace*{1cm}
\subcaptionbox[Short Subcaption]{
     \label{fig: compute destination 2}
}
[
    0.36\textwidth 
]
{
    \fontsize{8pt}{8pt}\selectfont
    \def\svgwidth{0.36\textwidth}
    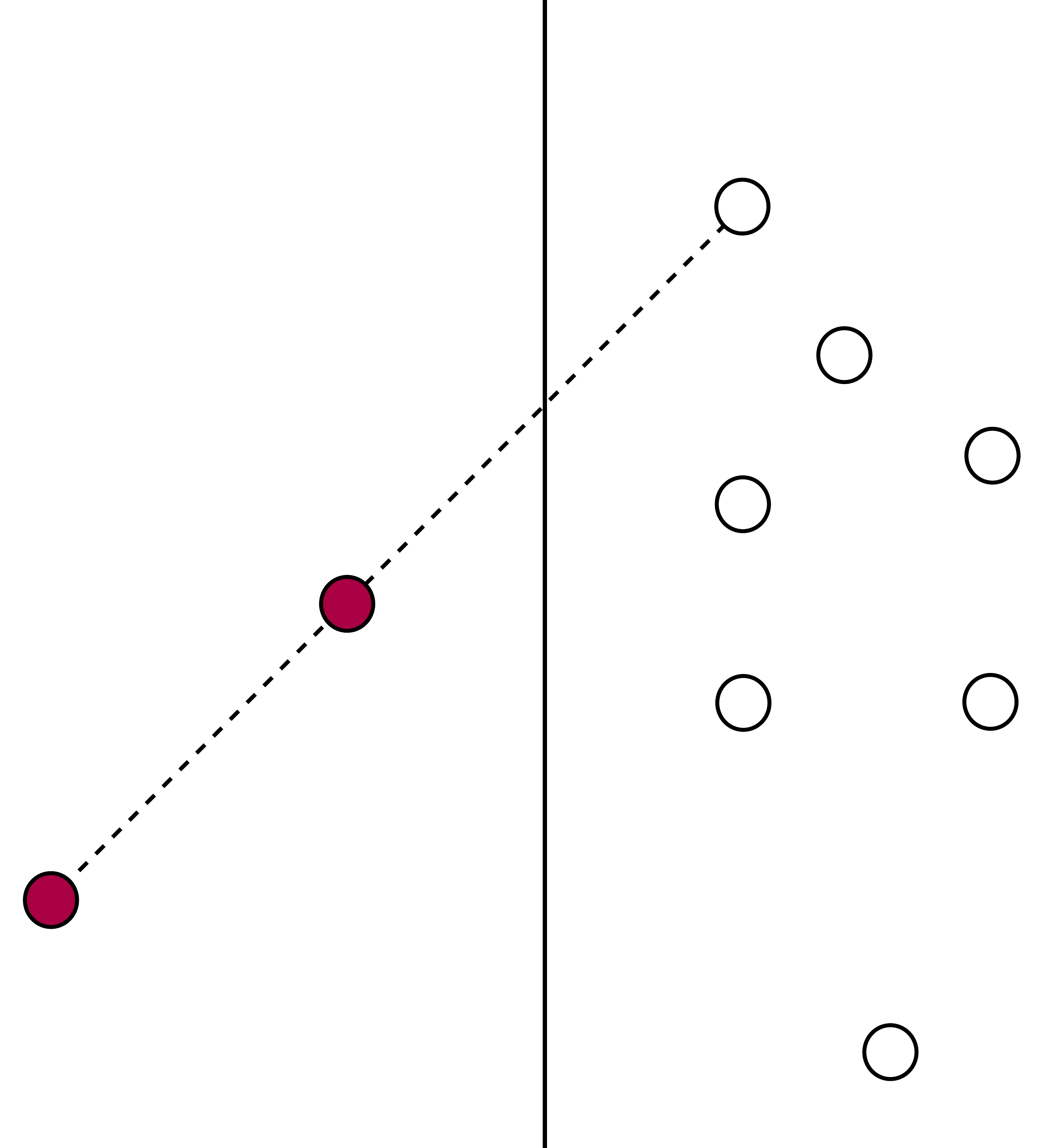
}

\caption[Short Caption]{a) The configuration has exactly two leftmost robots $r_1$ and $r_2$.  b) $r_1$ and $r_2$ move left by some arbitrary distance and change their lights to \texttt{candidate}. Then $r_3$ can see only $r_1$ on its left. }
\label{fig: compute destination}
\end{figure}

 Therefore, if the initial configuration has $k \geq 2$ leftmost robots, then both leftmost terminal robots will eventually change their lights to \texttt{terminal} and move left. After completing their moves, they will set their lights to \texttt{candidate}. Now, let us see what might happen if we do not specify how much the robots should move. Consider the situation shown in Fig. \ref{fig: compute destination}. The initial configuration $\mathbb{C}(0)$ has exactly two leftmost robots $r_1$ and $r_2$. Let $\mathbb{C}(t)$ be the configuration as shown in Fig. \ref{fig: compute destination 2}. Here $r_1$ has completed its move and changed its light to \texttt{candidate}. Clearly, \textsc{LeftMostTerminal()} will return True for $r_3$ as it can see only one robot in $\mathcal{H}_{L}^{O}(r_3)$ with light set to \texttt{candidate}. Therefore, $r_3$ will also decide to move left. To ensure that such a situation does not occur, the robots move to the points computed by the function \textsc{ComputeDestination()} (see line \ref{code: ComputeDestination} of Algorithm \ref{algo:phase1}). First consider the case where the initial configuration $\mathbb{C}(0)$ does not have all robots on the same vertical line. Let $d$ be the horizontal distance between the two leftmost vertical lines containing robots. In this case, \textsc{ComputeDestination()} makes $r_1$ and $r_2$ move left by $d$. Thus both of them move by equal amount. This ensures that the situation that we described will not occur (see the proof of Lemma \ref{lemma p1 hard}). Now consider the case where the initial configuration $\mathbb{C}(0)$ has all robots on the same vertical line. In this case, \textsc{ComputeDestination()} makes the terminal robots move by a unit distance according to their local coordinate system. Notice that their unit distances can be different. However, the situation that we described will not occur in this case even if they move by different amount (see the proof of Lemma \ref{lemma p1 line}). 
$\\  \\$
  \begin{algorithm}[H]
    \setstretch{0.1}
    \SetKwInOut{Input}{Input}
    \SetKwInOut{Output}{Output}
    \SetKwProg{Fn}{Function}{}{}
    \SetKwProg{Pr}{Procedure}{}{}

    \Input{The configuration of robots visible to me.}
    
    \Pr{\textsc{Phase2()}}{

    $r \leftarrow$ myself
    
    \uIf{$r.light =$ \texttt{candidate}}{

	  $r' \leftarrow$  another robot on $\mathcal{L} = \mathcal{L}_{V}(r)$ with light \texttt{candidate} or \texttt{symmetry}
	
	  $\mathcal{L}' \leftarrow$ the leftmost vertical line containing a robot in $\mathcal{H}_{R}^{O}(r)$
	  
	  $\mathcal{K} \leftarrow$ the horizontal line passing through the mean of the positions of the robots lying on $\mathcal{L}'$
	  
	  $d_{r} \leftarrow$ distance of $r$ from $\mathcal{K}$
	  
	  $d_{r'} \leftarrow$ distance of $r'$ from $\mathcal{K}$
	  
	  $d_{\mathcal{LL'}} \leftarrow$ distance between $\mathcal{L}$ and $\mathcal{L}'$
	  
	  set the positive direction of $Y$-axis of the local coordinate system towards $r'$ \label{code: set y axis}
    
	\uIf{ $r'.light =$  \texttt{symmetry}}{
	
	  \uIf{$d_{r} < d_{r'}$}{Move $d_{r'} - d_{r}$ distance vertically in direction opposite to $r'$}
	
	  \ElseIf{$d_{r} = d_{r'}$}{$r.light \leftarrow$ \texttt{symmetry}}

	}

	\ElseIf{$r'.light =$  \texttt{candidate}}{

	  \uIf{$r$ and $r'$ are in the same closed half-plane delimited by $\mathcal{K}$}{
	  
	    \If{$d_{r} > d_{r'}$}{Move to \textsc{ComputeDestination2()}}
	  
	  }
	  
	  \Else{
	  \uIf{$\lambda'(r) \prec \lambda'(r')$}{Move to \textsc{ComputeDestination2()}}

	  \ElseIf{$\lambda'(r) = \lambda'(r')$}{
	  
	  \If{$d_{r} \geq d_{r'}$}{

	  \uIf{the number of robots is equal to $n$ and there are no robots in $\mathcal{H}_B^O(r)$ \label{code: condition p2} }{
	  
	       \uIf{$\lambda(r) \prec \lambda(r')$}{Move to \textsc{ComputeDestination2()}}
	    
	    \uElseIf{$\lambda(r) \succ \lambda(r')$}{Move $\frac{d_{\mathcal{LL'}}}{2}$ distance right horizontally}
	    
	    \ElseIf{$\lambda(r) = \lambda(r')$}{
	    
	      \uIf{$\mathcal{K}$ has no robots on it}{Move to \textsc{ComputeDestination2()}}

	      \Else{$r.light \leftarrow$ \texttt{symmetry}}

	  }
	      
	    }

	  \Else{
	  
	    Move $d_{r}$ distance vertically in direction opposite to $r'$

	  }
	  }
	  }

	}
      }
      
    }

    \uElseIf{$r.light =$ \texttt{symmetry}}{
    
    \If{there is a robot in $\mathcal{H}_{L}^{O}(r)$ with light \texttt{off}}{
	
	  $r.light \leftarrow $ \texttt{off}
	
	}
    
    }
    
    \ElseIf{$r.light =$ \texttt{off}}{
    
    \If{there are two robots $r_1$ and $r_2$ in $\mathcal{H}_{L}^{O}(r)$ with light \texttt{symmetry} on the same vertical line $\mathcal{L}$}{
    
    $\mathcal{K} \leftarrow$ the horizontal line passing through the mid-point of the line segment joining $r_1$ and $r_2$.
    
    \If{$r$ is the leftmost robot lying on $\mathcal{K}$}{
    
    $d \leftarrow$ distance of $r$ from $\mathcal{L}$
    
    Move $d+1_r$ distance  left

    }
    
    }
    
    }

    }

    
    \Fn{\textsc{ComputeDestination2()}\label{code: ComputeDestination2}}{
    
    $\mathcal{H} \leftarrow$ the open half-plane delimited by $\mathcal{L}_{H}(r')$ not containing $r$.    
    
    \uIf{$\mathcal{L}' \cap \mathcal{H}$ has no robots}{return $(-1_r,0)$}
    
    \Else{
    
      $r'' \leftarrow$ the robot on $\mathcal{L}' \cap \mathcal{H}$ with maximum $y$-coordinate
      
      return $(-\frac{1}{2} d_{\mathcal{LL'}} \frac{r'.y}{r''.y - r'.y},0)$
    
    }
    
    }
    
\caption{Phase 2 of Leader Election}
    \label{algo:phase2} 
\end{algorithm}

\subsubsection{Phase 2}\label{sec: p2}

\paragraph{Overview.}

Phase 1 will end with either a leader configuration or a candidate configuration. In the first case, leader election is done, and in the second case we enter phase 2. So assume that we have two leftmost robots, $r_1$ and $r_2$, lying on the vertical line $\mathcal{L}$ with their lights set to \texttt{candidate}. Let $\mathbb{C}' = \mathbb{C} \setminus \{r_1,r_2\}$ be the configuration of the remaining robots. The idea is to elect the leader by inspecting the configuration $\mathbb{C}'$. If $\mathbb{C}'$ is asymmetric (Case 1), then it is possible to deterministically elect one of $r_1$ and $r_2$ as the leader from  asymmetry of $\mathbb{C}'$. The problem is that $r_1$ and $r_2$ may have only a partial view of $\mathbb{C}'$ due to obstructions. However, in that case the robot is aware that it can not see the entire configuration, as it knows the total number of robots (from the input $\mathbb{P}$). Suppose that $r_1$ finds that it can not see all the robots in the configuration. Then we can ask $r_1$ to move along the vertical line (in the direction opposite to $r_2$) in order to get an unobstructed view of the configuration. It can be proved that after finitely many steps, $r_1$ will be able to see all the robots in the configuration. However, this strategy may create a symmetry with the axis of symmetry not containing any robots, from where arbitrary pattern formation is deterministically unsolvable. This happens only in the case where $\mathbb{C}'$ has such a symmetry (Case 2). But the  $r_1$, $r_2$ can not ascertain this without having the full view of the configuration. Consider the leftmost vertical line $\mathcal{L}'$ containing a robot in $\mathbb{C}'$. Let $\mathcal{K}$ be the horizontal line passing through the mean of the positions of the robots lying on $\mathcal{L}'$. Clearly, if $\mathbb{C}'$ is symmetric, then $\mathcal{K}$ is the axis of symmetry. Note that both $r_1$, $r_2$ can see all robots lying on $\mathcal{L}'$. Hence, they can determine $\mathcal{K}$. Therefore, although they can not ascertain if $\mathbb{C}'$ is symmetric or not, they can still determine the only possible axis with respect to which there may exist a symmetry. Now $r_1$ and $r_2$ will determine their movements based on their distances from $\mathcal{K}$. In particular, only the one farther away from $\mathcal{K}$ will move, and in case of a tie, both will move. We can prove that following this strategy, it is possible to elect one among  $r_1$ and $r_2$ as the leader in Case 1 and Case 2. However, this may not be possible if $\mathbb{C}'$ is symmetric with respect to $\mathcal{K}$, with $\mathcal{K}$ containing at least one robot (Case 3). In this case, the movements of $r_1$ and $r_2$ can make the entire configuration symmetric with respect to the axis $\mathcal{K}$ (which contains at least one robot). In that case, $r_1$ and $r_2$ will change theirs lights to \texttt{symmetry}. It is crucial for the correctness of the algorithm that we coordinate the movements of $r_1$ and $r_2$ in such a way that when both set their lights to \texttt{symmetry}, they must be 1) visible to all the robots in $\mathbb{C}'$, and 2) equidistant from $\mathcal{K}$. If this is ensured, then all the robots can determine $\mathcal{K}$, which should be the horizontal line passing through the mid-point of the line segment joining $r_1$ and $r_2$. Then the leftmost robot on $\mathcal{K}$ will move towards left to become the leftmost robot in $\mathbb{C}$ and will eventually become the leader.

\paragraph{Formal Description.}

 A pseudocode description of phase 2 is presented in Algorithm \ref{algo:phase2}. Let $r_1$ and $r_2$ be the two candidate robots in the candidate configuration $\mathbb{C}$. Let $\mathbb{C}' = \mathbb{C} \setminus \{r_1,r_2\}$. As defined earlier, $\mathcal{L}$ is the vertical line containing $r_1$ and $r_2$, $\mathcal{L}'$ is the leftmost vertical line containing a robot in $\mathbb{C}'$, and $\mathcal{K}$ is the horizontal line passing through the mean of the positions of the robots lying on $\mathcal{L}'$. Let $\mathcal{H}_1$ and $\mathcal{H}_2$ be the two open half-planes delimited by $\mathcal{K}$. Let  $\overline{\mathcal{H}_1}$ and $\overline{\mathcal{H}_2}$ be their closure. First assume that both $r_1$ and $r_2$ lie in the same closed half-plane, say $r_1, r_2 \in \overline{\mathcal{H}_1}$. Then the robot further from $\mathcal{K}$, say $r_2$, will move left. Then clearly we are back to phase 1. As described in Section \ref{sec: phase 1}, eventually $r_2$ will become leader. However, this movement can create the similar situation shown in Fig. \ref{fig: compute destination 2}. To avoid this, we have specified the moving distance by the function \textsc{ComputeDestination2} (see line \ref{code: ComputeDestination2} of Algorithm \ref{algo:phase2}). See the proof of Lemma \ref{lemma: phase 2b} for details.  Now assume that $r_1$ and $r_2$ are in different open half-planes. So, let $\mathcal{H}_1$ and $\mathcal{H}_2$ be the half-planes containing $r_1$ and $r_2$ respectively. For each $\mathcal{H}_i$, we define a coordinate system $C_i$ in the following way. The point of intersection between $\mathcal{K}$ and $\mathcal{L}'$ is the origin, $\mathcal{L}' \cap \ \mathcal{H}_i$ is the positive $Y$-axis and the positive direction of $X$-axis is according to the global agreement. We express the positions of all the robots in $\mathcal{H}_i$ with respect to the coordinate system $C_i$. Now arrange the positions in the lexicographic order. Let $\lambda(r_i)$ denote the string thus obtained (see Fig. \ref{fig: lex}). Each term of the string is an element from $\mathbb{R}^2$. To make the length of the strings $\lambda(r_1)$ and $\lambda(r_2)$ equal, null elements $\Phi$ may be appended to the shorter string. For any non-null term $(x,y)$ of a string, we set $(x,y) < \Phi$. We shall write $\lambda(r_1) \prec \lambda(r_2)$ iff $\lambda(r_2)$ is lexicographically larger than $\lambda(r_1)$. For each string $\lambda(r_i)$, let $\lambda'(r_i)$ be the string obtained from $\lambda(r_i)$ by deleting all terms with $x$-coordinate not equal to $0$. That is, the terms of $\lambda'(r_i)$ corresponds to the robots on $\mathcal{L}' \cap \ \mathcal{H}_i$.  Again, null elements $\Phi$ may be appended to make the length of the strings $\lambda'(r_1)$ and $\lambda'(r_2)$ equal. The plan is to choose a leader by comparing these strings. First, the robots will compare $\lambda'(r_1)$ and $\lambda'(r_2)$. Clearly, both of them can see all the robots on $\mathcal{L}'$, hence can compute $\lambda'(r_1)$ and $\lambda'(r_2)$. If $\lambda'(r_i) \prec \lambda'(r_j)$, $r_i$ will move left. As before, $r_i$ will become leader. If $\lambda'(r_1) = \lambda'(r_2)$, the robots have to compare the full strings $\lambda(r_1)$ and $\lambda(r_2)$. But in order to compute these strings, the complete view of the configuration is required. Let $d_{r_1}$ and $d_{r_2}$ be the distance of $r_1$ and $r_2$ from $\mathcal{K}$ respectively. If $d_{r_i} \geq d_{r_j}$, $r_i$ will move vertically away from $r_j$  by a distance $d_{r_i}$ to get the full view of the configuration. When it can see all the robots in the configuration, it computes $\lambda(r_1)$ and $\lambda(r_2)$. If $\lambda(r_i) \prec \lambda(r_j)$, $r_i$ will move left ($r_i$ will become leader) and if $\lambda(r_i) \succ \lambda(r_j)$, $r_i$ will move right ($r_j$ will become leader). In the case where $\lambda(r_i) = \lambda(r_j)$ ($\mathbb{C}'$ is symmetric), $r_i$ will move left if $\mathcal{K}$ has no robots on it, or otherwise it will change its light to \texttt{symmetry}.

\begin{figure}[h]
\centering
    \fontsize{8pt}{8pt}\selectfont
    \def\svgwidth{0.37\textwidth}
    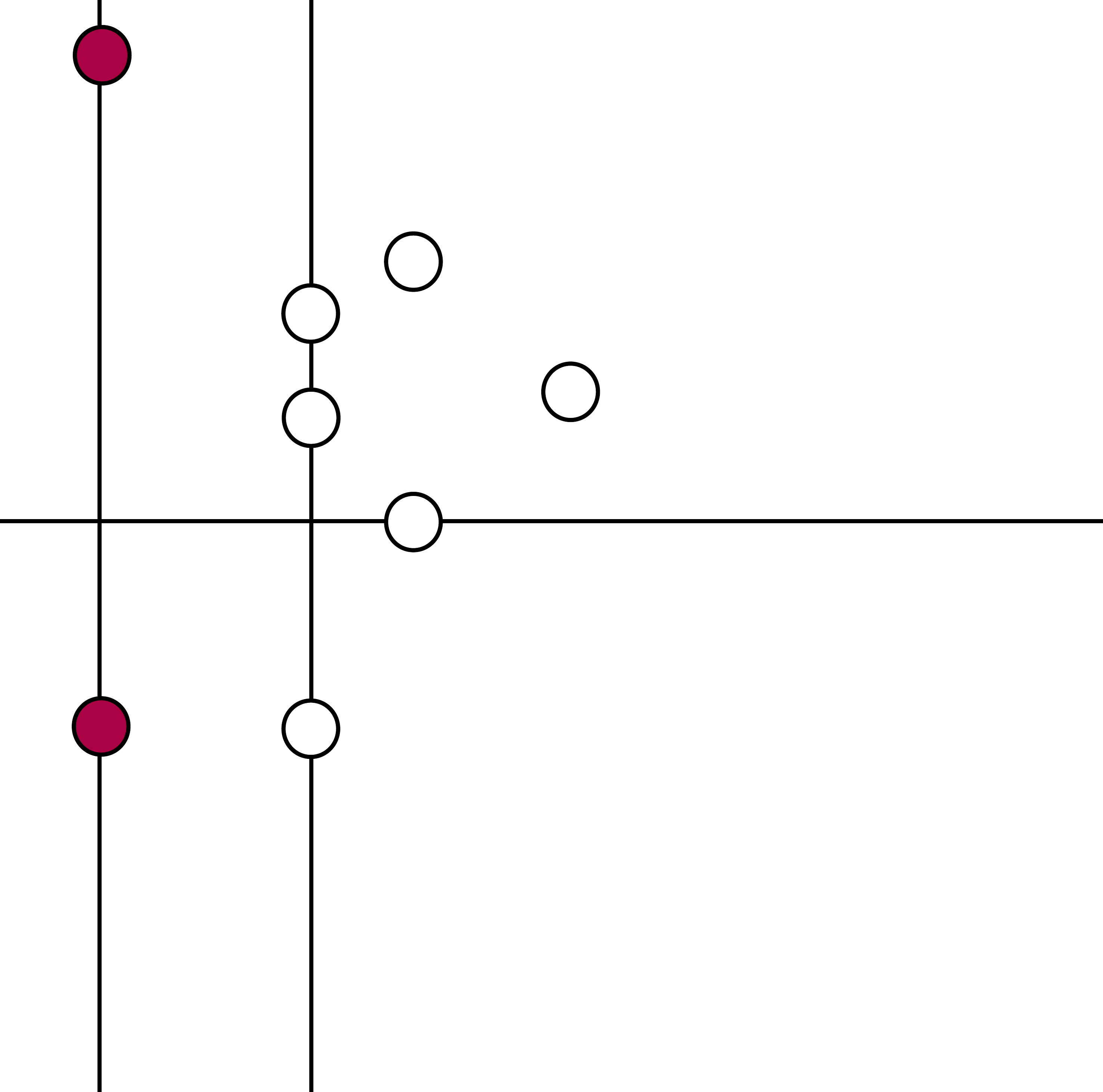

\caption{There are in total 15 robots in $\mathbb{C}'$ with 5 of  them in $\mathcal{H}_1$, 8 in $\mathcal{H}_2$ and 2 on $\mathcal{K}$. Hence, the length of $\lambda(r_1)$ and $\lambda(r_2)$ will be 8, with $\lambda(r_1)$ having 3 null element terms. With respect to some unit distance, we have $\lambda(r_1) =$ $(0,1)(0,2)(1,2.5)(2.5,1.25)(4.5,4.5)\Phi\Phi\Phi$ and $\lambda(r_2) =$ $(0,1)(0,2)(1,2.5)(2.5,4)(4,0.5)(4.5,2)(6,1)(7,4)$. So, $\lambda(r_1) \prec \lambda(r_2)$. Also, $\lambda'(r_1) =$ $\lambda'(r_2) =$ $(0,1)(0,2)$.}
\label{fig: lex}
\end{figure}

\subsection{Pattern Formation from a Leader Configuration}\label{sec:election}

\paragraph{Overview.}

In a leader configuration, all non-leader robots lie on one of the open half-planes delimited by the horizontal line passing through the leader $r_l$. This leads to an agreement on the direction of $Y$ axis, as we can set the empty open half-plane to correspond to the negative direction of $Y$-axis or `down'. Hence, we have an agreement on `up', `down', `left' and `right'. However, we still do not have a common notion of unit distance. Notice that all the non-leader robots are in $\mathcal{H}_R^O(r_l)$. Now, we shall ask one of the  non-leader robots to move to the line $\mathcal{L}_{V}(r_l)$. The distance of this robot from the leader $r_l$ will be set as the unit distance. Now it only remains to fix the origin. We shall set the origin at a point such that the coordinates of $r_l$ are $(-1,-1)$. Now that we have a common fixed coordinate system, we embed the pattern on the plane. Let us call these points the \emph{target points}. We have to bring the robots to these points without causing any collisions. Take the projection of the target points on $\mathcal{L}_{H}(r_l)$. Our plan is to first sequentially bring the robots to these points and then sequentially move them to the corresponding target points. The problem is that there might be multiple target points on the same vertical line and thus we may have one projected point corresponding to multiple target points. Therefore, to accommodate multiple robots, we shall assign pockets of space of a fixed size around each projected point.

\paragraph{Formal Description.}

A pseudocode description of this stage is given in Algorithm \ref{main_algorithm: stage 2}. At the start of this stage, we have a leader configuration. Let $r_l$ be the leader having its light set to \texttt{leader}. Any robot that can see $r_l$ starts executing Algorithm \ref{main_algorithm: stage 2}.
$\\ \\$
  \begin{algorithm}[H]
    \setstretch{.1}
    \SetKwInOut{Input}{Input}
    \SetKwInOut{Output}{Output}
    \SetKwProg{Fn}{Function}{}{}
    \SetKwProg{Pr}{Procedure}{}{}

  \Input{The configuration of robots visible to me.}
    
    \Pr{\textsc{PatternFormationFromLeaderConfiguration()}}{

    $r \leftarrow$ myself
    
    $r_l \leftarrow$ the robot with light \texttt{leader}
    
    \uIf{$r.light =$ \texttt{off}}{
    
      \uIf{$r_l \in \mathcal{H}_B^O(r) \cap \mathcal{H}_L^O(r)$}{
	
      \uIf{there is a robot with light \texttt{done}}{$r.light \leftarrow$ \texttt{done} \label{code: unit off to done} }
      
                
      \ElseIf{there are no robots in $\mathcal{H}_B^O(r) \cap \mathcal{H}_U^O(r_l) \cap \mathcal{H}_R^O(r_l)$ and $r$ is the leftmost robot on $\mathcal{L}_H(r) \cap \mathcal{H}_R^O(r_l)$ \label{code: condition stage2}}{
      
	\uIf{there is no robot on $\mathcal{L}_V(r_l)$ \label{code: unit condition}}{
	
	    $p \leftarrow$ the point of intersection of the lines $\mathcal{L}_H(r)$ and $\mathcal{L}_V(r_l)$
	
	    Move to $p$
	}
	\Else{
	
	\If{there are $k \geq 0$ robots on $\mathcal{L}_H(r_l)$ with light \texttt{off}}{Move to $(\Psi(k+2), -1)$}}

%
%
%
	  
	}
    
      }
      
      \uElseIf{$r_l \in \mathcal{L}_H(r)$, there is a robot on $\mathcal{L}_V(r_l)$, and there are no robots with light \texttt{off} in  $\mathcal{H}_U^O(r) \cap \mathcal{H}_R^O(r_l)$ \label{code: final condition1}}{
      
      \If{$r.pos =$ $(\Psi(i), -1)$ and \textsc{PartialFormation}$(i)$ = True \label{code: final condition1b}}{
      
      $r.light \leftarrow$ \texttt{done}
      
      Move to $\mathbb{P}[i]$}

      }

      \ElseIf{$r_l \in \mathcal{L}_V(r)$}{

    \If{there is no other robot with light \texttt{off} in $\mathcal{H}_U^C(r)$ and there are no robots in $\mathcal{H}_B^O(r)$ except $r_l$ \label{code: final condition3}}{
    
    
    Move to $\mathbb{P}[1]$}
    
    }

    }

    \ElseIf{$r.light =$ \texttt{leader}}{
    
      \If{there are no robots with light \texttt{off}}{
    
    $r.light \leftarrow$ \texttt{done}
    
    Move to $\mathbb{P}[0]$}
    
      }
    }

    \Pr{\textsc{PartialFormation}$(i)$}{
    
    \uIf{$i = 2$}{return True}
    
     \Else{
    
      \uIf{there is a robot with light \texttt{done} at $s_{i-1}$ \label{code: final condition2}}{return True}
      
      \Else{return False}

    }

    }

\caption{Pattern Formation from Leader Configuration}
    \label{main_algorithm: stage 2} 
\end{algorithm}
$\\ \\$
Initially all robots other than $r_l$ are inside the region $\mathcal{H}_U^O(r_l) \cap \mathcal{H}_R^O(r_l)$. Order the robots in $\mathcal{R} \setminus \{r_l\}$ from bottom to up, and from left to right in case multiple robots on the same horizontal line (see Fig. \ref{fig: stage2_robot_order}). Let us denote these robots by $r_1, r_2, \dots, r_{n-1}$ in this order . The robots will move sequentially according to this order. The robot $r_1$ will move to $\mathcal{L}_V(r_l)$, while the robots $r_2, \ldots, r_{n-1}$ will move to $\mathcal{L}_H(r_l)$. Each $r_i,~ i = 1, \ldots, n-1$ will decide to move when it finds that 1) there are no robots in $\mathcal{H}_B^O(r_i) \cap \mathcal{H}_U^O(r_l) \cap \mathcal{H}_R^O(r_l)$, and 2) $r_i$ is the leftmost robot on $\mathcal{L}_H(r_i) \cap \mathcal{H}_R^O(r_l)$. These conditions will ensure that the robots execute their moves sequentially according to ordering in Fig. \ref{fig: stage2_robot_order}. First $r_1$ will move horizontally to $\mathcal{L}_V(r_l)$. After $r_1$ reaches $\mathcal{L}_V(r_l)$, we shall denote it by $r_u$. Any robot that can see both $r_l$ and $r_u$, sets its coordinate system as shown in Fig. \ref{fig: stage2_robot_agree} so that $r_l$ is at $(-1,-1)$ and $r_u$ is at $(-1,0)$. We shall refer to this coordinate system as the \emph{agreed coordinate system}. Now embed the pattern $\mathbb{P}$ in the plane with respect to this coordinate system. Let $s_i$ denote the point in plane corresponding to $\mathbb{P}[i]$. Let us call $s_0, s_1, \ldots, s_{n-1}$ the target points. Recall that each $\mathbb{P}[i]$ is from $\mathbb{R}_{\geq 0}^2$. Also, $\mathbb{P}$ is sorted in lexicographic order. Therefore, $\{s_0, s_1, \ldots, s_{n-1}\} \subset \mathcal{H}_U^O(r_l) \cap \mathcal{H}_R^O(r_l)$ and are ordered from left to right, and from bottom to up in case there are multiple robots on the same vertical line (see Fig. \ref{fig: stage2_target0}). For each $s_i$, $i = 0, 1, \ldots, n-1$, we define a point $p_i = (\Psi(i), -1)$ on $\mathcal{L}_H(r_l)$ in the following way. Let $s_i = (x_i, y_i)$. Let $\mathcal{L}_i$ be the vertical line passing through $s_i$, i.e., the line $X = x_i$. Let $s_i$ be the $k$th target point on $\mathcal{L}_i$ from bottom, i.e., the $k$th item in $\mathbb{P}$ with the first coordinate equal to $x_i$. Then $\Psi(i) = x_i + \frac{(k-1)}{2(m_i-1)}\epsilon$, where $m_i$ is equal to the total number of target points on $\mathcal{L}_i$, and $\epsilon$ is equal to the smallest horizontal distance between any two target points not on the same vertical line, or equal to 1 if all target points are on the same vertical line (see Fig. \ref{fig: stage2_target}). Notice that the numbers $\Psi(0), \Psi(1), \ldots, \Psi(n-1)$ are calculated from $\mathbb{P}$ alone.  The robots  $r_2, \dots, r_{n-1}$ will sequentially move to $p_2, \dots, p_{n-1}$ respectively (see Fig. \ref{fig: stage2_robot_move}). When an $r_i$ ($i = 2, \ldots, n-1$) has to move, it can see $i-2$ robots with light \texttt{off} on $\mathcal{L}_H(r_l)$ and thus, decides to move to $p_i$. Clearly, $r_i$ can see both $r_l$ and $r_u$ and therefore, can determine the point $p_i = (\Psi(i), -1)$.
 
 Once these moves are completed, we have $r_2, \dots, r_{n-1}$ on $\mathcal{L}_H(r_l)$ at $p_2, \dots, p_{n-1}$ respectively. Now $r_2, \dots, r_{n-1}$ will sequentially move to $s_2, \dots, s_{n-1}$ respectively (see Fig. \ref{fig: stage2_final}). To make the robots move in this order, we ask a robot to move when it can see $r_l$ on the same horizontal line and can also see $r_u$. But now the robot has to decide to which target point it is supposed to go. Since it can see both $r_l$ and $r_u$, it can compute its coordinates in the agreed coordinate system. If it finds its coordinates to be equal to $(\Psi(i), -1)$, then it decides that it has to move to $s_i$. But if $i > 2$, it will wait till $r_2, \ldots, r_{i-1}$ complete their moves. Hence, it will move only when it sees a robot with light \texttt{done} at $s_{i-1}$. When all these conditions are satisfied, it will change its light to \texttt{done} and move to its destination.  

 Now $r_u$ and $r_l$ have to move to $s_1$ and $s_0$ respectively. $r_u$ moves when it finds that there is no other robot with light \texttt{off} in $\mathcal{H}_U^C(r_u)$ and there are no robots in $\mathcal{H}_B^O(r_u)$ except $r_l$. However, in this case, it will change its light to \texttt{done} after the move (see line \ref{code: unit off to done} in Algorithm \ref{main_algorithm: stage 2}). When $r_l$ sees no robots with light \texttt{off}, it decides to move to $s_0$. However, since there is no longer a robot on $\mathcal{L}_V(r_l)$ , it can not ascertain the unit distance of the agreed coordinate system and therefore, can not determine the point $s_0$ in the plane. Hence, it has to locate the point $s_0$ by some other means. Let $r$ be the leftmost (and bottommost in case of tie) robot that $r_l$ can see. Clearly, $r$ is at $s_1$. Hence, $r_l$ knows the point in the plane with coordinates $\mathbb{P}[1]$, and also point with coordinates $(-1,-1)$ (its own position). From these two points, it can easily determine the point $s_0$, i.e., the point in the plane with coordinates $\mathbb{P}[0]$ in the agreed coordinate system. Therefore, it will change its light to \texttt{done} and move to $s_0$.


\section{Correctness and the Main Results}\label{sec: main result and proof}

\subsection{Correctness of Algorithm \ref{main_algorithm}}

\begin{lemma}\label{lemma eligible to leader}
If at some time $t$
 \begin{enumerate}
  \item a robot $r$ with light set to \texttt{off} finds itself eligible to become leader, and
  
  \item for all $r' \in$ $\mathcal{R} \setminus \{r\}$, $r'$ is stable and $r'.light =$ \texttt{off},
  \end{enumerate}
then $\exists~\ t' > t$ such that $\mathbb{C}(t')$ is a leader configuration with leader $r$.
\end{lemma}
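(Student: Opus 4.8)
The plan is to follow the robot $r$ through its execution of \textsc{BecomeLeader()} and establish two things in parallel: that $r$ eventually turns \texttt{leader} with every other robot lying in $\mathcal{H}_R^O(r) \cap \mathcal{H}_U^O(r)$, and that meanwhile no other robot ever moves or recolors. The backbone is an invariant maintained for as long as $r$ has not yet set its light to \texttt{leader}: namely that $r$ is the unique leftmost robot and every other robot is stable with light \texttt{off}.

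First I would prove the invariant. The geometric half is immediate: \textsc{BecomeLeader()} moves $r$ only vertically along $\mathcal{L}_V(r)$, so $r$'s $x$-coordinate never changes, and since $r$ is the unique leftmost robot at time $t$ (by eligibility), it stays so; this simultaneously shows $\mathcal{L}_V(r)$ holds no other robot, so the descent is collision-free. For the ``frozen'' half I would check what any other robot $r'$ does when activated: as no robot carries \texttt{leader} and all lights are \texttt{off}, $r'$ enters \textsc{Phase1()} with $r'.light = \texttt{off}$; it is not eligible because $r \in \mathcal{H}_L^C(r')$, and \textsc{LeftMostTerminal()} returns False, since $\mathcal{H}_L^O(r')$ is nonempty and even when $r$ is the only robot to its left we have $r.light = \texttt{off} \neq \texttt{candidate}$. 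Hence $r'$ neither moves nor recolors, and by induction the invariant persists throughout the process.

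Next I would prove termination of \textsc{BecomeLeader()}, which is the delicate part because of obstructed visibility: $r$ computes its move from the bottommost \emph{visible} robot rather than the true bottommost one. The key observation is that whenever $\mathcal{H}_B^C(r)$ contains a robot it contains a \emph{visible} one---taking the robot of $\mathcal{H}_B^C(r)$ closest to $r$, any obstructing robot would lie strictly between them, hence be both closer to $r$ and (lying on the segment) still inside $\mathcal{H}_B^C(r)$, contradicting minimality. This yields two things at once: $r$ never elects itself prematurely (if its snapshot shows $\mathcal{H}_B^C(r)$ empty, it genuinely is), and in the branch where $r$ moves, the bottommost visible robot $r'$ truly lies in $\mathcal{H}_B^C(r)$, so after $r$ descends past it by $d + 1_r$ the robot $r'$ sits strictly above the new $\mathcal{L}_H(r)$. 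Because $r$ only ever moves down, no robot re-enters $\mathcal{H}_B^C(r)$; thus the integer $|\mathcal{H}_B^C(r)\setminus\{r\}|$ strictly decreases at every activation that performs a move. Since the scheduler activates $r$ infinitely often and this count is bounded below by $0$, after finitely many cycles $\mathcal{H}_B^C(r)$ is empty and $r$ sets its light to \texttt{leader}.

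Finally I would verify that the instant $t'$ at which $r$ completes the recoloring gives a leader configuration. Setting $r_l = r$: it has light \texttt{leader}; every other robot still has light \texttt{off} by the invariant; each other robot lies in $\mathcal{H}_R^O(r_l)$ because $r_l$ stayed the unique leftmost robot, and in $\mathcal{H}_U^O(r_l)$ because $\mathcal{H}_B^C(r_l)$ is now genuinely empty; and the configuration is stable, as the other robots carry no pending move and $r_l$'s last action was a pure color change with no accompanying motion. I expect the main obstacle to be precisely this termination-and-soundness argument under obstructed visibility, i.e.\ certifying that the ``bottommost visible robot'' is enough both to rule out a premature election and to force strict descent of the chosen potential.
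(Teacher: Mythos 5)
Your proof is correct and follows essentially the same route as the paper's (which simply asserts that $r$ descends via \textsc{BecomeLeader()} until $\mathcal{H}_B^C(r)$ is empty and then turns \texttt{leader}, noting only that the persistent local coordinate system prevents a livelock). You additionally make explicit the obstructed-visibility soundness (the closest robot in a convex half-plane is always visible) and the strictly decreasing potential $|\mathcal{H}_B^C(r)\setminus\{r\}|$, details the paper dismisses as "clear."
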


\begin{proof}
 Clearly, $r$ will start executing \textsc{BecomeLeader()}. The robot $r$ will move downwards according to its local coordinate system if it finds any robot in $\mathcal{H}_B^C(r)$ other than itself. Clearly, after finitely many moves, it will find that $\mathcal{H}_B^C(r)$ has no robots other than itself. Notice that no livelock is created as its local coordinate system (and hence, its perception of up and down) is unchanged in each LCM cycle. So, $r$ will change its light to \texttt{leader} at some time $t' > t$ and $\mathbb{C}(t')$ will be a leader configuration. 
\end{proof}

  \begin{figure}[h!]
\centering
\subcaptionbox[Short Subcaption]{
       \label{fig: stage2_robot_order}
}
[
    0.45\textwidth 
]
{
    \fontsize{8pt}{8pt}\selectfont
    \def\svgwidth{0.45\textwidth}
    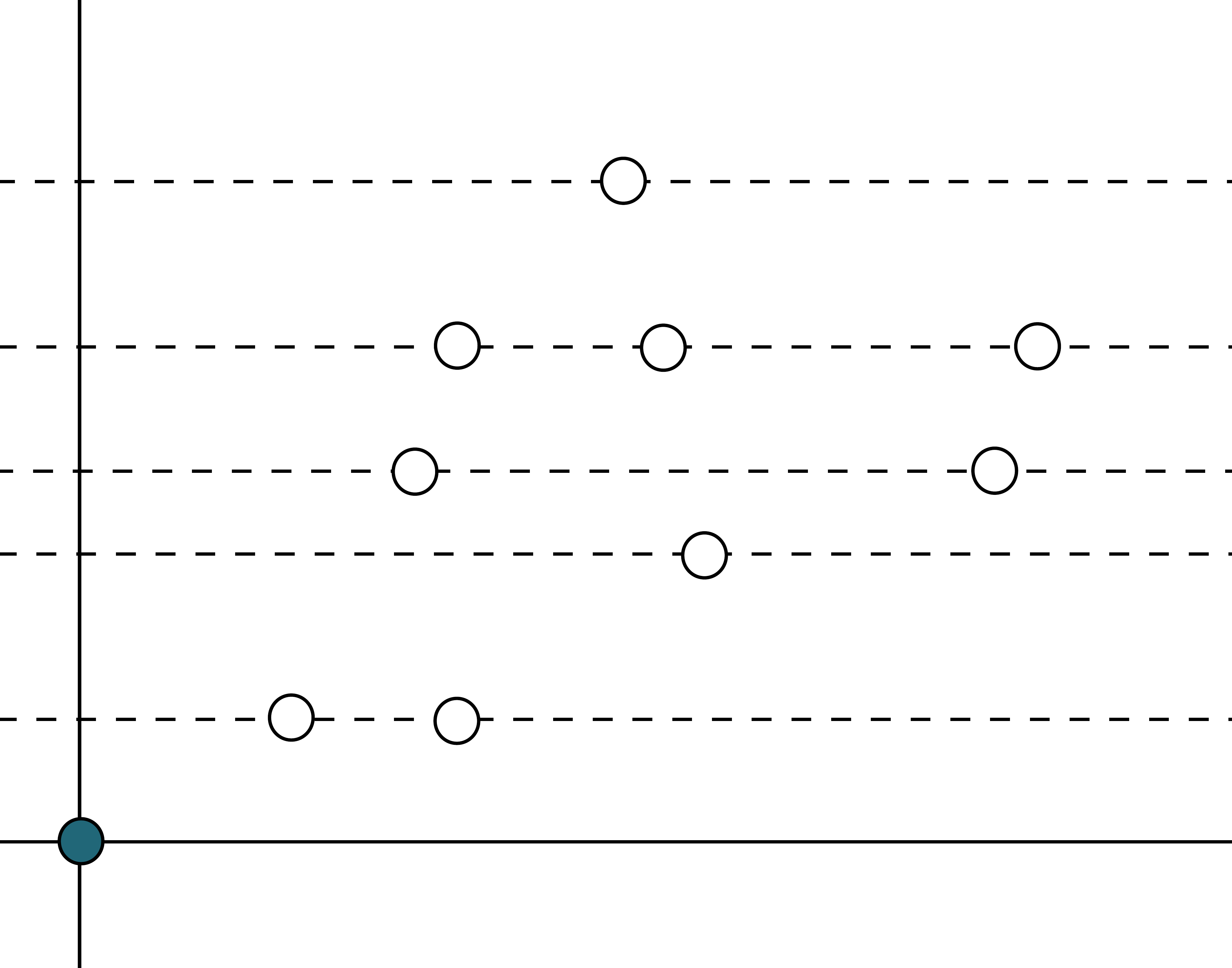
}
\hspace*{1cm}
\subcaptionbox[Short Subcaption]{
     \label{fig: stage2_robot_agree}
}
[
    0.45\textwidth 
]
{
    \fontsize{8pt}{8pt}\selectfont
    \def\svgwidth{0.45\textwidth}
    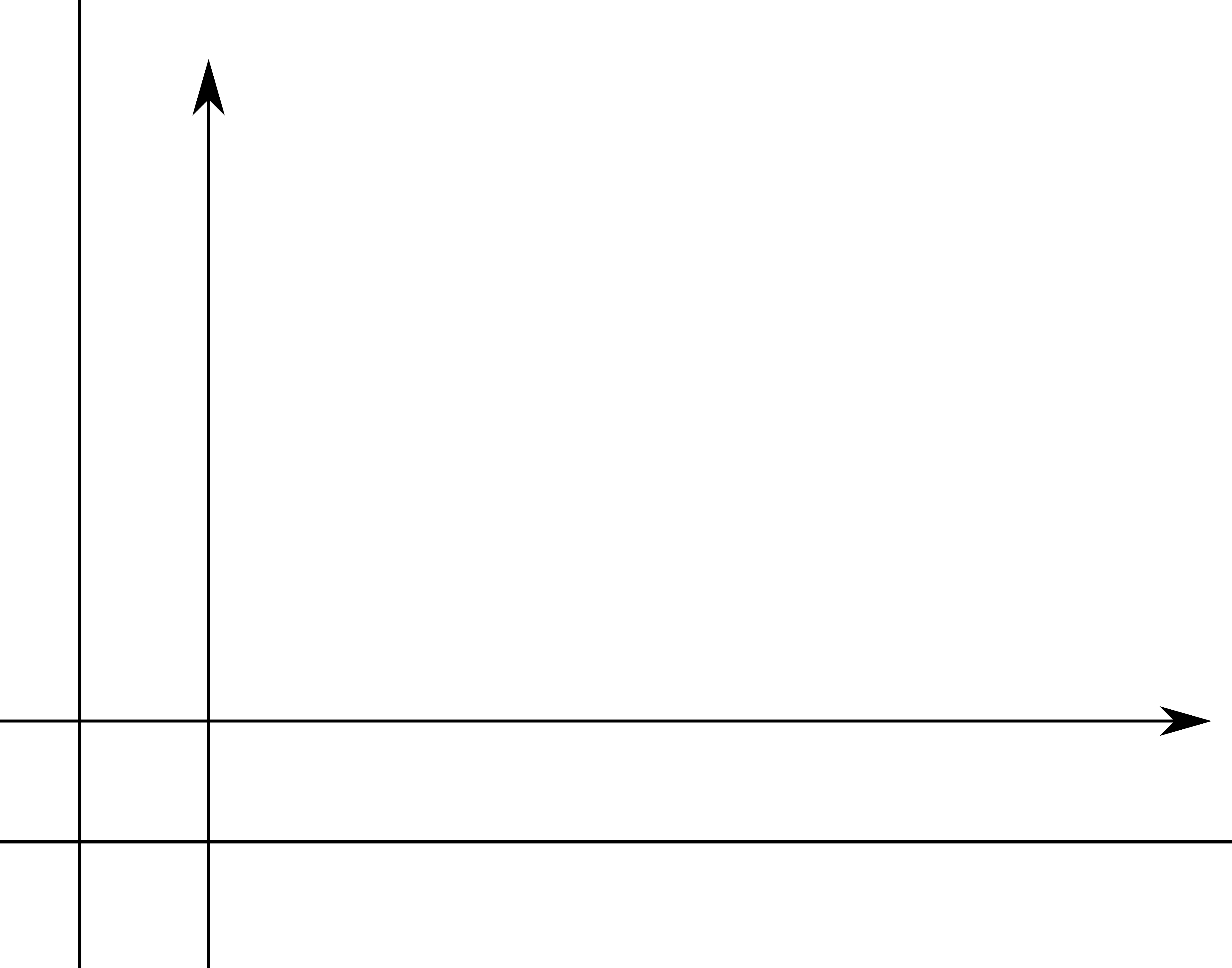
}
\hfill
\subcaptionbox[Short Subcaption]{
     \label{fig: stage2_target0}
}
[
    0.45\textwidth 
]
{
    \fontsize{8pt}{8pt}\selectfont
    \def\svgwidth{0.45\textwidth}
    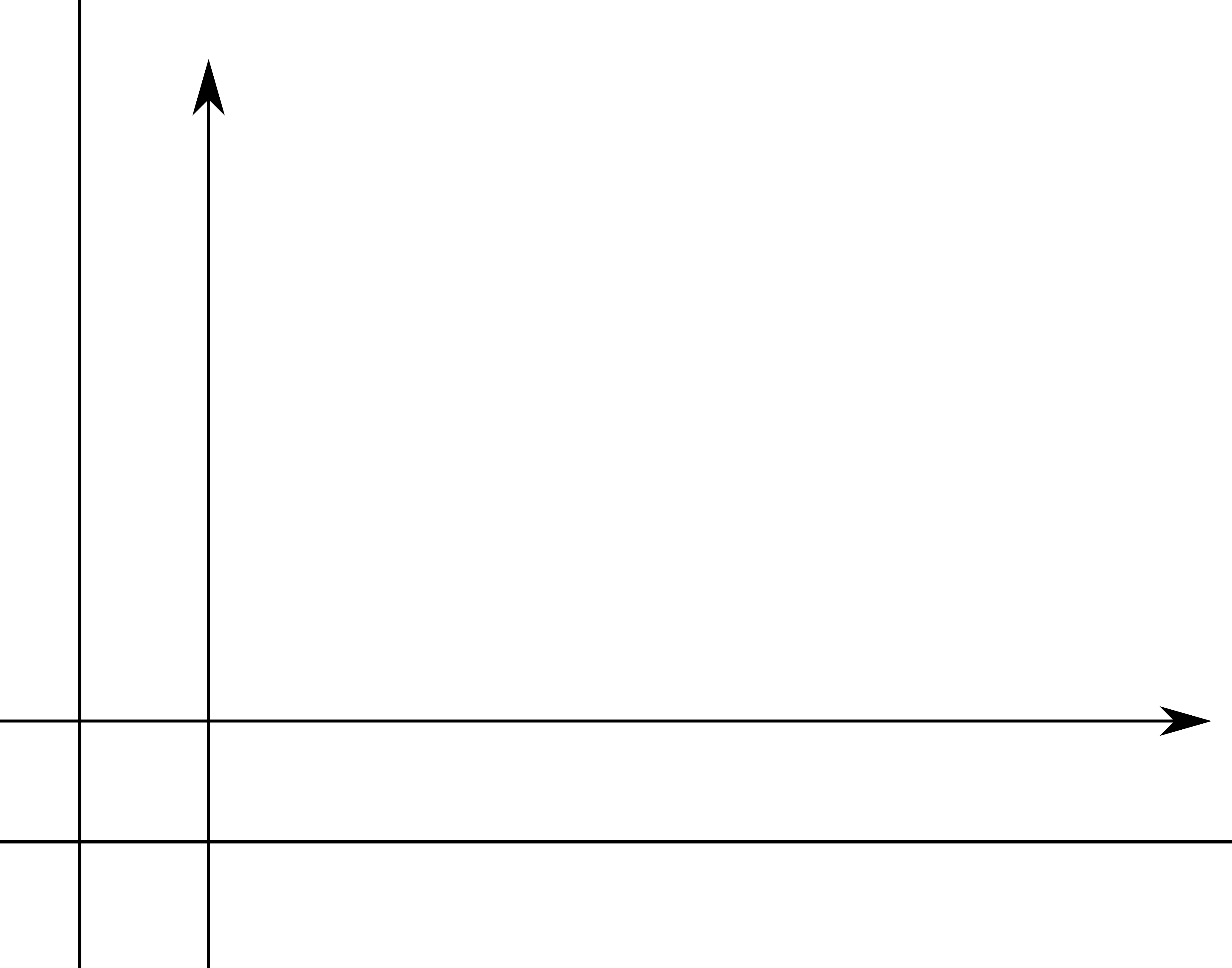
}
\hspace*{1cm}
\subcaptionbox[Short Subcaption]{
     \label{fig: stage2_target}
}
[
    0.45\textwidth 
]
{
    \fontsize{8pt}{8pt}\selectfont
    \def\svgwidth{0.45\textwidth}
    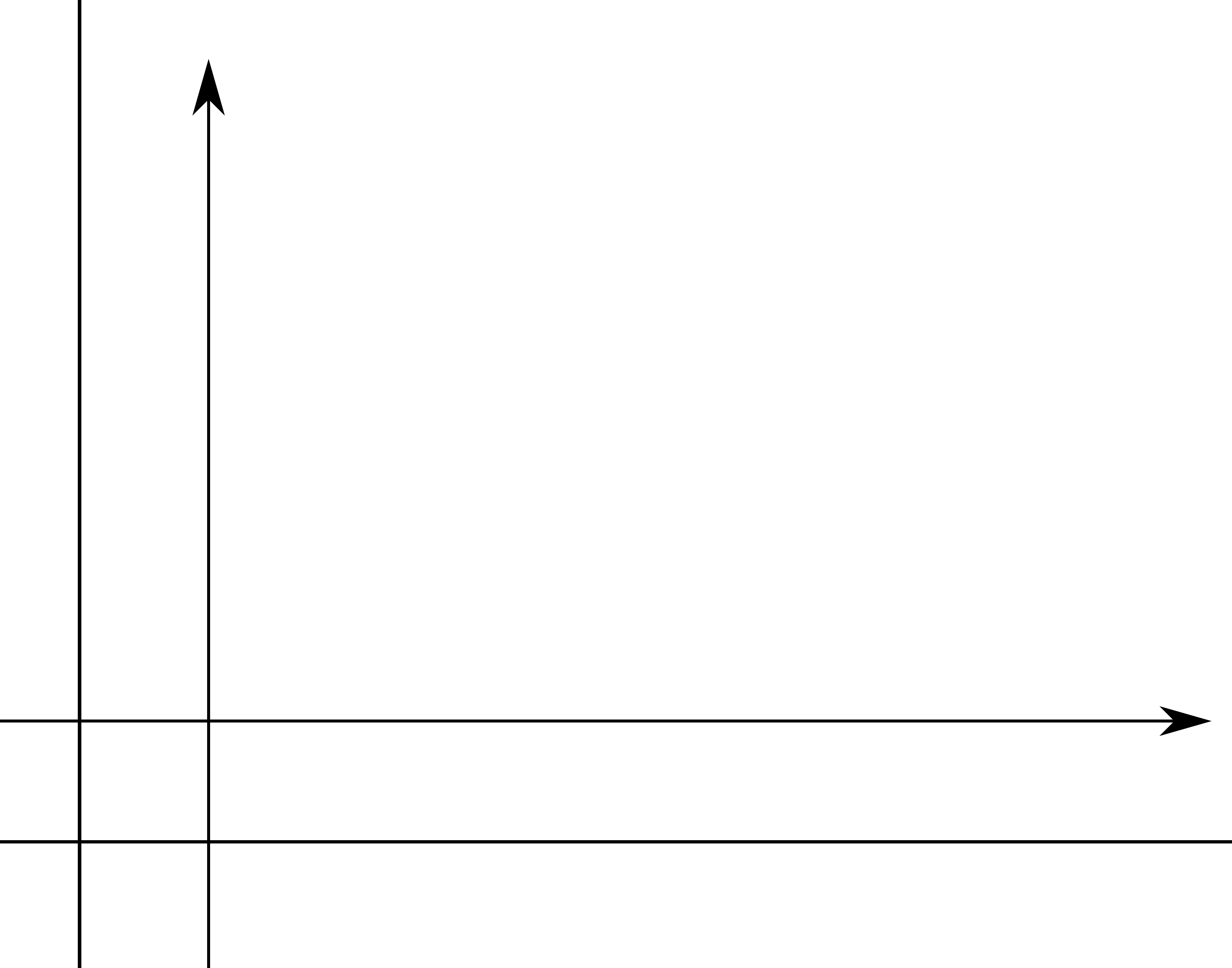
}
\hfill
\subcaptionbox[Short Subcaption]{
     \label{fig: stage2_robot_move}
}
[
    0.45\textwidth 
]
{
    \fontsize{8pt}{8pt}\selectfont
    \def\svgwidth{0.45\textwidth}
    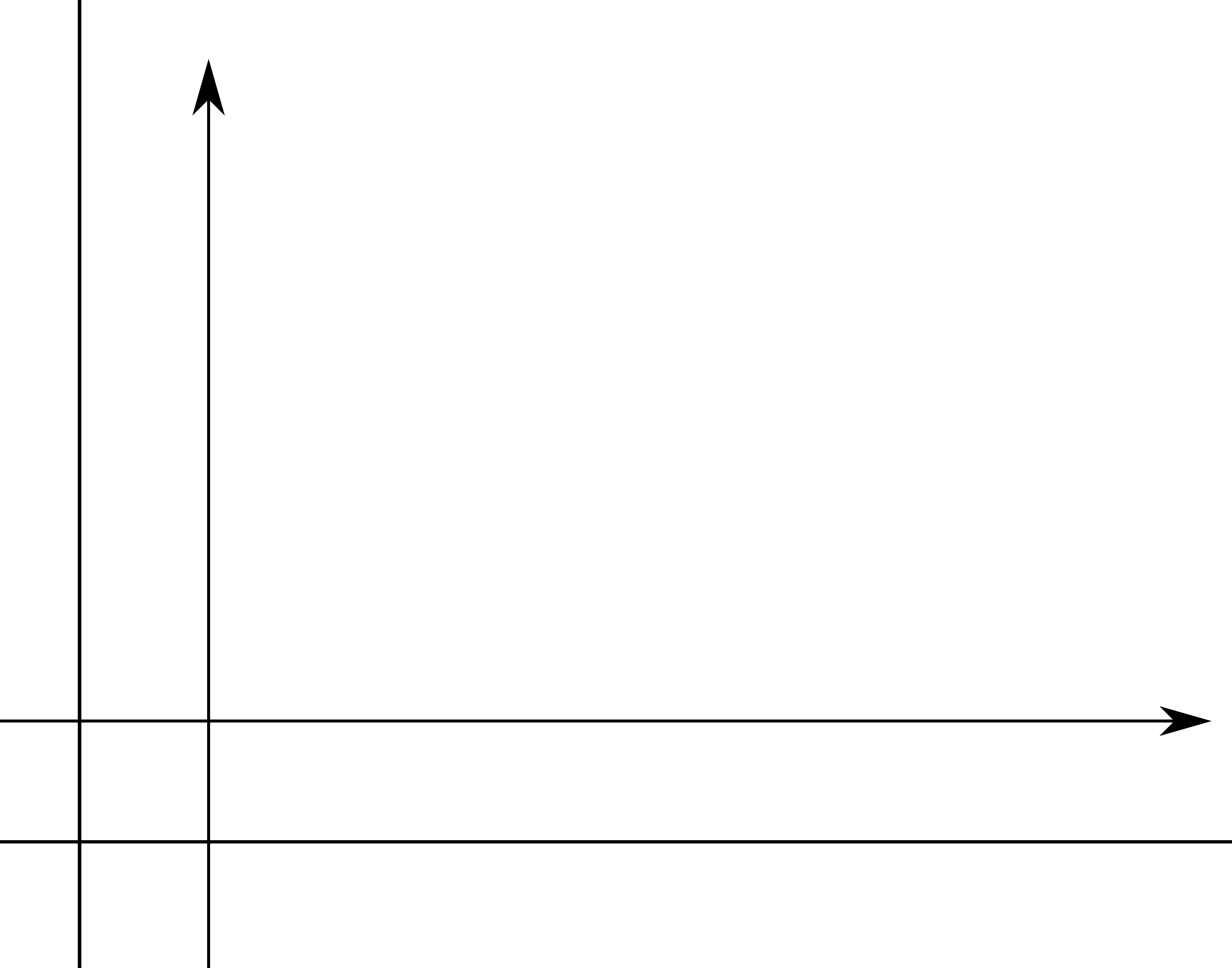
}
\hspace*{1cm}
\subcaptionbox[Short Subcaption]{
     \label{fig: stage2_final}
}
[
    0.45\textwidth 
]
{
    \fontsize{8pt}{8pt}\selectfont
    \def\svgwidth{0.45\textwidth}
    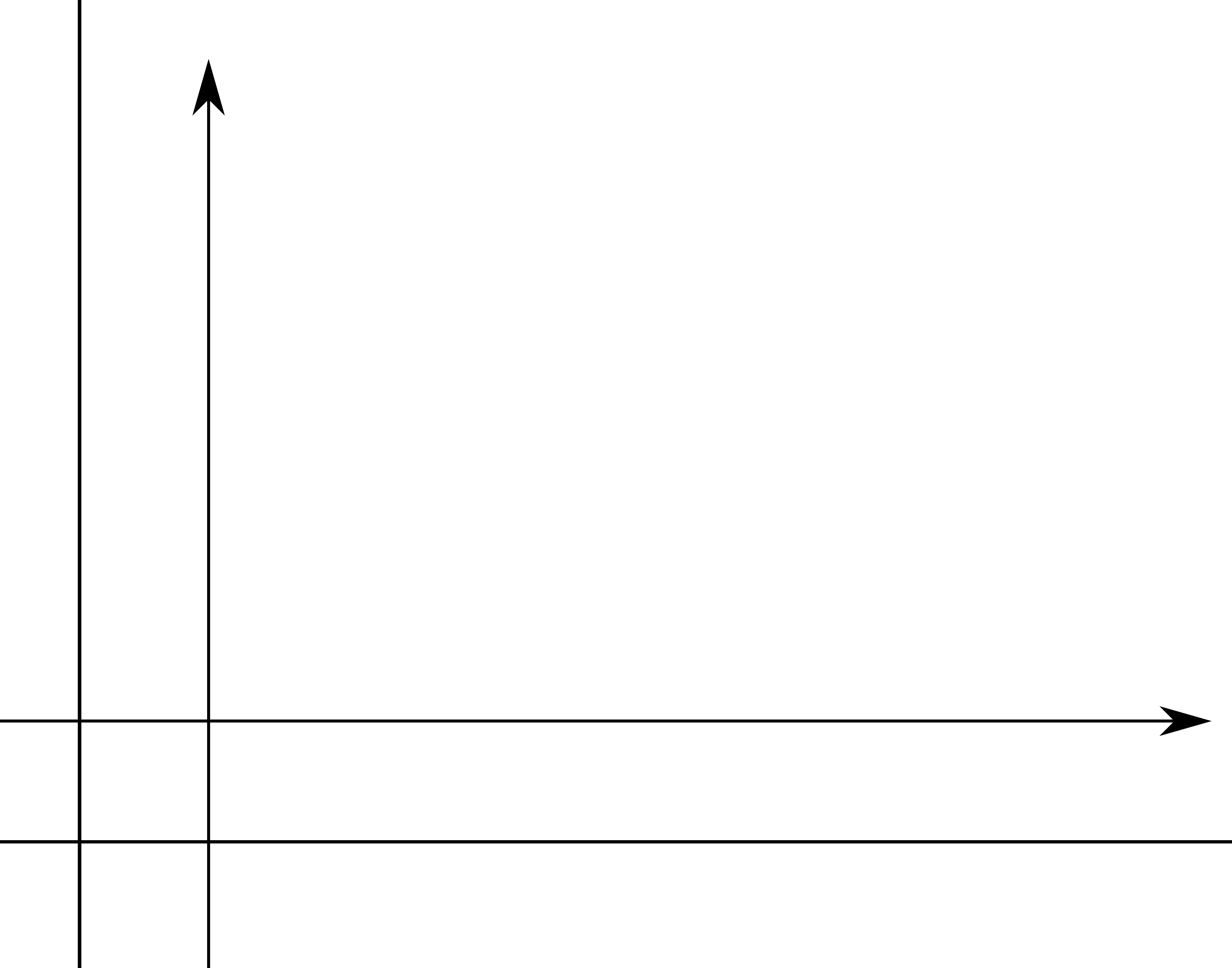
}
\caption[Short Caption]{a) A leader configuration with leader $r_l$ with its light set to \texttt{leader}. b) The configuration with $r_l$ and $r_u$. A  coordinate system is set so that the coordinates of $r_l$ and $r_u$ are $(-1,-1)$ and $(-1,0)$ respectively. c) The given pattern $\mathbb{P}$ is embedded in the plane with respect to the agreed coordinate system. d) $p_i = (\Psi(i), -1)$ corresponding to each target point $s_i$. e) Movement of the robots to $p_2, \ldots, p_{n-1}$. f) Movement of the robots to $s_0, \ldots, s_{n-1}$. }
\label{}
\end{figure}

\begin{lemma} \label{lemma p1 easy}
 If the initial configuration $\mathbb{C}(0)$ has a unique leftmost robot, then $\exists~\ T_1 > 0$ such that $\mathbb{C}(T_1)$ is a leader configuration.
\end{lemma}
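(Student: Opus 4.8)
The plan is to reduce the statement to Lemma~\ref{lemma eligible to leader}. Let $r$ be the unique leftmost robot of $\mathbb{C}(0)$; this is the robot I intend to turn into the leader. Since initially every light is \texttt{off} and no robot carries light \texttt{leader}, \texttt{candidate} (on a shared vertical line) or \texttt{symmetry}, Algorithm~\ref{main_algorithm} routes every active robot into \textsc{Phase1()}, so it suffices to track what each robot does inside Algorithm~\ref{algo:phase1}.

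The heart of the argument is the invariant that, until some robot sets its light to \texttt{leader}, no robot other than $r$ ever moves or changes its light. I would prove this by inspecting the two relevant branches of \textsc{Phase1()} for an arbitrary robot $q\neq r$ with $q.light=\texttt{off}$. Because $r$ lies strictly to the left of $q$, the half-plane $\mathcal{H}_L^O(q)$ always contains $r$, so $q$ is never leftmost; hence the ``eligible to become leader'' test, which needs $\mathcal{H}_L^C(q)$ empty of other robots, fails, and in \textsc{LeftMostTerminal()} the first case (no robot in $\mathcal{H}_L^O(q)$) fails as well. The second case of \textsc{LeftMostTerminal()} requires a robot with light \texttt{candidate} in $\mathcal{H}_L^O(q)$, but as long as nobody other than $r$ has acted, every light is still \texttt{off}; so this case fails too and $q$ does nothing. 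The one point needing care is that $r$'s own activity must not break these hypotheses: \textsc{BecomeLeader()} moves $r$ only vertically along $\mathcal{L}_V(r)$ and keeps its light \texttt{off} until the final step, so throughout this phase $r$ remains the unique leftmost robot and is never mistaken for a \texttt{candidate}. This preserves the invariant even while $r$ is mid-move and visible to others.

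With the invariant in hand I would finish as follows. By the fairness of the \textsc{ASync} scheduler, $r$ is activated at some first time $t_0$. By the invariant, at $t_0$ every robot other than $r$ is still stable with light \texttt{off}, and $r$ is still the unique leftmost robot with light \texttt{off}; hence $r$ finds itself eligible to become leader, and both hypotheses of Lemma~\ref{lemma eligible to leader} hold at $t_0$. Lemma~\ref{lemma eligible to leader} then yields a time $T_1>t_0$ at which $\mathbb{C}(T_1)$ is a leader configuration, which is what we want.

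I expect the main obstacle to be the asynchronous bookkeeping in the invariant step rather than any geometric difficulty: one must argue that an arbitrary interleaving of the other robots' Look--Compute--Move cycles with $r$'s descent can never let a non-leftmost robot act, and in particular that $r$'s intermediate positions, possibly observed through obsolete snapshots, still present every other robot with the ``do nothing'' branch of \textsc{Phase1()}. Once the invariant is stated and verified correctly, the appeal to Lemma~\ref{lemma eligible to leader} is immediate.
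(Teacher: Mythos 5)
Your proposal is correct and follows the same route as the paper, which simply states that the lemma follows from Lemma~\ref{lemma eligible to leader}; you have additionally spelled out the verification that the hypotheses of that lemma hold (no other robot acts before $r$ is activated and finds itself eligible), which the paper leaves implicit.
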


\begin{proof}
 Follows from Lemma \ref{lemma eligible to leader}. 
\end{proof}

\begin{lemma}\label{lemma p1 hard}
 If the initial configuration $\mathbb{C}(0)$ has $k$ leftmost robots where $1 < k < n$, then $\exists~\ T_1 > 0$ such that $\mathbb{C}(T_1)$ is  a candidate configuration.
\end{lemma}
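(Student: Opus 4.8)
The plan is to track the two terminal robots of the leftmost vertical line and show that the only dynamics produced by \textsc{Phase1()} consist of these two robots sliding left by a common distance until they share a new leftmost vertical line with their lights set to \texttt{candidate}. Let $\mathcal{L}$ be the leftmost vertical line in $\mathbb{C}(0)$; by hypothesis it carries exactly $k$ robots with $2 \le k < n$, and since $k < n$ there is a second vertical line $\mathcal{L}'$ carrying a robot, at horizontal distance $d > 0$ to the right of $\mathcal{L}$. Let $r_1$ and $r_2$ be the two terminal robots on $\mathcal{L}$. First I would verify that in $\mathbb{C}(0)$ the function \textsc{LeftMostTerminal()} returns True for exactly $r_1$ and $r_2$: any interior robot of $\mathcal{L}$ fails the terminality clause of the first case, and every robot strictly right of $\mathcal{L}$ fails the leftmost clause while also failing the second case, since there is as yet no \texttt{candidate} robot in its left half-plane. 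Because $\mathcal{H}_{R}^{O}(r_i)$ is nonempty (as $k<n$), \textsc{ComputeDestination()} enters its \emph{Else} branch and returns $(-d,0)$ for both $r_1$ and $r_2$; hence each, when activated, turns \texttt{terminal} and, by rigidity, moves \emph{exactly} onto the line $\mathcal{L}_{\mathrm{new}}$ lying at distance $d$ to the left of $\mathcal{L}$, eventually turning \texttt{candidate}.

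The heart of the argument is the asynchronous bookkeeping showing that this is the only thing that happens. I would first handle the interleaving in which one terminal robot, say $r_1$, has already reached $\mathcal{L}_{\mathrm{new}}$ and become \texttt{candidate} while $r_2$ is still \texttt{off} on $\mathcal{L}$. Here the second case of \textsc{LeftMostTerminal()} is precisely what keeps $r_2$ live: $r_1$ is the unique robot in $\mathcal{H}_{L}^{O}(r_2)$ and is \texttt{candidate}, and since $r_2$ is terminal on $\mathcal{L}$, the member of $\{\mathcal{H}_{B}^{O}(r_2),\mathcal{H}_{U}^{O}(r_2)\}$ not containing $r_1$ meets $\mathcal{L}$ in no robot (because $r_1$'s vertical position relative to $r_2$ is unchanged by a horizontal move, so $r_1$ stays on the interior side). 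Thus \textsc{LeftMostTerminal()} still returns True for $r_2$, and $r_2$ still measures the same gap $d$ to $\mathcal{L}'$, so it too lands on $\mathcal{L}_{\mathrm{new}}$. I must simultaneously check that $r_1$ does not revert: on $\mathcal{L}_{\mathrm{new}}$ it has no robot in $\mathcal{H}_{L}^{O}(r_1)$ and no \texttt{candidate} robot in $\mathcal{H}_{R}^{O}(r_1)$, so neither reversion clause of the \texttt{candidate} case fires. The same computation shows that once both $r_1,r_2$ are \texttt{candidate} on $\mathcal{L}_{\mathrm{new}}$ the configuration is stable and satisfies all four defining properties of a candidate configuration, which is the desired $\mathbb{C}(T_1)$.

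The step I expect to be the main obstacle, and the reason \textsc{ComputeDestination()} returns precisely $-d$ rather than an arbitrary leftward displacement, is excluding a spurious third mover: I must show that no robot $r_3 \notin \{r_1,r_2\}$ ever passes both clauses of the second case of \textsc{LeftMostTerminal()} during the process (the situation sketched in Fig.~\ref{fig: compute destination 2}). The key is that committing both terminals to the \emph{same} line $\mathcal{L}_{\mathrm{new}}$ never produces a state with $r_1$ and $r_2$ on distinct vertical lines, which is exactly the configuration that lets a right-hand robot perceive a single \texttt{candidate} to its left. I would argue that during the one-candidate interval any interior robot of $\mathcal{L}$ fails the half-plane clause, since robots (including $r_2$ and the remaining interiors) still lie on both sides of it along $\mathcal{L}$, and any robot on $\mathcal{L}'$ or further right either sees more than one robot in its left half-plane or, if terminal on its own line, can be ruled out by the half-plane clause together with the equal-displacement invariant. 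Combining this exclusion with the progress argument of the previous paragraph --- each of $r_1,r_2$ is activated infinitely often and performs only the single \texttt{off}$\to$\texttt{terminal}$\to$\texttt{candidate} transition --- yields a finite time $T_1 > 0$ at which $\mathbb{C}(T_1)$ is a candidate configuration, as claimed.
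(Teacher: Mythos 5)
Your overall strategy is the same as the paper's: both terminal robots of $\mathcal{L}$ move left by the common distance $d$ onto one new line, and the work lies in showing that no third robot ever satisfies \textsc{LeftMostTerminal()} in the meantime, with the equal-displacement property doing the real work. The paper organizes this as a case analysis on when $r_2$ takes its first snapshot relative to $r_1$'s departure and light change; your invariant-tracking through the ``one-candidate interval'' covers the same ground. Most of the proposal is sound: the activation of exactly $r_1,r_2$ in $\mathbb{C}(0)$, the role of the second clause of \textsc{LeftMostTerminal()} in keeping $r_2$ live, the non-reversion of $r_1$, and the stability of the resulting candidate configuration are all argued correctly.

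The one place you go wrong is the sentence you call ``the key'': that sending both terminals to the same line $\mathcal{L}_{\mathrm{new}}$ ``never produces a state with $r_1$ and $r_2$ on distinct vertical lines.'' That is false, and your own preceding paragraph exhibits the counterexample ($r_1$ already \texttt{candidate} on $\mathcal{L}_{\mathrm{new}}$ while $r_2$ still sits on $\mathcal{L}$). The correct statement, which is what the paper proves, concerns light colors rather than positions. Suppose a robot $r_3$ on $\mathcal{L}'$ (in the case $\mathcal{R}'=\emptyset$, the only robots in $\mathcal{H}_{L}^{O}(r_3)$ are $r_1$ and $r_2$) sees exactly one of $r_1,r_2$ because that one obstructs the other. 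The visible robot is the nearer one on the ray from $r_3$, hence has strictly larger $x$-coordinate than the obstructed one; since neither robot ever moves left of $\mathcal{L}_{\mathrm{new}}$, the visible robot lies strictly to the right of $\mathcal{L}_{\mathrm{new}}$, so it has not yet completed its move and its light is still \texttt{terminal} or \texttt{off}. It is the colour test in the second clause of \textsc{LeftMostTerminal()}, not the half-plane clause, that kills this case. With that sentence replaced by this argument, your proposal matches the paper's proof.
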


\begin{proof}

Let $\mathcal{L}$ denote the vertical line passing through the leftmost robots in $\mathbb{C}(0)$. Let $r_1$ and $r_2$ be the terminal robots on $\mathcal{L}$. Let $\mathcal{R} = \{r_1, r_2\} \cup \mathcal{R}' \cup \mathcal{R}''$, where $\mathcal{R}'$ is the set of non-terminal robots on $\mathcal{L}$ and $\mathcal{R}''$ is the set of robots not lying on $\mathcal{L}$. Clearly $\mathcal{R}'' \neq \emptyset$, as $k < n$.
Let $\mathcal{L}'$ denote the vertical line passing through the leftmost robots in $\mathcal{R}''$. Let $d$ be the horizontal separation between $\mathcal{L}$ and $\mathcal{L}'$.

Clearly, any robot in $\mathcal{R}' \cup \mathcal{R}''$ taking a snapshot in this configuration, will not decide to move or change its light.  Let $r_1$ and $r_2$ take their first snapshot at $t_1$ and $t_2$ respectively. Without loss of generality, we assume that $t_1 \leq t_2$. Then $r_1$ will call the function \textsc{LeftMostTerminal()} which should return True. So, $r_1$ will change its light to \texttt{terminal}, leave $\mathcal{L}$ (start moving at time $t'_1 > t_1$), move $d$ distance leftwards, and change its light to \texttt{candidate} (at time $t''_1 > t'_1$) in the next LCM cycle.

\textbf{Case 1:} Let $t_2 \in [t_1, t'_1]$. Clearly all robots in $\mathcal{R}' \cup \mathcal{R}''$ are stable and $r_1$ is stationary in $[0,t_2]$. Hence, $r_2$ will find itself to be terminal on $\mathcal{L}$ and $\mathcal{H}_{L}^{O}(r_2)$ to be empty. So, $r_2$ will also change its light to \texttt{terminal}, leave $\mathcal{L}$, move $d$ distance leftwards and change its light to \texttt{candidate} (at some time $t''_2 > t_2$).

The robot that changes its light to \texttt{candidate} first, will be stable till $T_1 =$ max$\{t''_1,t''_2\}$. This is because a robot with light \texttt{candidate}, decides to execute a non-null action only if it finds another robot with light \texttt{candidate} or finds a robot on its left. So, it only remains to show that all robots in $\mathcal{R}' \cup \mathcal{R}''$ are stable in $[t_2, T_1]$. For a robot $r \in \mathcal{R}'$, \textsc{LeftMostTerminal()} returns False if both $r_1$ and $r_2$ are on $\mathcal{L}$ as $r$ is then non-terminal. When both $r_1$ and $r_2$ are not on $\mathcal{L}$, $r$ finds two robots in $\mathcal{H}_{L}^{O}(r)$ and hence, \textsc{LeftMostTerminal()} will return False. The only situation where \textsc{LeftMostTerminal()} may return True is when only one of $r_1$ and $r_2$, say $r_1$, is in $\mathcal{H}_{L}^{O}(r)$ with its light set to \texttt{candidate}. But then clearly the condition in line \ref{code: condition terminal} in Algorithm \ref{algo:phase1} is not satisfied for $r$, as $r_2$ is still on  $\mathcal{L}$. So, \textsc{LeftMostTerminal()} will return False for $r$. Now consider $r \in \mathcal{R}''$. If $\mathcal{R}' \neq \emptyset$, then $r$ must see at least one robot with light set to \texttt{off} on its left, and  \textsc{LeftMostTerminal()} will return False. So now consider the case with  $\mathcal{R}' = \emptyset$ where there is a possibility that the situation shown in Fig. \ref{fig: compute destination 2} might arise. We shall show that this can not happen. Let $\mathcal{L}'$ be the leftmost vertical line containing a robot in $\mathcal{R}''$. Clearly, if $r$ is not on $\mathcal{L}'$, then it can see at least one robot with light set to \texttt{off}. So, let $r$ be on $\mathcal{L}'$. If it can see both $r_1$ and $r_2$ in $\mathcal{H}_{L}^{O}(r)$, then \textsc{LeftMostTerminal()} returns False. So, assume that $r$ can see only one of them, say $r_1$, in $\mathcal{H}_{L}^{O}(r)$. Hence, $r_2$ must be obstructed by $r_1$ and $r_2 \in \mathcal{H}_{L}^{O}(r_1)$. But as both $r_1$ and $r_2$ have decided to move the same distance $d$ (= the horizontal distance between $\mathcal{L}$ and $\mathcal{L}'$), the destination of $r_1$ and $r_2$ lie on the same vertical line. This implies that when $r$ takes the snapshot, either $r_1$ is moving or is yet to move. Hence, the light of $r_1$ must be set to either \texttt{terminal} or \texttt{off} (see Fig. \ref{fig: proof A}). Therefore, \textsc{LeftMostTerminal()} will return False for $r$. Thus, we have shown that all the robots in $\mathcal{R} \setminus \{r_1,r_2\}$ are stable throughout $[0,T_1]$. In particular, at $T_1$, all robots in $\mathcal{R} \setminus \{r_1,r_2\}$ are stable. Also at $T_1$, $r_1$ and $r_2$ are the two leftmost robots, both stable and have their lights set to \texttt{candidate}. Therefore, $\mathbb{C}(T_1)$ is a candidate configuration.

 \begin{figure}[!h]
\centering
\subcaptionbox[Short Subcaption]{
       \label{}
}
[
    0.3\textwidth 
]
{
    \fontsize{8pt}{8pt}\selectfont
    \def\svgwidth{0.3\textwidth}
    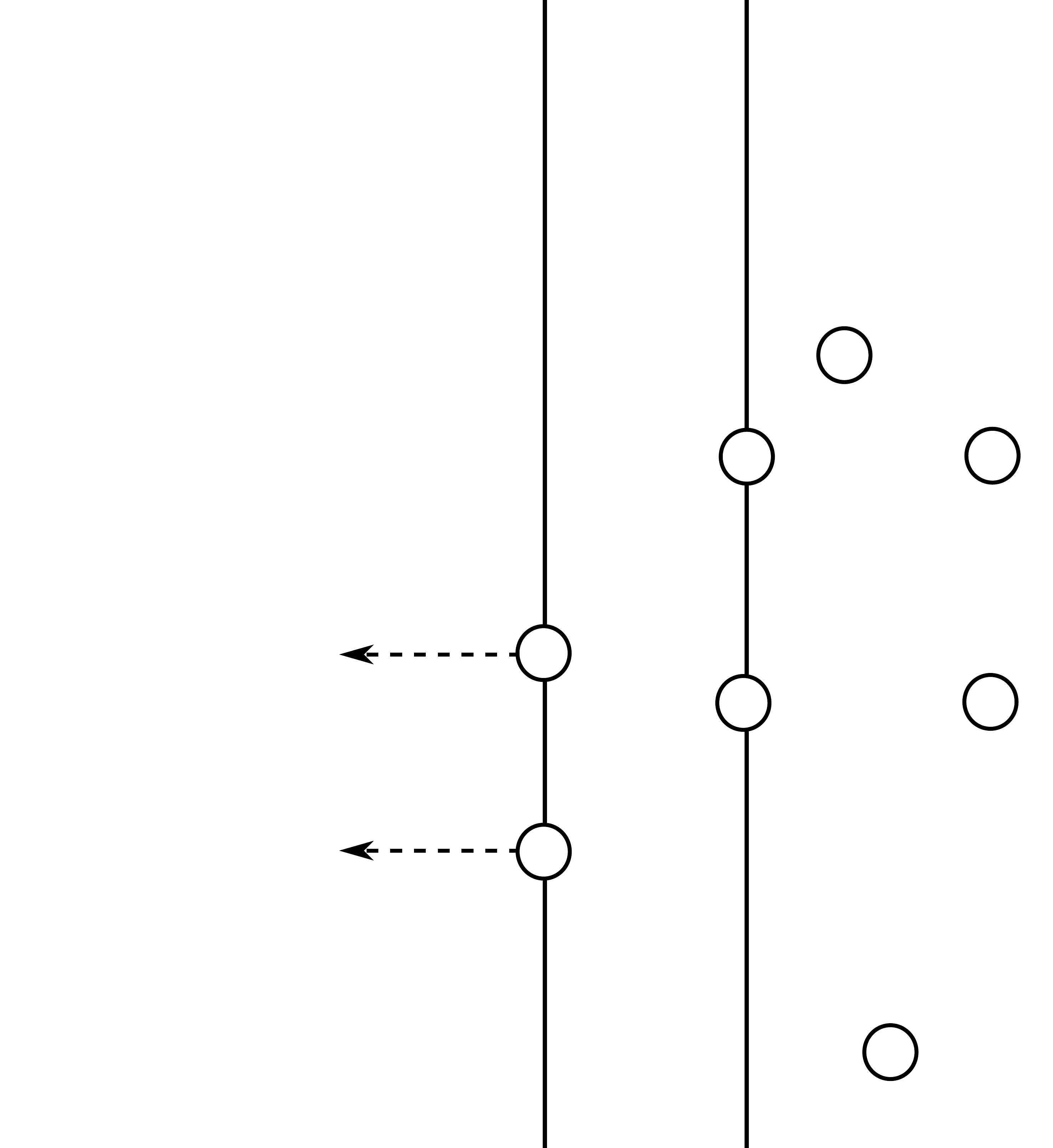
}
\hfill
\subcaptionbox[Short Subcaption]{
     \label{}
}
[
    0.3\textwidth 
]
{
    \fontsize{8pt}{8pt}\selectfont
    \def\svgwidth{0.3\textwidth}
    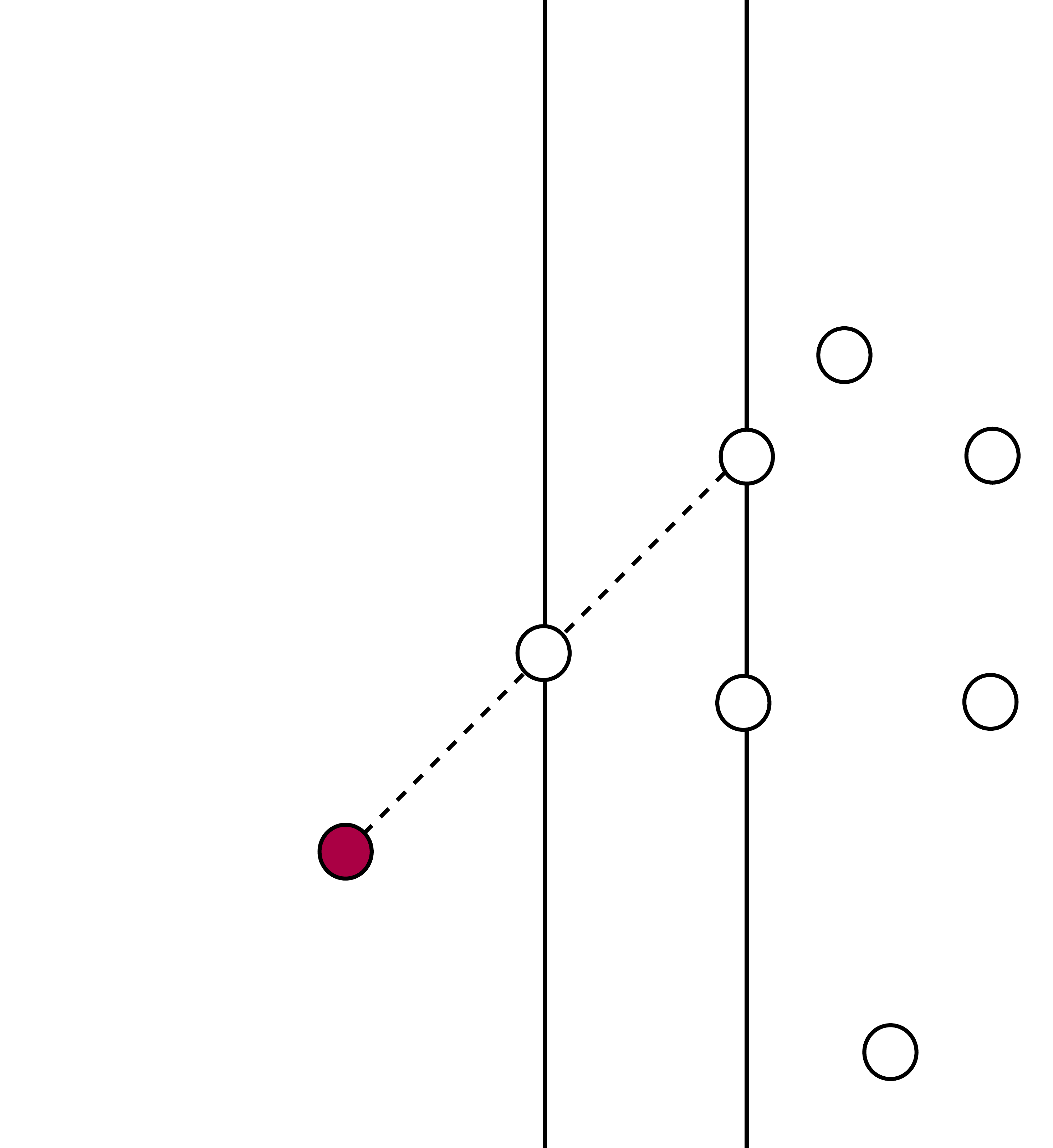
}
\hfill
\subcaptionbox[Short Subcaption]{
       \label{}
}
[
    0.3\textwidth 
]
{
    \fontsize{8pt}{8pt}\selectfont
    \def\svgwidth{0.3\textwidth}
    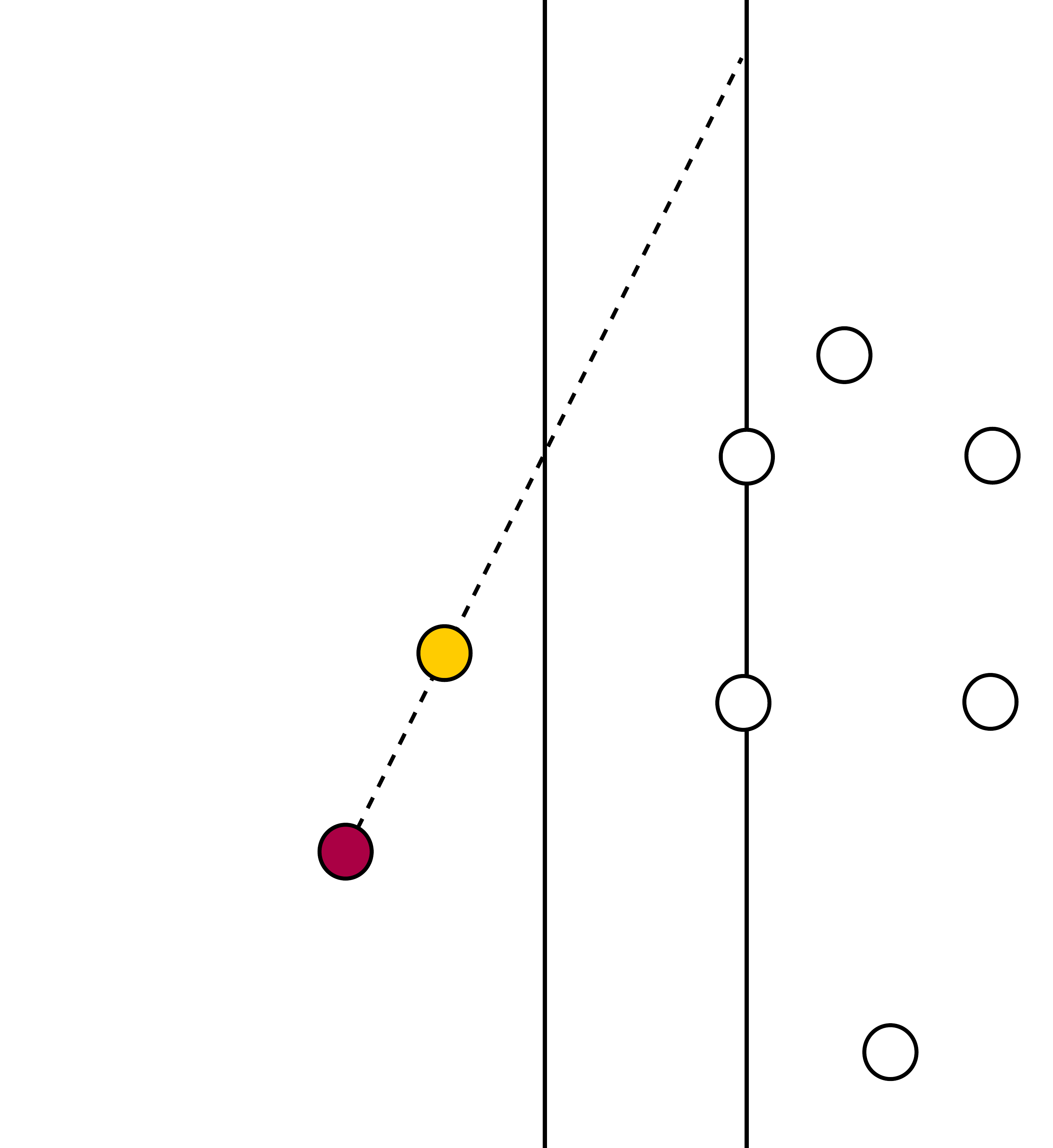
}

\caption[Short Caption]{Illustrations supporting the proof of Lemma \ref{lemma p1 hard}.}
\label{fig: proof A}
\end{figure}

\textbf{Case 2:} Let $t_2 \in (t'_1, t''_1)$. Clearly, all robots in $\mathcal{R}' \cup \mathcal{R}''$ are stable in $[0, t_2]$. So, when $r_2$ takes the snapshot at $t_2$, it finds $r_1$ in $\mathcal{H}_{L}^{O}(r_2)$ with light set to \texttt{terminal}. Hence, \textsc{LeftMostTerminal()} will return False. Therefore, all robots in $\mathcal{R} \setminus \{r_1\}$ will be stable in $(t'_1, t''_1)$. 
Hence, this case reduces to Case 3. 

\textbf{Case 3:} Let $t_2 \geq t''_1$. Similarly as shown in Case 1, all robots in $\mathcal{R}' \cup \mathcal{R}''$ are stable in $[0, t_2]$. Clearly, \textsc{LeftMostTerminal()} will return True for $r_2$. So, it will change its light to \texttt{terminal}, leave $\mathcal{L}$ (at some time $t'_2 > t_2$), move $d$ distance leftwards and change its light to \texttt{candidate} (at some time $T_1 > t'_2$). We only need to show that all robots in $\mathcal{R}' \cup \mathcal{R}''$ are stable in $(t'_2, T_1]$. Notice that $r_2 \in \mathcal{H}_{R}^{C}(r_1)$ at any time in the interval $(t'_2, T_1]$. Hence, any $r \in \mathcal{R}' \cup \mathcal{R}''$ that takes a snapshot at $t \in (t'_2, T_1)$ can either see $r_2$ with light \texttt{terminal} in $\mathcal{H}_{L}^{O}(r)$ or otherwise a robot (that lies on the line segment joining $r$ and $r_2$) with light set to \texttt{off} in $\mathcal{H}_{L}^{O}(r)$. In either case, \textsc{LeftMostTerminal()} will return False for $r$. Therefore, at time $T_1$, we obtain a candidate configuration $\mathbb{C}(T_1)$. 

\end{proof}

\begin{lemma}\label{lemma p1 line}
 If the initial configuration $\mathbb{C}(0)$ has all $n$ robots on the same vertical line, then $\exists~\ T_1 > 0$ such that $\mathbb{C}(T_1)$ is either a candidate configuration or a leader configuration.
\end{lemma}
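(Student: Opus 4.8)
The plan is to reuse the architecture of the proof of Lemma~\ref{lemma p1 hard}, specialised to the degenerate situation in which the set of robots lying off the leftmost line is empty; the only genuinely new feature is that the two terminal robots now move by their own unit distances $1_{r_1}$ and $1_{r_2}$, which need not be equal. Write $\mathcal{L}$ for the common vertical line, let $r_1$ and $r_2$ be the two terminal robots on $\mathcal{L}$, and put $\mathcal{R}' = \mathcal{R}\setminus\{r_1,r_2\}$ (the non-terminal robots on $\mathcal{L}$, which is empty when $n=2$). Every robot is leftmost, but the first branch of \textsc{LeftMostTerminal()} returns \emph{True} only for $r_1$ and $r_2$, since it additionally requires one open half of $\mathcal{L}_V(r)$ to be free of robots, which holds exactly for the two extreme robots. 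As no robot lies strictly to the right of a terminal robot, \textsc{ComputeDestination()} returns $(-1_{r_i},0)$, so each of $r_1,r_2$ turns \texttt{terminal}, moves left, and then turns \texttt{candidate}.

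The heart of the argument --- and the claim deferred in the discussion around Fig.~\ref{fig: compute destination 2} --- is that no robot of $\mathcal{R}'$ ever moves, even though the terminals may travel unequal distances. First I would fix $r\in\mathcal{R}'$ and check that \textsc{LeftMostTerminal}$(r)=\emph{False}$ throughout the life of the two terminals, organised exactly as the three cases of Lemma~\ref{lemma p1 hard}. While both terminals sit on $\mathcal{L}$, $r$ has robots on both open halves of $\mathcal{L}_V(r)=\mathcal{L}$ and an empty $\mathcal{H}_L^O(r)$, so both branches fail. While only one terminal, say $r_1$, has left $\mathcal{L}$ (colour \texttt{terminal} or \texttt{candidate}) and $r_2$ is still on $\mathcal{L}$, the unique robot of $\mathcal{H}_L^O(r)$ is $r_1$, but the half of $\mathcal{L}_V(r)$ opposite to $r_1$ still contains $r_2$, so the guard in line~\ref{code: condition terminal} fails. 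Once both terminals have left $\mathcal{L}$ they lie on opposite vertical sides of $r$ (one above-left, one below-left); I would verify that neither occludes the other from $r$ and that no robot of $\mathcal{R}'$ on $\mathcal{L}$ lies on either line of sight, so $r$ sees two robots in $\mathcal{H}_L^O(r)$ and again neither branch fires. A symmetric check shows that the later terminal still satisfies its guard and hence reaches colour \texttt{candidate}; moreover $r$ is never eligible to become leader because $\mathcal{H}_L^C(r)\supseteq\mathcal{L}$ always contains another robot.

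It then remains to read off the outcome at a time $T_1$ when $r_1,r_2$ are both \texttt{candidate} and $\mathcal{R}'$ is stable and \texttt{off}. If $1_{r_1}=1_{r_2}$ the two candidates lie on a common vertical line strictly to the left of $\mathcal{L}$, which is by definition a candidate configuration. If $1_{r_1}\neq 1_{r_2}$ then exactly one candidate, say $A$, is the unique strictly leftmost robot and the other, $B$, lies to its right; the candidate rules make $A$ turn \texttt{off} (it sees a \texttt{candidate} in $\mathcal{H}_R^O(A)$), then $B$ turn \texttt{off} (it sees an \texttt{off} robot in $\mathcal{H}_L^O(B)$), after which the unique leftmost robot is \texttt{off} with every robot \texttt{off}, hence eligible to become leader; Lemma~\ref{lemma eligible to leader} then yields a leader configuration.

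I expect the main obstacle to be the asynchronous bookkeeping rather than the geometry. The delicate points are: (i) discharging the third case above rigorously, i.e.\ certifying that the spurious move of Fig.~\ref{fig: compute destination 2} cannot occur when the terminals have moved by different amounts, which needs the same careful interleaving of snapshot and move times used in Lemma~\ref{lemma p1 hard}; and (ii) in the unequal-distance subcase, arguing that the asynchronous ``peeling'' of the two candidates actually terminates in a configuration in which a single robot is simultaneously \texttt{off}, strictly leftmost, and flanked only by \texttt{off} robots, so that the hypotheses of Lemma~\ref{lemma eligible to leader} are met. Verifying that this peeling cannot be stalled by an adversarial schedule --- in particular that the strictly leftmost candidate does not keep turning \texttt{off} and repositioning before $B$ observes it --- is the step I would treat most carefully.
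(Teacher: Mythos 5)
Your proposal is correct and follows essentially the same route as the paper's proof: both terminal robots move left by their own unit distances while every non-terminal robot, lying in the open horizontal strip between them once both have left $\mathcal{L}$, sees two robots on its left and stays put; then equal unit distances yield a candidate configuration, and unequal ones trigger the \texttt{candidate}$\to$\texttt{off} peeling and an appeal to Lemma~\ref{lemma eligible to leader}. The asynchronous peeling step you flag as delicate is exactly the point the paper also asserts without further elaboration, so you have not missed anything the paper supplies.
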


\begin{proof}
 Suppose that all the robots in $\mathbb{C}(0)$ are on the vertical line $\mathcal{L}$. Let $r_1$ and $r_2$ be the terminal robots on $\mathcal{L}$. It is easy to see that eventually both $r_1$ and $r_2$ will start moving towards left, and meanwhile all other robots will remain stable. When both $r_1$ and $r_2$ have left $\mathcal{L}$, every $r \in \mathcal{R} \setminus \{r_1,r_2\}$ is in $\mathcal{H}^{O}(r_1, r_2)$ (see Fig. \ref{fig: proof B}). Hence, each $r \in \mathcal{R} \setminus \{r_1,r_2\}$ can see both $r_1$ and $r_2$ on its left. So, all robots in $\mathcal{R} \setminus \{r_1,r_2\}$ will remain stable. 
 
 After reaching their destinations, $r_1$ and $r_2$  will change their lights to \texttt{candidate}. The robot that changes its light first,  will remain stable until the other one changes its light to \texttt{candidate}. So, there is a time $t$ when $r_1.light = r_2.light = $ \texttt{candidate}, and all robots are stable.
 
 If $1_{r_1} = 1_{r_2}$, then at time $T_1 = t$, we have a candidate configuration. Now assume that $1_{r_1} \neq 1_{r_2}$, $1_{r_1} > 1_{r_2}$. So when $r_1$ observes $r_2$ with light \texttt{candidate} in $\mathcal{H}_{R}^{O}(r_1)$, it will change its light to \texttt{off}. When $r_2$ finds $r_1$ with light \texttt{off} in $\mathcal{H}_{L}^{O}(r_2)$, it will also change its light to \texttt{off}. Hence, at some time $t' > t$, $r_1$ will find that it is eligible to become leader. Then by Lemma \ref{lemma eligible to leader}, we have a leader configuration at some time $T_1 > t'$. 
 
\end{proof}

 \begin{figure}[!h]
\centering
\subcaptionbox[Short Subcaption]{
       \label{}
}
[
    0.33\textwidth 
]
{
    \fontsize{8pt}{8pt}\selectfont
    \def\svgwidth{0.33\textwidth}
    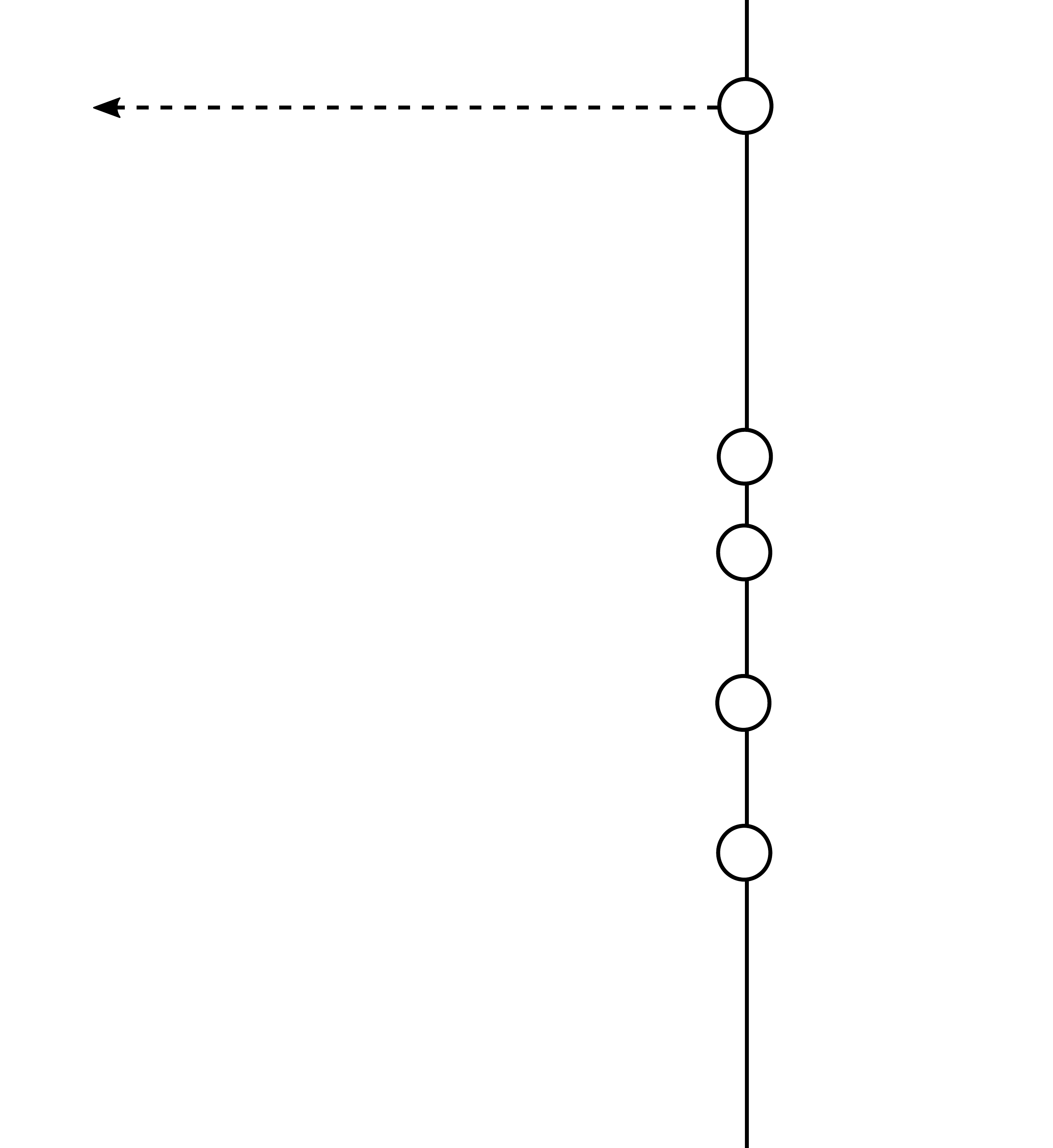
}
\hspace*{2cm}
\subcaptionbox[Short Subcaption]{
     \label{}
}
[
    0.33\textwidth 
]
{
    \fontsize{8pt}{8pt}\selectfont
    \def\svgwidth{0.33\textwidth}
    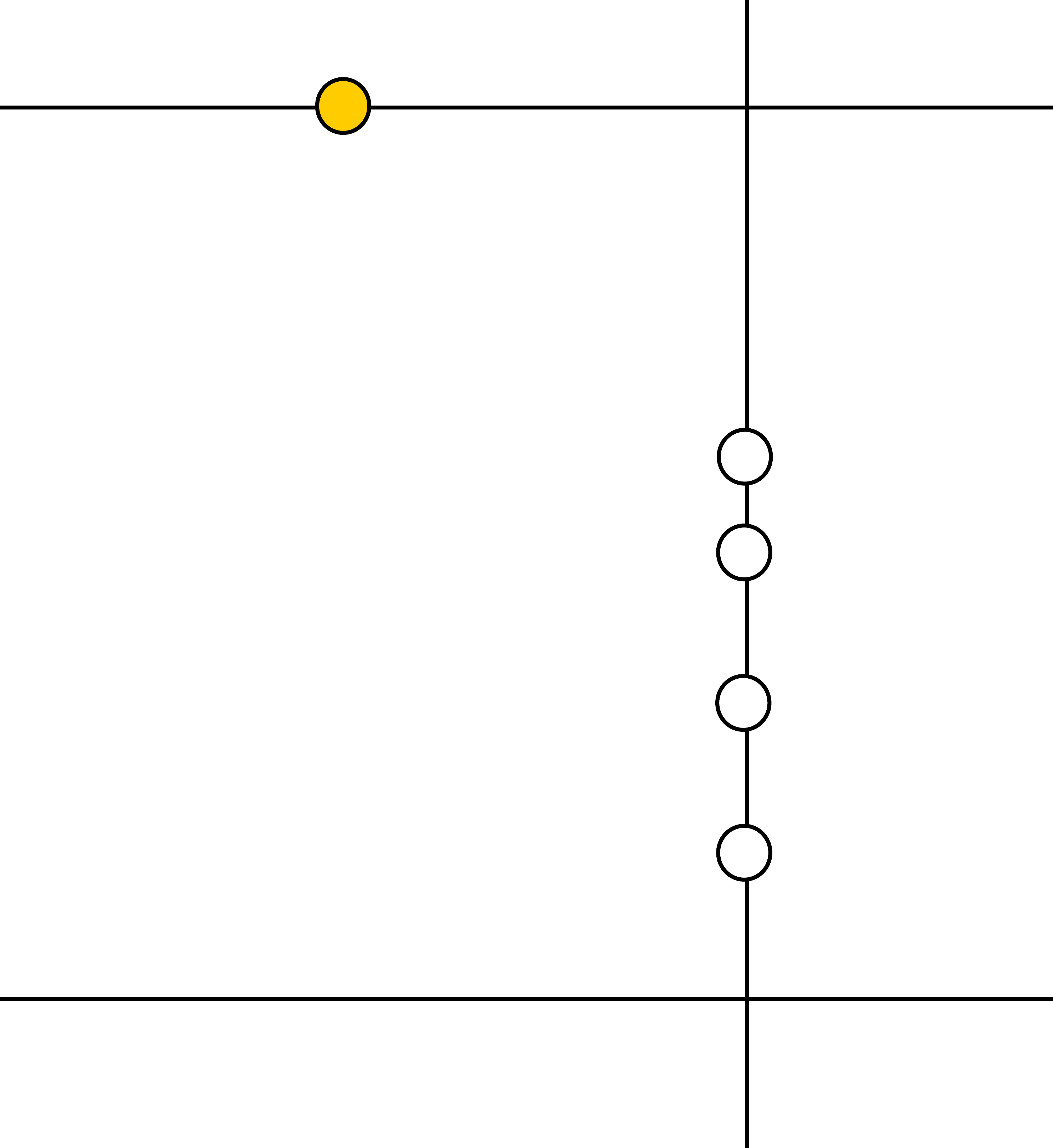
}

\caption[Short Caption]{Illustrations supporting the proof of Lemma \ref{lemma p1 line}.}
\label{fig: proof B}
\end{figure}

\begin{theorem}\label{thm: t1}
 For any initial configuration $\mathbb{C}(0)$, $\exists~\ T_1 > 0$ such that $\mathbb{C}(T_1)$ is a leader configuration or a candidate configuration.
\end{theorem}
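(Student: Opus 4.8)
The plan is to prove this by a case analysis on the number $k$ of leftmost robots in $\mathbb{C}(0)$, i.e., the number of robots lying on the leftmost vertical line. Since there are $n$ robots in total, $k$ ranges over $\{1, 2, \ldots, n\}$, and I would split into exactly three exhaustive and mutually exclusive cases according to whether $k = 1$, $1 < k < n$, or $k = n$. The entire substance of the argument is already packaged in the three preceding lemmas, so this theorem is essentially a bookkeeping step that assembles them.

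In the first case, $k = 1$ means there is a unique leftmost robot, and Lemma \ref{lemma p1 easy} directly supplies a time $T_1 > 0$ at which $\mathbb{C}(T_1)$ is a leader configuration. In the second case, $1 < k < n$, Lemma \ref{lemma p1 hard} supplies a time $T_1 > 0$ at which $\mathbb{C}(T_1)$ is a candidate configuration. For the third case I would first record the elementary equivalence that $k = n$ holds if and only if all $n$ robots lie on a single vertical line: if all robots are leftmost then they all sit on the leftmost vertical line, and conversely if all robots are collinear on a vertical line then that line is the leftmost one and carries all $n$ robots. Hence this case is precisely the hypothesis of Lemma \ref{lemma p1 line}, which yields a time $T_1 > 0$ at which $\mathbb{C}(T_1)$ is either a candidate configuration or a leader configuration.

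Since every initial configuration falls into exactly one of these three cases and each case produces the desired $T_1$, the theorem follows. The only point needing verification is that the three cases are genuinely exhaustive, which reduces to the observation above that $k = n$ coincides with the all-collinear hypothesis of Lemma \ref{lemma p1 line}; all of the real difficulty (the asynchrony arguments, the coordinated use of the \texttt{terminal} and \texttt{candidate} colors, and the equal-distance movement enforced by \textsc{ComputeDestination()}) has already been discharged inside the three lemmas. Consequently I do not anticipate any genuine obstacle here: the statement is a clean corollary assembling Lemmas \ref{lemma p1 easy}, \ref{lemma p1 hard}, and \ref{lemma p1 line}, and the proof should amount to little more than naming the cases and citing the matching lemma in each.
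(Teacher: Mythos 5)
Your proof is correct and is exactly the paper's argument: the paper's own proof of this theorem is a one-line citation of Lemmas \ref{lemma p1 easy}, \ref{lemma p1 hard}, and \ref{lemma p1 line}, and your three-way case split on the number $k$ of leftmost robots (with the observation that $k=n$ is equivalent to the all-collinear hypothesis of Lemma \ref{lemma p1 line}) is precisely the intended decomposition, just spelled out more explicitly.
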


\begin{proof}
 Follows from Lemma \ref{lemma p1 easy}, \ref{lemma p1 hard} and \ref{lemma p1 line}. 
\end{proof}

\begin{lemma}\label{lemma: phase 2a}
 Let $\mathbb{C}$ be a configuration of robots $\mathcal{R}$, where 
 \begin{enumerate}
  \item there are two robots $r_1$ and $r_2$ on the vertical line $\mathcal{L}$,
  
  \item either $r_1.light = r_2.light =$ \texttt{candidate}, or  $r_1.light \neq r_2.light$ with $r_1.light, r_2.light \in$ $\{\texttt{candidate},\\ \texttt{symmetry}\},$ 
  
  \item  all other robots are on the right of $\mathcal{L}$ (i.e., in  $\mathcal{H}_R^O(r_1) = \mathcal{H}_R^O(r_2)$) and have their lights set to \texttt{off}. 
 \end{enumerate}

 Then any robot $r \in \mathcal{R} \setminus \{r_1, r_2\}$ taking snapshot of this configuration will not move or change its color.
\end{lemma}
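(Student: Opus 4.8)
The plan is to argue that, no matter which branch of Algorithm~\ref{main_algorithm} the robot $r$ executes, its prescribed action is null. Since $\mathbb{C}$ contains no robot with light \texttt{leader}, every robot taking a snapshot necessarily falls into the \emph{stage 1} branch and runs either \textsc{Phase1()} or \textsc{Phase2()}; which of the two is triggered depends on $r$'s (possibly obstructed) view, so I would treat both. Throughout I use that $r.light = \texttt{off}$, that $r_1, r_2$ lie on the vertical line $\mathcal{L}$, and that $r$ lies strictly to the right of $\mathcal{L}$, i.e.\ $r_1, r_2 \in \mathcal{H}_L^O(r)$.

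The crux is a visibility claim: $r$ always perceives at least two distinct robots in $\mathcal{H}_L^O(r)$. To see this, consider the segments $[r, r_1]$ and $[r, r_2]$. Because $r \notin \mathcal{L}$ while $r_1, r_2 \in \mathcal{L}$, the three points are not collinear, so these two segments meet only at $r$. On each segment the robot nearest to $r$ is visible to $r$ (by minimality nothing lies strictly between them) and lies in $\mathcal{H}_L^O(r)$ (every point of the segment other than $r$ has strictly smaller $x$-coordinate than $r$); since the segments are disjoint away from $r$, these two nearest robots are distinct. Hence $r$ sees at least two robots to its left. Establishing this cleanly under obstructed visibility is the main point of the argument; everything else is a routine case check against the pseudocode.

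If $r$ runs \textsc{Phase1()} (Algorithm~\ref{algo:phase1}), the branch for \texttt{off} robots first tests whether $\mathcal{H}_L^C(r)$ is empty of other robots, which fails by the visibility claim, so \textsc{BecomeLeader()} is not called; it then calls \textsc{LeftMostTerminal()}, whose two positive cases require either \emph{no} robot or \emph{exactly one} robot in $\mathcal{H}_L^O(r)$, both contradicted by the claim, so it returns \texttt{False}. Thus $r$ neither moves nor recolors. If instead $r$ runs \textsc{Phase2()} (Algorithm~\ref{algo:phase2}), the only action available to an \texttt{off} robot is conditioned on seeing two robots with light \texttt{symmetry} on a common vertical line in $\mathcal{H}_L^O(r)$. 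By hypothesis~2 together with hypothesis~3, the configuration contains at most one \texttt{symmetry} robot in total, so this condition can never hold, and again $r$ does nothing. Combining the two cases yields the conclusion for every $r \in \mathcal{R} \setminus \{r_1, r_2\}$.
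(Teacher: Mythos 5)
Your proof is correct, and it reaches the same conclusion by a slightly different decomposition than the paper. The paper splits $\mathcal{R}\setminus\{r_1,r_2\}$ into the robots lying on $\mathcal{L}'$ (the leftmost vertical line containing a non-candidate robot), which it asserts can see both $r_1$ and $r_2$ on their left, and the remaining robots, which it asserts see at least one \texttt{off}-colored robot on their left; either observation defeats both branches of \textsc{LeftMostTerminal()}. You instead prove one uniform visibility claim — every $r$ to the right of $\mathcal{L}$ sees at least two distinct robots in $\mathcal{H}_L^O(r)$, obtained as the nearest visible robots along the two non-collinear segments $[r,r_1]$ and $[r,r_2]$ — which kills both branches at once without the case split on $\mathcal{L}'$. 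Your version buys a cleaner argument and actually supplies the justification for visibility under obstruction that the paper only asserts; it is also more scrupulous in two respects the paper leaves implicit, namely that the absence of a \texttt{leader} light forces the stage~1 branch of Algorithm~\ref{main_algorithm}, and that the \texttt{off}-robot branch of \textsc{Phase2()} cannot fire because the hypotheses permit at most one robot with light \texttt{symmetry}. The paper's split does give marginally more information (robots off $\mathcal{L}'$ see an \texttt{off}-colored robot on their left, a fact reused in the proof of Lemma~\ref{lemma p1 hard}), but for the statement of this lemma your unified claim is entirely sufficient.
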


\begin{proof}
 Let $\mathcal{L'}$ be the leftmost vertical line containing a robot from $ \mathcal{R} \setminus \{r_1, r_2\}$. Then any robot on $\mathcal{L'}$ can see both $r_1, r_2$ on their left. For any robot in $ \mathcal{R} \setminus \{r_1, r_2\}$ and not lying on $\mathcal{L'}$, it will see at least one robot with light \texttt{off} on its left. Hence, any robot in $ \mathcal{R} \setminus \{r_1, r_2\}$ taking snapshot of this configuration will not move or change its color. 
\end{proof}

\begin{lemma}\label{lemma: phase 2b}
 Let $\mathbb{C}(t)$ be a configuration of robots $\mathcal{R}$, where 
 \begin{enumerate}
  \item there are two robots $r_1$ and $r_2$ on the vertical line $\mathcal{L}$ with their lights set to \texttt{candidate},
  
  \item  all other robots are on the right of $\mathcal{L}$ (i.e., in  $\mathcal{H}_R^O(r_1) = \mathcal{H}_R^O(r_2)$), have their lights set to \texttt{off} and stable. 
 \end{enumerate}

 If i) $r_2$ takes a snapshot at $t$ and moves left by the distance computed by \textsc{ComputeDestination2()}, and ii) during its movement,  $r_1$ is either stationary or moving away from $r_2$ along $\mathcal{L}$, then the robots in $\mathcal{R} \setminus \{r_1,r_2\}$ will remain stable during the movement of $r_2$.
\end{lemma}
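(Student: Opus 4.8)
\emph{Proof proposal.} The plan is to show that throughout the movement of $r_2$ no robot of $\mathcal{R}\setminus\{r_1,r_2\}$ ever satisfies a guard that triggers an action, so all of them keep their light \texttt{off} and stay put. During the transit $r_2$ lies strictly in $\mathcal{H}_L^O(r_1)$ (it has left $\mathcal{L}$ and moves horizontally) while $r_1$ stays on $\mathcal{L}$; hence the two \texttt{candidate} robots do not share a vertical line and, since every other robot is \texttt{off}, there is no \texttt{symmetry} and no \texttt{leader} light. By Algorithm \ref{main_algorithm} every $r\in\mathcal{R}\setminus\{r_1,r_2\}$ therefore runs \textsc{Phase1()}, where an \texttt{off} robot acts only through \textsc{BecomeLeader()} or the \textsc{LeftMostTerminal()} branch. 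The former is impossible: $r_1,r_2$ lie strictly left of every such $r$ and carry \texttt{candidate}, so the guard ``no robots in $\mathcal{H}_L^C(r)$ and all lights \texttt{off}'' fails. Thus it suffices to prove that \textsc{LeftMostTerminal()} returns \emph{False} for every $r\in\mathcal{R}\setminus\{r_1,r_2\}$ at every instant of the movement.

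First I would dispose of the robots not lying on $\mathcal{L}'$, exactly as in Lemma \ref{lemma: phase 2a}: such an $r$ has $\mathcal{L}'$ (with all its \texttt{off} robots) to its left, and taking any $q$ on $\mathcal{L}'$, either $r$ sees $q$ or the robot first obstructing the segment $rq$, which is again \texttt{off} and lies in $\mathcal{H}_L^O(r)$; iterating, $r$ always sees at least one \texttt{off} robot on its left. Hence $\mathcal{H}_L^O(r)$ is non-empty and, whenever it is a singleton, that robot is \texttt{off} rather than \texttt{candidate}; in either case \textsc{LeftMostTerminal()} returns \emph{False}.

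The crux is the robots on $\mathcal{L}'$. Since $\mathcal{L}'$ is the leftmost line carrying a robot of $\mathbb{C}'$ while $r_1,r_2$ lie on or left of $\mathcal{L}$, no robot has $x$-coordinate strictly between $\mathcal{L}$ and $\mathcal{L}'$, so every $r$ on $\mathcal{L}'$ always sees $r_1$; I must show it also sees $r_2$, i.e. that $r_1$ never obstructs $r_2$. Work in $r_2$'s frame at the snapshot time with the $Y$-axis oriented towards $r_1$ (line \ref{code: set y axis}): then $r_2=(0,0)$, $\mathcal{L}=\{x=0\}$, $\mathcal{L}'=\{x=d\}$ with $d=d_{\mathcal{LL'}}>0$, and $r_1=(0,a)$ with $a=r'.y>0$; moreover $r_2$ keeps height $0$ and, by hypothesis (ii), $r_1$ keeps $x=0$ with height $h\geq a$. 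The line through the current positions $(0,h)$ and $(x_2,0)$ of $r_1,r_2$ meets $\mathcal{L}'$ at height $h\bigl(1-\tfrac{d}{x_2}\bigr)$, which (since $x_2<0$) exceeds $h\geq a$, and is increasing in $h$ and in $|x_2|^{-1}$. Taking $r''$ to be the topmost robot of $\mathcal{L}'$ above $r_1$ (the robot used by \textsc{ComputeDestination2()}), the prescribed destination $x_2=-\tfrac12 d\,\tfrac{a}{r''.y-a}$ gives meeting height $2\,r''.y-a>r''.y$; by the monotonicity the meeting height stays $\ge 2\,r''.y-a$ for every $h\geq a$ and every $x_2$ of the transit. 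Hence the line $r_1r_2$ always meets $\mathcal{L}'$ strictly above $r''$, so it misses every robot of $\mathcal{L}'$ (those above $r_1$ have height $\leq r''.y$, those below $r_1$ lie under a line already meeting $\mathcal{L}'$ above $a$). When $\mathcal{L}'$ carries no robot above $r_1$, \textsc{ComputeDestination2()} returns $(-1_r,0)$ and the bare inequality ``meeting height $>a$'' already clears the remaining lower robots. In every case $r$ sees both $r_1$ and $r_2$, so $\mathcal{H}_L^O(r)$ holds at least two robots and \textsc{LeftMostTerminal()} returns \emph{False}.

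The main obstacle is the asynchrony in (ii): \textsc{ComputeDestination2()} is computed from $r_1$'s position at the snapshot time, yet the adversary may let $r_1$ drift up $\mathcal{L}$ while $r_2$ is still in transit. The monotonicity in $h$ is precisely what rescues the argument — raising $r_1$ only pushes the intersection further above $r''$ — which is also why hypothesis (ii) forbidding $r_1$ from moving \emph{towards} $r_2$ is essential (a decreasing $h$ could drop the intersection onto a robot of $\mathcal{L}'$). Combining the two cases, no robot of $\mathcal{R}\setminus\{r_1,r_2\}$ moves or recolours while $r_2$ travels, which is the assertion of the lemma.
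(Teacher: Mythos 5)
Your proof is correct and follows essentially the same route as the paper's: reduce the claim to showing that every robot on $\mathcal{L}'$ keeps seeing both $r_1$ and $r_2$, and use the halving factor in \textsc{ComputeDestination2()} to show the sight-line through $r_1$ and $r_2$ always meets $\mathcal{L}'$ strictly above every robot there. You merely make explicit the computation the paper dismisses as ``easy to see,'' and your monotonicity-in-$h$ observation cleanly covers the asynchronous drift of $r_1$ that the paper's fixed line $\mathcal{M}$ through $r_1(t)$ glosses over.
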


\begin{proof}
 
 Let $\mathcal{H}$ be the open half-plane delimited by $\mathcal{L}_{H}(r_1)$ that does not contain $r_2$. Let $\mathcal{H}^c$ be the complement of  $\mathcal{H}$. Let $\mathcal{L}'$ be the leftmost vertical line containing a robot from $\mathcal{R} \setminus \{r_1, r_2\}$. It is sufficient to show that all robots on $\mathcal{L'}$ will be able to see both $r_1$ and $r_2$ during the movement of $r_2$. It is easy to see that any robot on $\mathcal{L}' \cap \mathcal{H}^c$ will be able to see both $r_1$ and $r_2$ during the movement of $r_2$. So, assume that $\mathcal{L}' \cap \mathcal{H}$ has at least one robot. Let  $r_3$ be the robot on $\mathcal{L}' \cap \mathcal{H}$ with maximum $y$-coordinate (in the local coordinate system of $r_2$ as set in line \ref{code: set y axis} of Algorithm \ref{algo:phase2}). Consider the line $\mathcal{M}$ joining $r_3(t)$ and $r_1(t)$ (see Fig. \ref{fig: proof C1}). It is easy to see that \textsc{ComputeDestination2()} ensures that $r_2$ does not cross the line $\mathcal{M}$. This implies that all robots on $\mathcal{L}' \cap \mathcal{H}$ will always be able to see both $r_1$ and $r_2$.

\end{proof}

 \begin{figure}[h]
\centering
%
{
    \fontsize{8pt}{8pt}\selectfont
    \def\svgwidth{0.45\textwidth}
    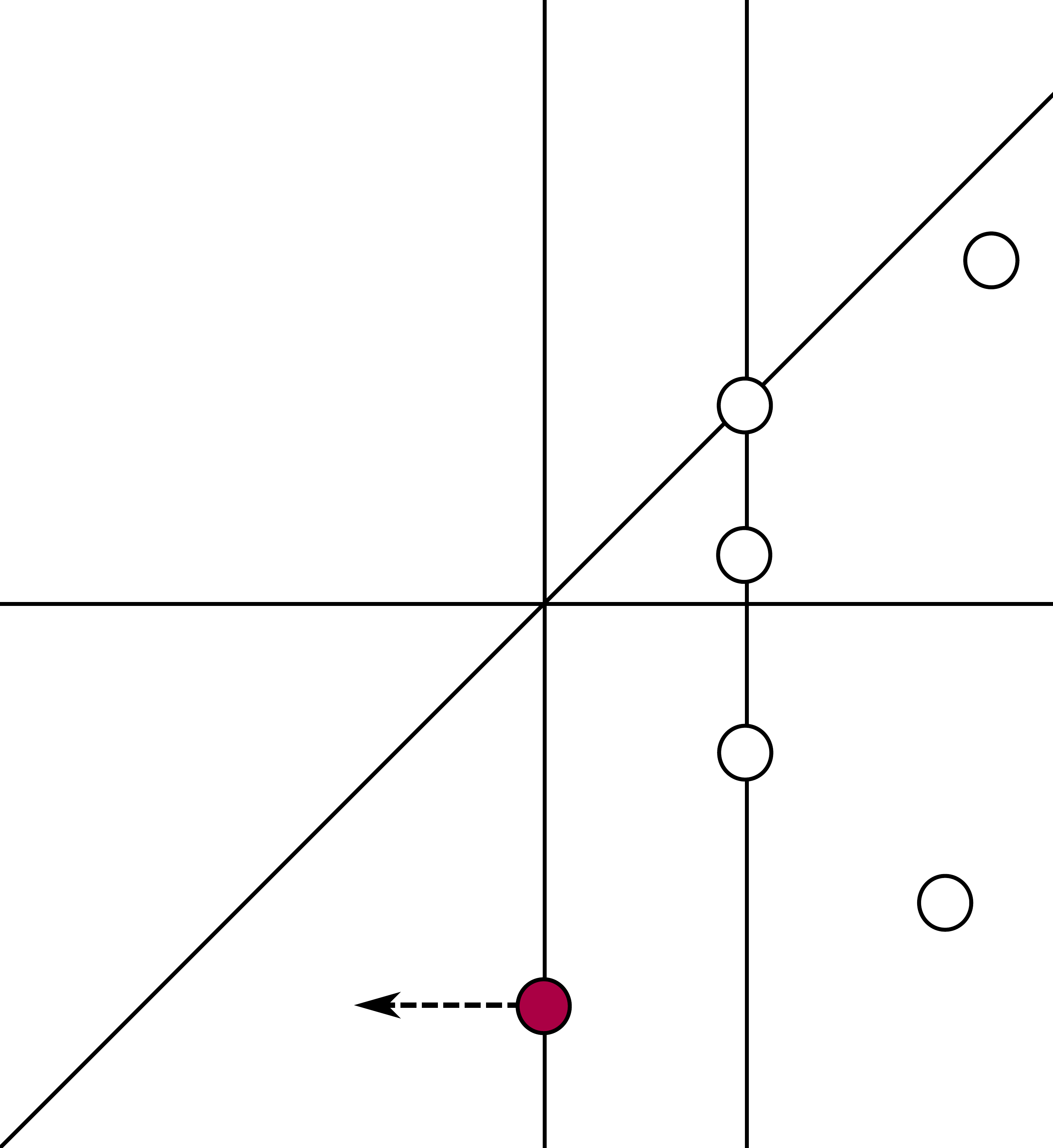
}

\caption[Short Caption]{Illustrations supporting the proof of Lemma \ref{lemma: phase 2b}.}
\label{fig: proof C1}
\end{figure}

\begin{lemma}\label{lemma: phase 2c}
 Let $\mathbb{C}$ be a configuration of $n$ robots $\mathcal{R}$, where 
 \begin{enumerate}
  \item there are two robots $r_1$ and $r_2$ on the vertical line $\mathcal{L}$,
  
  \item  all other robots are on the right of $\mathcal{L}$ (i.e., in  $\mathcal{H}_R^O(r_1) = \mathcal{H}_R^O(r_2)$).
 \end{enumerate}

 If $r_1$ moves along $\mathcal{L}$ in the direction opposite to $r_2$ and all other robots remain stationary, then after finite number of steps, $r_1$ can see all the robots.
\end{lemma}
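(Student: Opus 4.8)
The plan is to reduce the visibility question to a statement about collinearity and then exploit the fact that, as $r_1$ recedes from $r_2$ along $\mathcal{L}$, it can be collinear with any fixed pair of the stationary robots at only one location. I would fix coordinates so that $\mathcal{L}$ is vertical, $r_2$ sits below $r_1$, and $r_1$ moves upward (the direction opposite to $r_2$); write $S = \mathcal{R} \setminus \{r_1, r_2\}$ for the stationary robots, each of which lies strictly to the right of $\mathcal{L}$ by hypothesis.

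First I would dispose of the two trivial sources of obstruction. Since $r_1, r_2 \in \mathcal{L}$ and no other robot lies on $\mathcal{L}$, the segment $r_1 r_2 \subset \mathcal{L}$ contains no robot, so $r_2$ stays visible to $r_1$ throughout. Moreover, for any $a \in S$ the segment $r_1 a$ meets $\mathcal{L}$ only at $r_1$ (as $a$ is off $\mathcal{L}$), so $r_2$ can never lie on it; hence every possible obstruction is caused by some $b \in S$ lying on the open segment $r_1 a$ for some $a \in S$, which forces $r_1$, $a$, $b$ to be collinear.

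The heart of the argument is the finiteness of the \emph{blind} heights. For each unordered pair $\{a,b\} \subseteq S$, the locus of points collinear with $a$ and $b$ is the line through them; since $a,b$ lie strictly right of $\mathcal{L}$, this line is either non-vertical and meets $\mathcal{L}$ in a single point, or vertical and misses $\mathcal{L}$ entirely. Thus each pair pins $r_1$ to at most one height at which it can be collinear with that pair, and over the finitely many pairs these heights form a finite set $H$. Combined with the preceding paragraph, whenever $r_1$'s height exceeds $\max H$, no robot of $S$ is obstructed, so $r_1$ sees all of $S$ together with $r_2$, i.e., the whole configuration.

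It then remains to check that $r_1$ actually climbs past $\max H$ in finitely many steps. Each move carries $r_1$ strictly away from $r_2$, and by the movement rule of Algorithm \ref{algo:phase2} the move length equals $r_1$'s current distance from $\mathcal{K}$, which consequently doubles at every step; since that distance is positive at the outset, $r_1$'s height grows without bound, so after finitely many moves it surpasses $\max H$ (and if it happened to see everyone sooner, the conclusion holds all the more). The step I expect to demand the most care is the finiteness claim: one must argue cleanly that collinearity with a fixed pair confines $r_1$ to a single point of $\mathcal{L}$, handle the vertical-pair case separately, and verify that ``invisible'' genuinely implies ``collinear,'' so that the bounded set of collinear heights controls the otherwise unbounded range of positions of $r_1$.
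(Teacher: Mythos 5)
Your proof is correct and follows essentially the same route as the paper's: both rest on the observation that $r_1$ can fail to see everyone only when it sits at one of the at most $\binom{n-2}{2}$ points where a line through a pair of stationary robots meets $\mathcal{L}$, so only finitely many positions on $\mathcal{L}$ are blind. The only divergence is in the final step — you invoke the doubling of the move length from Algorithm \ref{algo:phase2} to push $r_1$ past the largest blind height, whereas the paper simply notes that monotone motion through distinct positions can revisit the finite blind set only finitely often — but this is a cosmetic difference, not a different argument.
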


\begin{proof}
 Take all possible pairs of robots from $\mathcal{R} \setminus \{r_1,r_2\}$ and draw the straight lines joining them. Consider the points where these lines intersect $\mathcal{L}$. Clearly, the robot $r_1$ will not be able to see all the other robots in the configuration if and only if it is positioned on one of these points. Since there are at most $^{n-2}C_2$ of these points, after finitely many steps along $\mathcal{L}$ in one direction, $r_1$ will be able to see all the robots. 
\end{proof}

\begin{theorem}\label{thm: t1t2}
 If $\mathbb{C}(T_1)$ is a candidate configuration, then $\exists~\ T_2 > T_1$ such that $\mathbb{C}(T_2)$ is a leader configuration.
\end{theorem}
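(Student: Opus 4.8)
The plan is to trace the execution of \textsc{Phase2()} (Algorithm \ref{algo:phase2}) from the candidate configuration $\mathbb{C}(T_1)$ and show that along every branch the two \texttt{candidate} robots $r_1,r_2$ are eventually driven into a situation handled by Lemma \ref{lemma eligible to leader}. The backbone of the argument is an invariant on the remaining robots $\mathbb{C}'=\mathcal{R}\setminus\{r_1,r_2\}$: as long as $r_1,r_2$ stay to the left of $\mathbb{C}'$ with lights in $\{\texttt{candidate},\texttt{symmetry}\}$ and every other light is \texttt{off}, Lemma \ref{lemma: phase 2a} guarantees that no robot of $\mathbb{C}'$ moves or recolours. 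Hence the ``background'' $\mathbb{C}'$, and therefore the lines $\mathcal{L}'$ and $\mathcal{K}$ computed purely from it, stay fixed while $r_1,r_2$ manoeuvre; this reduces the whole theorem to controlling the two candidate robots. Throughout I keep in mind that $r_1,r_2$ only ever move horizontally (toward the election) or vertically along $\mathcal{L}$ (to gain visibility), so the perpendicular bisector of the segment $r_1r_2$, when they are symmetric about $\mathcal{K}$, is exactly $\mathcal{K}$.

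First I would dispose of the cases in which some asymmetry breaks the tie. If $r_1,r_2$ lie in the same closed half-plane delimited by $\mathcal{K}$, the one farther from $\mathcal{K}$ moves left by \textsc{ComputeDestination2()}; Lemma \ref{lemma: phase 2b} keeps $\mathbb{C}'$ stable during this move, and afterwards that robot is the unique leftmost robot. A short check of the \texttt{candidate} branch of \textsc{Phase1()} then shows both candidate lights reset to \texttt{off}, after which Lemma \ref{lemma eligible to leader} (equivalently Lemma \ref{lemma p1 easy}) yields a leader configuration. When $r_1,r_2$ are in different open half-planes, the robots first compare the strings $\lambda'(r_1),\lambda'(r_2)$ read off from $\mathcal{L}'$, which both can always see; if these differ, the smaller one moves left and becomes leader exactly as above. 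If $\lambda'(r_1)=\lambda'(r_2)$ but the full strings differ, the robots advance along $\mathcal{L}$ away from each other until, by Lemma \ref{lemma: phase 2c}, one of them acquires the whole view, and the guards ``number of robots equal to $n$'' and ``no robots in $\mathcal{H}_B^O(r)$'' single out the bottommost full-view robot to evaluate $\lambda(r_1),\lambda(r_2)$. The robot with the lexicographically smaller (resp. larger) string then moves left (resp. right by $\tfrac12 d_{\mathcal{LL'}}$), in either case leaving a unique leftmost candidate that becomes the leader. The sub-case $\lambda(r_1)=\lambda(r_2)$ with $\mathcal{K}$ empty is handled identically: one robot moves left, and since it is then alone on its vertical line no reflection about a horizontal axis can fix the configuration, so the dangerous empty-axis symmetry is never created.

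The genuinely delicate case is $\lambda(r_1)=\lambda(r_2)$ with at least one robot on $\mathcal{K}$, i.e.\ $\mathbb{C}'$ is symmetric about a populated axis (Case 3 of the overview). Here no asymmetry of $\mathbb{C}'$ can break the tie, so the plan is to have $r_1$ and $r_2$ advertise the axis by both adopting the \texttt{symmetry} colour and then let a non-candidate robot finish the election. I would argue this in two steps. First, using Lemma \ref{lemma: phase 2c} and the fact that the comparison branch fires only for the bottommost robot possessing the full view, the lower of the two — say $r_2$ — reaches \texttt{symmetry} while sitting on $\mathcal{L}$ at some distance $d_{r_2}$ from $\mathcal{K}$; the \texttt{symmetry}-partner branch of \textsc{Phase2()} then forces $r_1$ to move out until $d_{r_1}=d_{r_2}$ before it too turns \texttt{symmetry}. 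This is precisely where the specific moving distances in Algorithm \ref{algo:phase2} are needed, to guarantee that both robots end up on $\mathcal{L}$, equidistant from $\mathcal{K}$, and (by the symmetry of $\mathbb{C}'$ about $\mathcal{K}$) visible to all of $\mathbb{C}'$. Once both lights are \texttt{symmetry} and placed symmetrically about $\mathcal{K}$, the configuration is again stable by Lemma \ref{lemma: phase 2a}, and the unique leftmost robot of $\mathbb{C}'$ lying on $\mathcal{K}$ — which every robot now locates as the perpendicular-bisector line of the segment $r_1r_2$ — moves left past $\mathcal{L}$ by $d+1_r$. It thereby becomes the unique leftmost robot; the two \texttt{symmetry} robots, seeing an \texttt{off} robot on their left, reset to \texttt{off}, and Lemma \ref{lemma eligible to leader} completes the election.

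The main obstacle is the asynchronous bookkeeping in this last case: I must rule out races in which $r_1$ and $r_2$ act on mutually inconsistent snapshots — for instance one turning \texttt{symmetry} while the other is still climbing along $\mathcal{L}$ for its full view, or two distinct robots of $\mathcal{K}$ both trying to become leftmost. The guards ``number of robots equal to $n$'', ``no robots in $\mathcal{H}_B^O(r)$'', and the distance tests $d_r \gtrless d_{r'}$ are exactly what serialise these actions, and the heart of the proof is an interleaving analysis showing that, whatever order the adversary chooses, the pair $(r_1,r_2)$ monotonically approaches the equidistant \texttt{symmetry} state: each vertical step strictly improves a visibility/alignment progress measure that can change only finitely often (the obstruction count bounded by Lemma \ref{lemma: phase 2c}, together with the distance-matching step), after which the single designated robot on $\mathcal{K}$ finishes the job. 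Establishing this finiteness-plus-consistency claim, rather than any individual geometric computation, is the crux of the theorem.
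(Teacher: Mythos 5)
Your decomposition mirrors the paper's almost exactly (same-half-plane case via Lemma \ref{lemma: phase 2b}, the $\lambda'$ comparison, the climb along $\mathcal{L}$ via Lemma \ref{lemma: phase 2c}, the $\lambda$ comparison, and the \texttt{symmetry}-advertisement in the populated-axis case), but there is one genuine gap: the sub-case $\lambda(r_1)=\lambda(r_2)$ with $\mathcal{K}$ containing no robots. You assert that ``one robot moves left'' and justify only that the resulting configuration cannot be symmetric, which addresses the wrong question. The danger is \emph{before} the move: if $d_{r_1}=d_{r_2}$, then both candidates pass the guard $d_r\geq d_{r'}$, both can obtain the full view, both evaluate $\lambda(r)=\lambda(r')$ with $\mathcal{K}$ empty, and both are instructed to move left by \textsc{ComputeDestination2()} --- producing again two leftmost robots symmetric about an empty horizontal axis, with no progress. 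The paper closes this hole by showing the tie cannot occur: if $d_{r_1}=d_{r_2}$ and $\mathbb{C}'$ is symmetric about the empty line $\mathcal{K}$, then $\mathbb{C}(T_1)$ is symmetric about empty $\mathcal{K}$, and since $r_1,r_2$ reached $\mathcal{L}$ by equal leftward displacements from the initial leftmost line, $\mathbb{C}(0)$ was already symmetric about an empty horizontal line --- contradicting the standing assumption under which the algorithm operates (Theorem \ref{thm: theory}). This is the one place in the whole proof where that hypothesis is consumed, and your argument never invokes it.

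Two smaller inaccuracies. First, the guard ``no robots in $\mathcal{H}_B^O(r)$'' does not ``single out the bottommost full-view robot'': it is evaluated in each candidate's local frame after its $Y$-axis is oriented toward its partner (line \ref{code: set y axis}), so both candidates can satisfy it simultaneously when $d_{r_1}=d_{r_2}$; the paper's Case~2 explicitly analyses this double-firing (one moves left, one moves right, and visibility on $\mathcal{L}'$ is preserved because everything stays inside $\mathcal{H}^O(r_1,r_2)$). Second, in the populated-axis case with $d_{r_1}=d_{r_2}$ your ``progress measure'' sketch does not by itself yield equidistance at the moment both lights turn \texttt{symmetry}; the paper gets this from the fact that both candidates stop at the same discrete sequence of distances $D,2D,4D,\ldots$ from $\mathcal{K}$ and, by the symmetry of $\mathbb{C}'$, acquire the full view at the same index $k$, with any desynchronisation collapsing into the unequal-distance sub-case. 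These two points are repairable with the mechanisms you already cite, but the empty-$\mathcal{K}$ tie is a missing idea, not a missing detail.
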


\begin{proof}
 
 Let $r_1$ and $r_2$ be the two robots with light set to \texttt{candidate}. Let $\mathcal{L}$, $\mathcal{L}'$, $\mathcal{K}$ and $\mathbb{C}'$ be as defined in Section \ref{sec: p2}. 
 
 First consider the case where $r_1$ and $r_2$ are in the same closed half-plane delimited by $\mathcal{K}$. As long as they are on $\mathcal{L}$, all other robots will remain stable by Lemma \ref{lemma: phase 2a}. Let $d_{r_2} > d_{r_1}$. Then $r_1$ will remain stationary, and $r_2$ will move left by the distance computed by \textsc{ComputeDestination2()}. When $r_2$ leaves $\mathcal{L}$, we are back to phase 1.
 By Lemma \ref{lemma: phase 2b}, all robots in $ \mathcal{R} \setminus \{r_2\}$ will remain stable during the movement of $r_2$. Now in the next snapshot, $r_2$ will find $r_1$ in right with light set to \texttt{candidate} and hence, it will change its light to \texttt{off}. Then $r_1$ will also change its light to \texttt{off}. So $r_2$ will find itself eligible to become leader, and hence by Lemma \ref{lemma eligible to leader}, we shall obtain a leader configuration at time $T_2 > T_1$.

Now consider the case where $r_1$ and $r_2$ are not in the same closed half-plane delimited by $\mathcal{K}$. First consider the case where  $\lambda'(r_1) \neq \lambda'(r_2)$, say $\lambda'(r_1) \prec \lambda'(r_2)$. Both $r_1$ and $r_2$ can determine this as they can see all the robots on $\mathcal{L}'$. Therefore, $r_1$ will decide to move left. Again by Lemma \ref{lemma: phase 2b}, all robots in $ \mathcal{R} \setminus \{r_1\}$ will remain stable during the movement of $r_1$. As before, we shall obtain a leader configuration at time $T_2 > T_1$. So now consider the cases when $\lambda'(r_1) = \lambda'(r_2)$.
 
 \textbf{Case 1:} $\mathbb{C}'$ is symmetric with respect to $\mathcal{K}$ and there are no robots on it.  We claim that $d_{r_1} \neq d_{r_2}$. If $d_{r_1} = d_{r_2}$, then $\mathbb{C}(T_1)$ is symmetric with respect to $\mathcal{K}$ which contains no robots.  Clearly, $r_1$ and $r_2$ were two terminal leftmost robots in the initial configuration $\mathbb{C}(0)$. The candidate configuration $\mathbb{C}(T_1)$ is created when $r_1$ and $r_2$ have moved left by equal distances. Hence, it implies that $\mathbb{C}(0)$ was also symmetric with respect to $\mathcal{K}$ having no robots on $\mathcal{K}$. This contradicts our assumption regarding the initial configuration. So, assume that $d_{r_1} < d_{r_2}$. Then $r_1$ will remain stable and $r_2$ will move along $\mathcal{L}$ until the condition in line \ref{code: condition p2} of Algorithm \ref{algo:phase2} is satisfied.  It will then find that $\lambda(r_1) = \lambda(r_2)$ and $\mathcal{K}$ contains no robots. Then $r_2$ will decide to move left and as before, we shall obtain a leader configuration at time  $T_2 > T_1$.
 
 \textbf{Case 2:} $\mathbb{C}'$ is asymmetric with respect to $\mathcal{K}$. First assume that $d_{r_1} \neq d_{r_2}$, say $d_{r_1} < d_{r_2}$. Then $r_1$ will remain stable and $r_2$ will move along $\mathcal{L}$ until the condition in line \ref{code: condition p2} of Algorithm \ref{algo:phase2} is satisfied. If $\lambda(r_1) \succ \lambda(r_2)$, $r_2$ will decide to move left and as before, we shall obtain a leader configuration at time  $T_2 > T_1$. If $\lambda(r_1) \prec \lambda(r_2)$, $r_2$ will move right. Since all robots lie inside $\mathcal{H}_U^C(r_2)$, all robots in $\mathcal{L'}$ will be able to see both $r_1$ and $r_2$ during the movement of $r_2$. Similarly as before, we shall obtain a leader configuration at time $T_2 > T_1$ with $r_1$ as the leader. Now consider the remaining case when $d_{r_1} = d_{r_2}$ at time $t$. The adversary may activate $r_1$ and $r_2$ in such a manner (e.g. synchronously) so that both $r_1$ and $r_2$ move along $\mathcal{L}$ and they both decide to leave $\mathcal{L}$.  In that case, suppose $r_1$ moves  right and $r_2$ moves  left. Note that when they decided to move, the condition in line \ref{code: condition p2} of Algorithm \ref{algo:phase2} is satisfied for both of them. Hence, all the robots are inside $\mathcal{H}^O(r_1, r_2)$ during the movements of $r_1$ and $r_2$. Therefore all robots on $\mathcal{L'}$ will be able to see $r_1$ and $r_2$ during their movements. As before, we shall obtain a leader configuration at time  $T_2 > T_1$.
 
 \textbf{Case 3:} Let $\mathbb{C}'$ be symmetric with respect to $\mathcal{K}$ and there is at least one robot on it. We shall show that 1) both $r_1$ and $r_2$ will set their lights to \texttt{symmetry}, and 2) when they do so, they must be equidistant from $\mathcal{K}$.

  \textbf{Case 3A:} First assume that $d_{r_1} \neq d_{r_2}$, say $d_{r_1} < d_{r_2}$. Then $r_1$ will remain stable and $r_2$ will move along $\mathcal{L}$ until the condition in line \ref{code: condition p2} of Algorithm \ref{algo:phase2} is satisfied. Then $r_2$ changes its light to \texttt{symmetry}. When $r_1$ sees $r_2$ with light \texttt{symmetry}, $r_1$ will move $d_{r_2}- d_{r_1}$ distance away from $r_2$. In the next LCM cycle, $r_1$ will change its light to \texttt{symmetry}, say at time $t$. By Lemma \ref{lemma: phase 2a}, all robots in $\mathbb{C}'$ remain stable in $[T_1,t]$.

  \textbf{Case 3B:} Next assume that initially $d_{r_1} = d_{r_2}= D$. Therefore, the adversary can make both $r_1$ and $r_2$ move. By Lemma \ref{lemma: phase 2a},  all robots in $\mathbb{C}'$ will remain stable until they both change their lights to \texttt{symmetry}.  Let $\mathcal{H}_1$ and $\mathcal{H}_2$ be the open half-planes delimited by $\mathcal{K}$ such that $r_1 \in \mathcal{H}_1$ and $r_2 \in \mathcal{H}_2$. For $i = 1, 2,$ let $p_{i,0}, p_{i,1}, p_{i,2}, \ldots$ be the points on $\mathcal{L} \cap \mathcal{H}_i$ at distances $D, 2^{1}D, 2^{2}D, \ldots$ from $\mathcal{K}$. Clearly, if $r_i$ moves according to Algorithm \ref{algo:phase2}, it stops at these points and take snapshots.  Let $k \geq 0$ be the smallest non-negative integer such that when $r_1$ reaches $p_{1,k}$, it will find that the conditions in line \ref{code: condition p2} of Algorithm \ref{algo:phase2} are satisfied. Due to the symmetry of $\mathbb{C}'$, $k$ is also the smallest non-negative integer such that $r_2$ will find the conditions in line \ref{code: condition p2} of Algorithm \ref{algo:phase2} to be satisfied at $p_{2,k}$. If $k=0$, there is nothing to prove as $r_1$ and $r_2$ will change their colors to \textit{symmetry} at their starting positions $p_{1,0}$ and $p_{2,0}$ respectively. So, let $k > 0$. Notice, that if any one of $r_i$, say $r_1$, takes snapshot at $p_{1,j}$, $j<k$ and finds $r_2$ at a distance greater than $2^{j}D$ from $\mathcal{K}$, then the situation reduces to case 3A. Otherwise, $r_1$ and $r_2$ will reach $p_{1,k}$ and $p_{2,k}$ respectively and change their colors to \texttt{symmetry}.


 Therefore, we have shown that at some time $t > T_1$, $r_1$ and $r_2$ are equidistant from $\mathcal{K}$ and have their lights set to \texttt{symmetry}. Suppose $r_3$ is the leftmost robot on $\mathcal{K}$. When $r_3$ sees the two robots with light \texttt{symmetry} in $\mathcal{H}_L^O(r_3)$, it will move to the left of $\mathcal{L}$. When $r_1$ and $r_2$ find $r_3$ on their left, they will change their lights to \texttt{off}. Hence, eventually $r_3$ will find itself eligible to become leader, and then by Lemma \ref{lemma eligible to leader}, we shall obtain a leader configuration at time $T_2 > t > T_1$.

   \end{proof}

\begin{theorem}
 For any initial configuration $\mathbb{C}(0)$, $\exists~\ T_2 > 0$ such that $\mathbb{C}(T_2)$ is a leader configuration.
\end{theorem}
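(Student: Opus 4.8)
The plan is to combine the two preceding structural theorems into a single statement about reaching a leader configuration from any starting point. By Theorem \ref{thm: t1}, for any initial configuration $\mathbb{C}(0)$ there exists a time $T_1 > 0$ such that $\mathbb{C}(T_1)$ is either a leader configuration or a candidate configuration. This gives a natural case split that I would carry out as the skeleton of the proof.

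First I would handle the case where $\mathbb{C}(T_1)$ is already a leader configuration. In this case the conclusion is immediate: we simply set $T_2 = T_1 > 0$, and $\mathbb{C}(T_2)$ is a leader configuration by hypothesis. Second, I would handle the case where $\mathbb{C}(T_1)$ is a candidate configuration. Here I would invoke Theorem \ref{thm: t1t2}, which asserts precisely that if $\mathbb{C}(T_1)$ is a candidate configuration, then there exists $T_2 > T_1$ such that $\mathbb{C}(T_2)$ is a leader configuration. Since $T_1 > 0$, we have $T_2 > T_1 > 0$, so in particular $T_2 > 0$, and the conclusion holds in this case as well.

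Since these two cases are exhaustive by Theorem \ref{thm: t1}, the statement follows in all cases. I do not expect any genuine obstacle here, as this theorem is essentially a corollary that stitches together Theorem \ref{thm: t1} and Theorem \ref{thm: t1t2}; all of the substantive work---the correctness of Phase 1 establishing Theorem \ref{thm: t1}, and the correctness of Phase 2 establishing Theorem \ref{thm: t1t2} through its case analysis on the symmetry of $\mathbb{C}'$ and the positions of $r_1, r_2$ relative to $\mathcal{K}$---has already been discharged in the earlier results. The only point requiring minor care is simply tracking that all the time bounds are strictly positive, which follows trivially from $T_1 > 0$ and the strict inequality $T_2 > T_1$ supplied by Theorem \ref{thm: t1t2}.
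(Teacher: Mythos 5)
Your proof is correct and is essentially identical to the paper's: both apply Theorem \ref{thm: t1} to obtain $T_1$, take $T_2 = T_1$ in the leader-configuration case, and invoke Theorem \ref{thm: t1t2} in the candidate-configuration case. No issues.
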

 
 \begin{proof}
  By Theorem \ref{thm: t1},  $\exists~\ T_1 > 0$ such that $\mathbb{C}(T_1)$ is a leader configuration or candidate configuration. If $\mathbb{C}(T_1)$ is a leader configuration, then $T_2 = T_1$. By Theorem \ref{thm: t1t2}, if $\mathbb{C}(T_1)$ is a candidate configuration, then $\exists~\ T_2 > T_1$ such that $\mathbb{C}(T_2)$ is a leader configuration.  
 \end{proof}

A stable configuration of $n$ robots is called an \emph{agreement configuration} if there are two robots $r_l$ and $r_u$ such that
\begin{enumerate}
 \item $r_l.light =$ \texttt{leader} and $r.light =$ \texttt{off} for all $r \in \mathcal{R} \setminus \{r_l\}$
 
 \item  $r \in \mathcal{H}_U^O(r_l)$ for all $r \in \mathcal{R} \setminus \{r_l\}$
 
 \item there is exactly one robot $r_u \in \mathcal{R} \setminus \{r_l\}$ that is on the same vertical line with $r_l$
 
 \item $r \in \mathcal{H}_U^C(r_u)$ for all $r \in \mathcal{R} \setminus \{r_l, r_u\}$. 
\end{enumerate}

In the proofs of the following three lemmas, $r_1, r_2, \ldots, r_{n-1}$ and  $s_0, s_1, \ldots, s_{n-1}$ will denote the non-leader robots and the target points, as described in Section \ref{sec:election} (see Fig. \ref{fig: stage2_robot_order} and \ref{fig: stage2_target0}).

\begin{lemma}\label{lemma: leader to agreement}
 There exists $T_3 > T_2$ such that $\mathbb{C}(T_3)$ is an agreement configuration.
\end{lemma}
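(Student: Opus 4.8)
The plan is to show that, starting from the leader configuration $\mathbb{C}(T_2)$, exactly one robot moves, namely the bottommost (and leftmost, in case of a tie) non-leader robot $r_1$, which slides horizontally left onto $\mathcal{L}_V(r_l)$ to become $r_u$, while every other robot stays put; the instant $r_1$ lands on $\mathcal{L}_V(r_l)$ is the desired time $T_3$.

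First I would pin down $r_1$ and verify it is the unique robot whose \textsc{Look} at $\mathbb{C}(T_2)$ triggers a move. Since $\mathbb{C}(T_2)$ is a leader configuration, $r_l$ is strictly below and strictly left of every other robot, all of which carry light \texttt{off}; in particular no robot carries \texttt{done} and no non-leader sits on $\mathcal{L}_V(r_l)$. Because $r_1$ is bottommost, the open segment $r_1 r_l$ has all interior points strictly below $y(r_1)$, so nothing occludes it: $r_1$ sees $r_l$ and enters the $r_l\in\mathcal{H}_B^O(r)\cap\mathcal{H}_L^O(r)$ branch of Algorithm \ref{main_algorithm: stage 2}. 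There it satisfies the condition of line~\ref{code: condition stage2} (nothing lies below it in $\mathcal{H}_B^O(r_1)\cap\mathcal{H}_U^O(r_l)\cap\mathcal{H}_R^O(r_l)$, and it is leftmost on $\mathcal{L}_H(r_1)\cap\mathcal{H}_R^O(r_l)$), and since $\mathcal{L}_V(r_l)$ carries no non-leader robot it takes line~\ref{code: unit condition} and is sent to $p=\mathcal{L}_H(r_1)\cap\mathcal{L}_V(r_l)$. As $r_1$ is leftmost on its horizontal line and the strip between it and $\mathcal{L}_V(r_l)$ is empty, this rigid horizontal move reaches $p$ without collision.

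The crux, and the step I expect to be the main obstacle, is verifying that \emph{no other} robot executes a non-null action during $[T_2,T_3]$, which must be argued against the asynchronous scheduler and obstructed visibility. I would show that any snapshot taken by a robot $r\neq r_1,r_l$ in this interval yields a null action, splitting on whether $r$ sees $r_l$. If $r$ sees $r_l$, it runs Algorithm \ref{main_algorithm: stage 2}; while $r_1$ is strictly right of $\mathcal{L}_V(r_l)$ it keeps its height and remains the unique bottommost-leftmost robot, so every $r$ strictly above $r_1$ has $r_1$ in $\mathcal{H}_B^O(r)$ inside the region (failing the ``no robots below'' test of line~\ref{code: condition stage2}) and every $r$ sharing $r_1$'s horizontal line still has a robot to its left (failing ``leftmost''), while the empty $\mathcal{L}_V(r_l)$ forbids the alternative $(\Psi(k+2),-1)$ branch; the two remaining top-level branches are unreachable because $r_l$ stays strictly below-left of $r$. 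If $r$ cannot see $r_l$, some robot $r^\ast$ lies on the open segment $rr_l$ and is hence strictly left of $r$ with light \texttt{off}; running Stage~1, this makes ``no robots in $\mathcal{H}_L^C(r)$'' fail (no \textsc{BecomeLeader}) and makes \textsc{LeftMostTerminal} return \texttt{False} (its left half-plane is nonempty and the visible left robot is \texttt{off}, not \texttt{candidate}), so $r$ idles; the leader $r_l$ also idles since \texttt{off}-robots still exist.

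Finally I would confirm that $\mathbb{C}(T_3)$ matches the definition. Once $r_1$ reaches $\mathcal{L}_V(r_l)$, a robot sits on that line, so any further \textsc{Look} by $r_1$ takes the $r_l\in\mathcal{L}_V(r)$ branch and idles because \texttt{off}-robots remain in $\mathcal{H}_U^C(r_1)$; thus $r_1=r_u$ is stationary with no pending move, and by the interval analysis every other robot is likewise stable, so $\mathbb{C}(T_3)$ is a stable configuration. The four defining properties then follow: lights are \texttt{leader} for $r_l$ and \texttt{off} otherwise; all non-leaders remain in $\mathcal{H}_U^O(r_l)$; $r_u$ is the unique non-leader on $\mathcal{L}_V(r_l)$; and since $r_1$ was bottommost and only translated horizontally, every robot other than $r_l,r_u$ lies in $\mathcal{H}_U^C(r_u)$. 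Hence $\mathbb{C}(T_3)$ is an agreement configuration with $T_3>T_2$.
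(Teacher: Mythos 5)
Your proposal is correct and follows essentially the same route as the paper's (much terser) proof: the unique robot triggered to move is the bottommost non-leader $r_1$, which translates horizontally onto $\mathcal{L}_V(r_l)$, and the stability of every other robot is established by the same case split on whether that robot can see $r_l$ (Stage~2 with line~\ref{code: condition stage2} failing, versus Stage~1 idling because an \texttt{off}-robot lies to its left). You supply more detail than the paper does — in particular the visibility of $r_l$ from $r_1$, collision-freeness of the move, and the verification of the four defining properties at $T_3$ — all of which checks out.
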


\begin{proof}

 Clearly, $r_1$ will move horizontally to place itself on the line $\mathcal{L}$ (line \ref{code: condition stage2} and \ref{code: unit condition} of Algorithm \ref{main_algorithm: stage 2}). We show that during the movement of $r_1$, all other robots will be stable. Consider any robot $r \in \mathcal{R} \setminus \{r_l,r_1\}$. 
 If it can see $r_l$, it decides that it is in stage 2. It will do nothing as clearly the condition in line \ref{code: condition stage2} of Algorithm \ref{main_algorithm: stage 2} is not satisfied. If it can not see $r_l$, it decides that it is in stage 1. It is easy to see by inspecting the algorithm for stage 1, that it will remain stable. Hence, we have an agreement configuration at  some $T_3 > T_2$. 
\end{proof}


A stable configuration of $n$ robots is called an \emph{L-configuration} if there are two robots $r_l$ and $r_u$ such that
\begin{enumerate}
 \item $r_l.light =$ \texttt{leader} and $r.light =$ \texttt{off} for all $r \in \mathcal{R} \setminus \{r_l\}$
 
 \item  there is exactly one robot $r_u \in \mathcal{R} \setminus \{r_l\}$ that is on the same vertical line with $r_l$

 \item all robots in $\mathcal{R} \setminus \{r_l, r_u\}$ are on $\mathcal{L}_H(r_l) \cap \mathcal{H}_R^O(r_l)$ at points $(\Psi(2), -1), (\Psi(3), -1),\\ \ldots, (\Psi(n-1), -1)$ in the agreed coordinate system. 
\end{enumerate}

\begin{lemma}\label{lemma: agreement to L}
 There exists $T_4 > T_3$ such that $\mathbb{C}(T_4)$ is an L-configuration.
\end{lemma}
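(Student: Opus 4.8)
The plan is to prove the lemma by induction on $i$, showing that starting from the agreement configuration $\mathbb{C}(T_3)$ the non-leader robots migrate onto $\mathcal{L}_H(r_l)$ strictly in the order $r_2, r_3, \ldots, r_{n-1}$ fixed in Section \ref{sec:election}, that each $r_i$ settles exactly at the point $p_i = (\Psi(i),-1)$, and that immediately after the $i$-th crossing the configuration is stable with $r_2, \ldots, r_i$ correctly placed (all with light \texttt{off}) and every other robot untouched. Instantiating this claim at $i=n-1$ gives the L-configuration at the time $T_4$ when the last robot comes to rest, since $r_l$ still carries \texttt{leader}, $r_u$ is the unique robot on $\mathcal{L}_V(r_l)$, and $r_2,\ldots,r_{n-1}$ occupy $(\Psi(2),-1),\ldots,(\Psi(n-1),-1)$.

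For the inductive step I would first argue that among the off-robots still lying in the active region $\mathcal{H}_U^O(r_l)\cap\mathcal{H}_R^O(r_l)$, only the bottommost and leftmost one — namely $r_i$ — can satisfy the guard at line \ref{code: condition stage2}. The clause ``no robots in $\mathcal{H}_B^O(r)\cap\mathcal{H}_U^O(r_l)\cap\mathcal{H}_R^O(r_l)$'' rules out any robot with another unplaced robot strictly beneath it, and the clause ``$r$ is the leftmost robot on $\mathcal{L}_H(r)\cap\mathcal{H}_R^O(r_l)$'' rules out any robot sharing its horizontal line with an unplaced robot to its left. The observation that makes the accounting work is that a robot already sitting on $\mathcal{L}_H(r_l)$ lies on the \emph{boundary} and hence is not in the open half-plane $\mathcal{H}_U^O(r_l)$, so placed robots are invisible to these guards. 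Consequently, when $r_i$ is activated the off-robots it reads on $\mathcal{L}_H(r_l)$ are exactly the $i-2$ already-placed robots $r_2,\ldots,r_{i-1}$ (the leader is excluded, being \texttt{leader}), so $k=i-2$ and the computed destination is $(\Psi(i),-1)=p_i$.

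Next I would verify that $r_i$ can read off the agreed coordinate system and that its move is collisionless. Since $r_i\in\mathcal{H}_U^C(r_u)$ we have $y_{r_i}\ge 0$, while every placed robot sits at height $-1$ and every unplaced robot lies weakly above $r_i$ (and, if at the same height, to its right). The segment from $r_i$ to $r_l$ touches the line $y=-1$ only at $r_l$, the segment from $r_i$ to $r_u$ stays at height $\ge 0$, and both segments run leftward and remain weakly below $y_{r_i}$, touching that height only at $r_i$; hence neither segment is obstructed and $r_i$ sees both $r_l$ and $r_u$, so it can locate $p_i$. For collisions, the decisive point is that the travel segment from $(x_{r_i},y_{r_i})$ down to $(\Psi(i),-1)$ has strictly decreasing height, so after leaving its start it stays \emph{strictly} below $y_{r_i}$: it therefore misses every robot above or on $r_i$'s horizontal line, meets $y=-1$ only at $p_i$, and because $\Psi(2)<\Psi(3)<\cdots<\Psi(n-1)$ the point $p_i$ is distinct from all previously occupied $p_2,\ldots,p_{i-1}$.

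Finally I would show that no robot other than $r_i$ acts while $r_i$ is in transit, which is where the asynchronous scheduler is the main obstacle, as a robot may snapshot $r_i$ in mid-flight. The argument is that as long as $r_i$ is in the active region (height $>-1$) it lies strictly below every still-unplaced robot above it, tripping the first clause of line \ref{code: condition stage2} for those robots; a same-height unplaced robot to the right of $r_i$ is instead blocked by the leftmost-on-line clause, and the instant $r_i$ starts to descend it falls below such a robot and again trips the first clause — so there is never a window in which two robots are simultaneously eligible, and each crossing completes before the next begins. The already-placed robots and $r_u$ stay put because off-robots still remain above them, so the guards at lines \ref{code: final condition1} and \ref{code: final condition3}, as well as the \texttt{done}-promotion at line \ref{code: unit off to done} (no robot is yet \texttt{done}), all remain inactive throughout this stage. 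Combining these facts, the induction goes through and $\mathbb{C}(T_4)$ is an L-configuration.
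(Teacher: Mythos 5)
Your proof is correct and takes essentially the same route as the paper's: the guard at line \ref{code: condition stage2} forces $r_2,\ldots,r_{n-1}$ to move one at a time in the stated order, each $r_i$ sees $r_l$ and $r_u$ together with the $i-2$ already-placed \texttt{off}-robots on $\mathcal{L}_H(r_l)$ and hence targets $(\Psi(i),-1)$, and every other robot remains stable throughout. The paper asserts these facts tersely (``clearly'', ``it is easy to see''), whereas you have supplied the visibility, collision-freedom and mutual-exclusion details it leaves implicit; no substantive difference in approach.
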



\begin{proof}

  Clearly, $r_2$ will first decide to move to $(\Psi(2), -1)$. The condition in line \ref{code: condition stage2} of Algorithm \ref{main_algorithm: stage 2} guarantees that for $i > 2$, $r_i$ will start moving after $r_2, r_3, \ldots, r_{i-1}$ have completed their moves. When this condition is satisfied for $r_i$, it can see both $r_l$ and $r_u$, and hence can determine the point with coordinates $(\Psi(2), -1), (\Psi(3), -1), \ldots, (\Psi(n-1), -1)$. Clearly, $r_i$ will find $i-1$ robots on $\mathcal{L}_H(r_l)$ and will decide to move to $(\Psi(i), -1)$. Again, it is easy to see that all other robots will be stable during its movement. Therefore at some time  $T_4 > T_3$, an L-configuration is created. 
\end{proof}

\begin{lemma}\label{lemma: L to pattern}
 There exists $T_5 > T_4$ such that $\mathbb{C}(T_5)$ is a final configuration similar to the given pattern and has all robots with light set to \texttt{done}.
\end{lemma}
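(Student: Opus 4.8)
The plan is to drive the L-configuration of time $T_4$ to the target pattern in three sequential stages that mirror the order $r_2,\dots,r_{n-1},\,r_u,\,r_l$ in which Algorithm~\ref{main_algorithm: stage 2} releases the robots. Recall that in the agreed coordinate system $r_l$ sits at $(-1,-1)$, $r_u=r_1$ at $(-1,0)$, and $r_2,\dots,r_{n-1}$ occupy the points $p_i=(\Psi(i),-1)$ on $\mathcal{L}_H(r_l)$, while the target points $s_0,\dots,s_{n-1}$ all lie in $\mathcal{H}_U^O(r_l)\cap\mathcal{H}_R^O(r_l)$. Stage A shows that $r_2,\dots,r_{n-1}$ reach $s_2,\dots,s_{n-1}$ one at a time; Stage B shows that $r_u$ then reaches $s_1$ and turns \texttt{done}; Stage C shows that $r_l$ finally reaches $s_0$ and turns \texttt{done}. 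For each stage I must verify three things: the moving robot correctly identifies its destination, every other robot stays stable throughout the move despite the asynchronous scheduler, and no collision occurs.

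For Stage A I would induct on $i$. The branch guarded by line~\ref{code: final condition1} stays enabled for every $r_i$ on $\mathcal{L}_H(r_l)$ throughout this stage: $r_u$ keeps a robot on $\mathcal{L}_V(r_l)$, and no \texttt{off} robot ever sits in $\mathcal{H}_U^O(r_i)\cap\mathcal{H}_R^O(r_l)$, because robots that have already moved carry light \texttt{done}, robots that have not still lie on $y=-1$, and $r_u$ is on $\mathcal{L}_V(r_l)$ rather than strictly to its right. Hence the only effective gate on $r_i$ is \textsc{PartialFormation}$(i)$, which holds vacuously for $i=2$ and otherwise demands a \texttt{done} robot precisely at $s_{i-1}$ (line~\ref{code: final condition2}). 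Because movements are rigid, $r_{i-1}$ occupies $s_{i-1}$ only at the instant its move completes, so $r_i$ cannot be fooled into moving while $r_{i-1}$ is still in transit; this is the crucial point that forces the strict order under an asynchronous scheduler and obsolete snapshots. When $r_i$ moves it sees both $r_l$ and $r_u$, reconstructs the agreed coordinate system, recognises its position as $(\Psi(i),-1)$, sets its light to \texttt{done}, and moves to $s_i$. Collision-freeness follows from the construction of $\Psi$: it is strictly increasing in the lexicographic order and each offset is below $\epsilon$, so the path of $r_i$ stays in the strip $[x_i,\Psi(i)]$, to the right of every target already placed on the line $x=x_i$ and to the left of every robot still waiting on $y=-1$, meeting no occupied point.

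For Stage B, $r_u$ executes the branch of line~\ref{code: final condition3}, whose guard ($\mathcal{H}_B^O(r_u)$ empty except for $r_l$, and no other \texttt{off} robot in $\mathcal{H}_U^C(r_u)$) cannot hold while any $r_i$ remains on $y=-1$; thus $r_u$ is pinned until Stage A finishes, after which it moves to $s_1$ and, in its next cycle, turns \texttt{done} via line~\ref{code: unit off to done}. During this move only $r_u$ is active, since the \texttt{done} robots never act and $r_l$ still sees the \texttt{off} robot $r_u$. For Stage C, $r_l$ acts only once no \texttt{off} robot remains, which is exactly when $r_u$ has become \texttt{done}. As $r_u$ has vacated $\mathcal{L}_V(r_l)$, $r_l$ can no longer read the unit distance directly; I would instead argue that the leftmost, bottommost robot it sees is the one at $s_1$ (this holds because $s_1$ is lexicographically least among the occupied targets, so the segment joining it to $r_l$ is unobstructed), whence $r_l$ reconstructs the agreed coordinate system from this robot's coordinates $\mathbb{P}[1]$ and its own $(-1,-1)$, locates $s_0=\mathbb{P}[0]$, turns \texttt{done}, and moves there; a short check shows this last move is collision-free as well.

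At the resulting time $T_5$ the robots occupy $s_0,\dots,s_{n-1}$, a similarity copy of $\mathbb{P}$, with every light set to \texttt{done}; since no branch of Algorithm~\ref{main_algorithm: stage 2} is enabled for a \texttt{done} robot, $\mathbb{C}(T_5)$ is final, which is the claim. The hard part is not any single computation but the asynchronous bookkeeping of Stages A and B, where the bulk of the argument lies in ruling out, against an adversarial scheduler and mid-move observations, any out-of-order or premature move. The two levers that make this go through are the \texttt{done}-at-$s_{i-1}$ test of \textsc{PartialFormation} combined with rigidity, and the $\mathcal{H}_B^O(r_u)$ guard that holds $r_u$ until the line $y=-1$ is cleared; the collision analysis, by comparison, reduces to the monotonicity and sub-$\epsilon$ offset properties of $\Psi$ and is routine.
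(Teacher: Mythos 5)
Your proposal is correct and follows essentially the same route as the paper's proof: the same sequential decomposition ($r_2,\dots,r_{n-1}$ to the targets via the \textsc{PartialFormation} gate, then $r_u$ to $s_1$, then $r_l$ locating $s_0$ from the robot at $s_1$), with the same justifications for why each guard releases exactly one robot at a time. You add a slightly more explicit collision analysis via the properties of $\Psi$ and omit only the paper's small visibility check that the segment from $(\Psi(i),-1)$ to $s_{i-1}$ avoids $s_2,\dots,s_{i-2}$ (so that $r_i$ actually sees the \texttt{done} robot at $s_{i-1}$), but neither difference changes the substance of the argument.
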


\begin{proof}

 First we show that the robots $r_2, r_3, \ldots, r_{n-1}$ will sequentially move to  $s_2, s_3, \ldots, s_{n-1}$. When $r_2$ takes the first snapshot, it finds the condition in line \ref{code: final condition1} of Algorithm \ref{main_algorithm: stage 2} is satisfied. So, it changes its light to \texttt{done} and moves to $s_2$. For $i > 2$, $r_i$ clearly does not move until $r_{i-1}$ reaches $s_{i-1}$ (see line \ref{code: final condition2} of Algorithm \ref{main_algorithm: stage 2}). We show that after $r_{i-1}$ reaches $s_{i-1}$, the conditions in line \ref{code: final condition1} and line \ref{code: final condition1b} of Algorithm \ref{main_algorithm: stage 2} are satisfied for $r_i$. This is because of the following.
 \begin{enumerate}
 
  \item $r_i$ can see $r_l$ on $\mathcal{L}_H(r_i)$ as $r_2, \ldots, r_{i-1}$ have already moved.

  \item $r_i$ can see $r_u$ as all $r_2, \ldots, r_{i-1}$ are in $\mathcal{H}^C_U(r_u)$ (because all entries of $\mathbb{P}$ are from $\mathbb{R}^2_{\geq 0}$).
    
  \item Obviously there are no robots with light \texttt{off} in $\mathcal{H}_U^O(r_i) \cap \mathcal{H}_R^O(r_l)$.
  
  \item Since $r_i$ can see both $r_l$ and $r_u$, it can compute its position in the global coordinate system, and should find it to be $(\Psi(i), -1)$.
  
  \item It is easy to see the line segment joining $s_{i-1}$ and $(\Psi(i), -1)$ does not pass through any of $s_2, \ldots, s_{i-2}$. Hence, $r_i$ can see  $r_{i-1}$ at $s_{i-1}$ as it is not obstructed by $r_2, \ldots, r_{i-2}$. Hence, it will find $r_{i-1}$ with light \texttt{done} at $s_{i-1}$.
  
\end{enumerate}

 So, $r_i$ will change its light to \texttt{done} and move to $s_i$. Therefore, the robots $r_2, r_3, \ldots, r_{n-1}$ will sequentially move to  $s_2, s_3, \ldots, s_{n-1}$. Then $r_u$ will find that the condition in line \ref{code: final condition3} of Algorithm \ref{main_algorithm: stage 2} is satisfied and hence, will move to $s_1$ (without changing its light). In the next LCM-cycle, it will change its color to \texttt{done} (see line \ref{code: unit off to done} in Algorithm \ref{main_algorithm: stage 2}). When $r_l$ sees no robots with light set to \texttt{off}, it changes it light to \texttt{done} and moves to $s_0$. It is clear from the discussions at the end of Section \ref{sec:election} that $r_l$ is able to determine the point $s_{0}$ in the plane. Hence, the given pattern is formed at some time $T_5 > T_4$ with all the robots having their lights set to \texttt{done}. The configuration $\mathbb{C}(T_5)$ is final as robots with color \texttt{done} do not do anything. 
 \end{proof}
 
 \begin{theorem}
  For the leader configuration $\mathbb{C}(T_2)$, there exists $T_5 > T_2$ such that $\mathbb{C}(T_5)$ is a final configuration similar to the given pattern and has all robots with light set to \texttt{done}.
 \end{theorem}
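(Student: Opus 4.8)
The plan is to prove this by simply chaining the three lemmas that immediately precede it, since together they trace out the full trajectory from a leader configuration to the final pattern. The statement of each lemma is calibrated so that its conclusion is exactly the hypothesis required by the next one: a \emph{leader configuration} leads to an \emph{agreement configuration} (Lemma \ref{lemma: leader to agreement}), which leads to an \emph{L-configuration} (Lemma \ref{lemma: agreement to L}), which finally leads to the target pattern (Lemma \ref{lemma: L to pattern}). So the whole argument is a telescoping of these three intermediate milestones.

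Concretely, I would begin from the given leader configuration $\mathbb{C}(T_2)$ and invoke Lemma \ref{lemma: leader to agreement} to obtain a time $T_3 > T_2$ at which $\mathbb{C}(T_3)$ is an agreement configuration. Here the non-leader robot $r_1$ has moved horizontally onto $\mathcal{L}_V(r_l)$ to become $r_u$, fixing the unit distance of the agreed coordinate system, while all other robots remained stable. Next, since $\mathbb{C}(T_3)$ satisfies exactly the premises of Lemma \ref{lemma: agreement to L}, I would apply that lemma to obtain $T_4 > T_3$ at which $\mathbb{C}(T_4)$ is an L-configuration, where the robots $r_2, \ldots, r_{n-1}$ have been sequentially relocated to the projected points $(\Psi(2),-1), \ldots, (\Psi(n-1),-1)$ on $\mathcal{L}_H(r_l)$. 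Finally, the L-configuration $\mathbb{C}(T_4)$ meets the hypothesis of Lemma \ref{lemma: L to pattern}, so applying it yields $T_5 > T_4$ at which $\mathbb{C}(T_5)$ is a final configuration similar to $\mathbb{P}$ with every robot's light set to \texttt{done}. Since $T_2 < T_3 < T_4 < T_5$, in particular $T_5 > T_2$, which is the desired conclusion.

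The main thing to verify — and the only place where anything could go wrong — is that each handoff is clean: that the configuration guaranteed by one lemma is literally an instance of the configuration class that the next lemma takes as input, and that no robot is mid-move or carrying a stale light at the transition instant. This is where the \emph{stability} built into the definitions of agreement configuration and L-configuration does the work, ensuring there is a well-defined snapshot at $T_3$ and $T_4$ from which the subsequent phase can be launched without collisions or race conditions inherited from the asynchronous scheduler. Because each intermediate configuration is stable and matches the required structural conditions by definition, there is essentially no residual obstacle: the theorem follows immediately from Lemmas \ref{lemma: leader to agreement}, \ref{lemma: agreement to L} and \ref{lemma: L to pattern}.
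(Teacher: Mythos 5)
Your proof is correct and matches the paper's own argument, which likewise derives the theorem by chaining Lemmas \ref{lemma: leader to agreement}, \ref{lemma: agreement to L} and \ref{lemma: L to pattern}; you have merely spelled out the handoffs between the three intermediate configurations more explicitly than the paper does.
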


 \begin{proof}
 Follows from Lemma \ref{lemma: leader to agreement}, \ref{lemma: agreement to L} and \ref{lemma: L to pattern}. 
 \end{proof}

\subsection{The Main Results}
From the results proved in the last subsection, we can conclude the following result.

\begin{theorem}
 For a set of opaque and luminous robots with one axis agreement, $\mathcal{APF}$ is deterministically solvable if and only if the initial configuration is not symmetric with respect to a line $\mathcal{K}$ such that 1) $\mathcal{K}$ is parallel to the agreed axis and 2) $\mathcal{K}$ is not passing through any robot. Six colors are sufficient to solve the problem from any solvable initial configuration.
 \end{theorem}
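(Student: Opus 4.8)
The plan is to prove the stated equivalence in its two directions and then read off the color count from the construction. The non-trivial content has already been established in the preceding lemmas and theorems, so the proof is essentially an assembly step together with one careful model-transfer observation.

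For the necessity direction I would argue the contrapositive: if $\mathbb{C}(0)$ is symmetric about a line $\mathcal{K}$ parallel to the agreed axis with no robot on $\mathcal{K}$, then $\mathcal{APF}$ is unsolvable. This is precisely Theorem \ref{thm: theory}. That theorem is phrased for the strongest model (\textsc{FSync} with full visibility), and the phrase ``even in \textsc{FSync} with full visibility'' is exactly what makes it applicable here: \textsc{FSync} is a special case of \textsc{ASync}, so the adversary in our setting can reproduce the synchronous activation used there, and full visibility only supplies the robots with more information. Concretely, the invariant driving that proof --- that two specular robots keep identical views and colors with oppositely oriented $Y$-axes, and hence make specular moves --- is preserved under obstruction, since the obstructed views of two mirror-placed robots in a mirror-symmetric configuration are themselves mirror images. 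Thus the symmetry is maintained in every round in the opaque \textsc{ASync} model as well, and no asymmetric pattern can be formed.

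For the sufficiency direction I would invoke the correctness results of Section \ref{sec: main result and proof}, all of which are established under the standing assumption, fixed at the start of Section \ref{sec: one axis}, that $\mathbb{C}(0)$ is \emph{not} symmetric about such a line $\mathcal{K}$. Combining Theorem \ref{thm: t1} and Theorem \ref{thm: t1t2} yields a time $T_2$ at which $\mathbb{C}(T_2)$ is a leader configuration; the non-symmetry hypothesis enters only in Case 1 of the proof of Theorem \ref{thm: t1t2}, where it rules out $d_{r_1} = d_{r_2}$ and hence the creation of a forbidden symmetry. From the leader configuration, Lemmas \ref{lemma: leader to agreement}, \ref{lemma: agreement to L} and \ref{lemma: L to pattern} produce a time $T_5 > T_2$ at which $\mathbb{C}(T_5)$ is a final configuration with every light set to \texttt{done} and with the robots placed exactly at the embedded target points $s_0, s_1, \ldots, s_{n-1}$ in the agreed coordinate system. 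This delivers the three defining conditions of $\mathcal{APF}$: $\mathbb{C}(T_5)$ is final, all lights carry the common color \texttt{done}, and $\mathbb{P}$ is obtainable from $\mathbb{C}(T_5)$ by translation, rotation, reflection and uniform scaling, since the target points are an embedding of $\mathbb{P}$. Collision-freeness is inherited from the sequential, pairwise-distinct destinations verified in those lemmas.

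Finally, the color bound is immediate: the whole algorithm manipulates only the six declared colors \texttt{off}, \texttt{terminal}, \texttt{candidate}, \texttt{symmetry}, \texttt{leader} and \texttt{done}, so six colors suffice from every solvable configuration. The one point deserving care --- and the only place where the argument is more than bookkeeping --- is the transfer of Theorem \ref{thm: theory} to the opaque \textsc{ASync} setting; once one observes that obstruction cannot break a symmetry that full visibility already fails to break, the remainder is a direct chaining of the results already proved.
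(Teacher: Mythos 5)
Your proposal is correct and follows essentially the same route as the paper, which likewise obtains this theorem by chaining Theorem \ref{thm: theory} (for necessity) with Theorems \ref{thm: t1} and \ref{thm: t1t2} and Lemmas \ref{lemma: leader to agreement}--\ref{lemma: L to pattern} (for sufficiency), and by counting the six declared colors. Your explicit justification of the model transfer of the impossibility result from \textsc{FSync} with full visibility to opaque \textsc{ASync} is a worthwhile addition that the paper leaves implicit, and your observation that the non-symmetry hypothesis is consumed precisely in Case 1 of the proof of Theorem \ref{thm: t1t2} is accurate.
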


 If the robots agree on the direction and orientation of both axes, then leader election is easy. If there is a unique leftmost robot $r$, then as before it will become leader (by executing \textsc{BecomeLeader()}). If there are multiple leftmost robots, then the bottommost one will move left. In the next snapshot, it will find itself eligible to become leader and start executing \textsc{BecomeLeader()}. Therefore, leader election is solvable using only 2 colors, namely \texttt{off} and \texttt{leader}. Then stage 2 will be executed by Algorithm \ref{main_algorithm: stage 2}. Hence, we have the following result. 
 
\begin{theorem}
 For a set of opaque and luminous robots with two axis agreement, $\mathcal{APF}$ is deterministically solvable from any initial configuration using three colors.
 \end{theorem}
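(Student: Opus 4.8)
The plan is to re-use as much of the one-axis machinery as possible, exploiting the fact that under two-axis agreement the impossibility obstruction of Theorem~\ref{thm: theory} disappears: since all robots share the orientation of the $Y$-axis, the adversary can no longer assign opposite $Y$-directions to specular partners, so the symmetry-based unsolvability never arises and \emph{every} initial configuration is solvable. It therefore suffices to (i) elect a leader deterministically from every configuration using only the colors \texttt{off} and \texttt{leader}, and (ii) invoke the already-established Stage~2 results to form the target pattern, which additionally uses \texttt{done}. The three colors \texttt{off}, \texttt{leader}, \texttt{done} then account for the whole algorithm.

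For leader election I would distinguish two cases. If $\mathbb{C}(0)$ has a unique leftmost robot $r$, then $r$ is alone on the leftmost vertical line; it immediately finds itself eligible to become leader and runs \textsc{BecomeLeader()}. Since its vertical line is clear, its downward motion is collisionless, and by Lemma~\ref{lemma eligible to leader} it reaches a configuration in which $\mathcal{H}_B^C(r)$ is empty and sets its light to \texttt{leader}, producing a leader configuration (its $x$-coordinate is unchanged, so it stays uniquely leftmost, while it now lies strictly below every other robot). If instead several leftmost robots lie on a vertical line $\mathcal{L}$, the bottommost among them, say $r$, moves horizontally left into the region strictly left of $\mathcal{L}$, which is empty because $\mathcal{L}$ is the leftmost occupied line; $r$ thereby becomes the unique leftmost robot and then proceeds exactly as in the first case. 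Throughout, every other robot keeps its light \texttt{off} and stays put, since it always perceives some robot in its closed left half-plane.

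The delicate point, as always, is asynchrony combined with obstructed visibility during $r$'s moves, and I would argue it along the lines of Lemmas~\ref{lemma p1 hard}--\ref{lemma p1 line}. While $r$ translates left it remains the leftmost robot, so any other robot taking a snapshot still sees a robot in $\mathcal{H}_L^C$ of itself (either $r$ directly, or, if $r$ is occluded, the obstructing robot lying on the segment towards $r$) and hence performs no action; and once $r$ has become the unique leftmost robot it is alone on its vertical line, so its subsequent downward motion in \textsc{BecomeLeader()} can neither collide with nor be obstructed by any robot, and no other robot is ever triggered because none is leftmost while $r$ is active. Persistence of $r$'s local coordinate system rules out livelock, precisely as in Lemma~\ref{lemma eligible to leader}. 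This is the only genuinely new bookkeeping the proof requires.

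Finally, from the leader configuration obtained above I would apply Lemmas~\ref{lemma: leader to agreement}, \ref{lemma: agreement to L} and \ref{lemma: L to pattern} verbatim: their proofs never use more than agreement on the $X$-axis and only the colors \texttt{off}, \texttt{leader}, \texttt{done}, so they yield at some time $T_5$ a final configuration similar to $\mathbb{P}$ with every light set to \texttt{done}. Collecting the colors used across both stages gives exactly \texttt{off}, \texttt{leader}, \texttt{done}, i.e.\ three colors, which completes the argument. I expect the pattern-formation stage to carry over with no modification; essentially all the work lies in the short asynchronous analysis of the leftward-then-downward leader move described above.
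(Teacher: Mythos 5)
Your proposal is correct and follows essentially the same route as the paper: elect the unique leftmost robot (or, if there are several leftmost robots, have the bottommost of them — well defined under two-axis agreement — step left to become uniquely leftmost) and run \textsc{BecomeLeader()}, using only \texttt{off} and \texttt{leader}, then reuse Stage~2 verbatim with \texttt{done}, for three colors in total. The paper states this argument only as a brief sketch, so your additional bookkeeping about asynchrony and occlusion during the leftward and downward moves is a harmless elaboration of the same idea rather than a different approach.
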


\section{Concluding Remarks}\label{sec:conclusion}
In this work, we have provided a full characterization of initial configurations from where $\mathcal{APF}$ is deterministically solvable in \textsc{ASync} for a system of opaque and luminous robots with one and two axis agreement. Our long term goal is to obtain a characterization of initial configurations from where $\mathcal{APF}$ is deterministically solvable in the obstructed visibility model without any agreement in coordinate system. This paper is the first step towards it. The next step could be to consider robots with only  chirality. Our algorithms require 3 and 6 colors respectively for two axis and one axis agreement. We have also shown that $\mathcal{APF}$ is deterministically unsolvable by oblivious robots even in two axis agreement. This means that we need at least 2 colors to solve the problem in both two axis and one axis agreement. An interesting open problem is to find the minimum number of colors to solve the problem.

 \begin{figure}[thb!]
\centering
\subcaptionbox[Short Subcaption]{
       \label{}
}
[
    0.35\textwidth 
]
{
    \fontsize{8pt}{8pt}\selectfont
    \def\svgwidth{0.35\textwidth}
    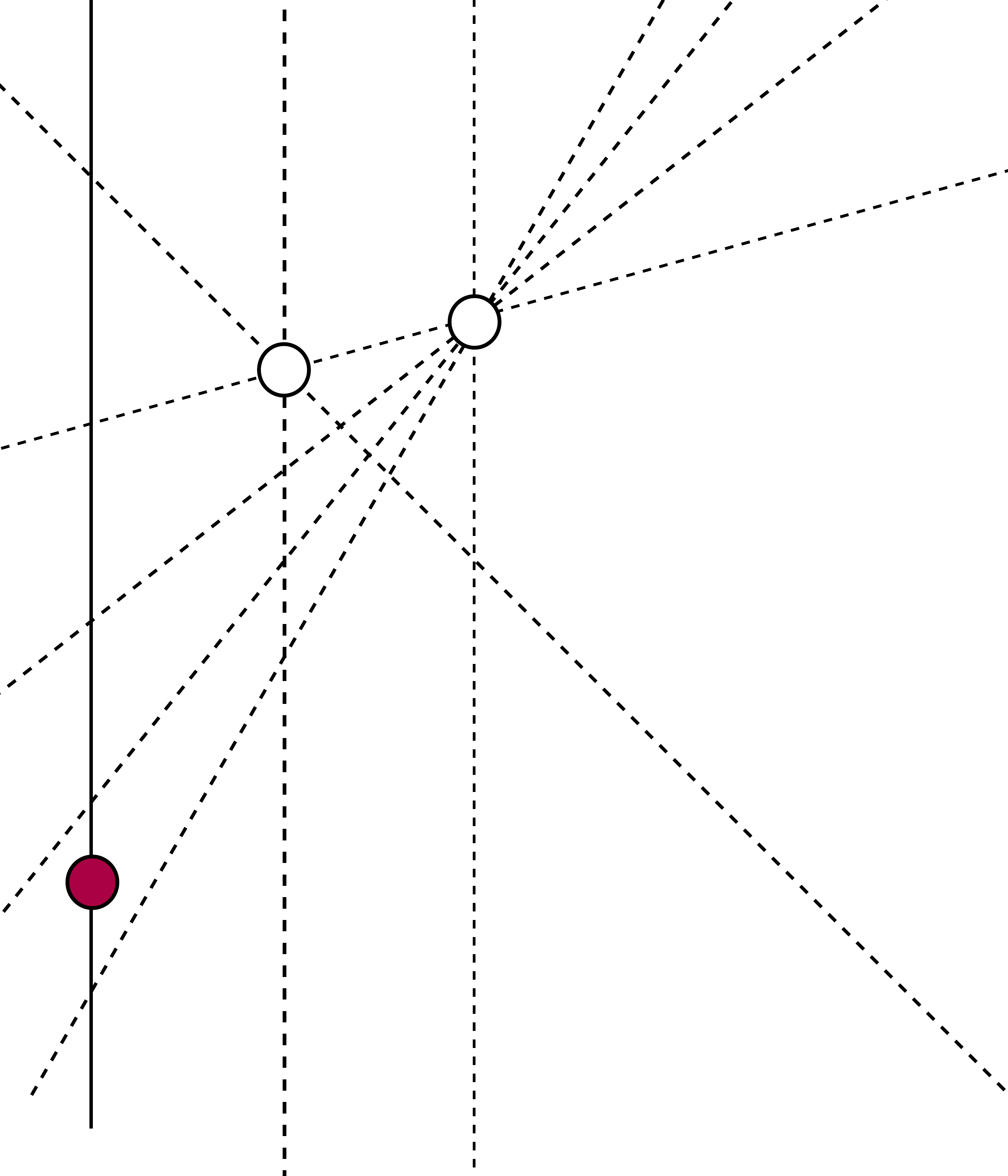
}
\hspace*{1cm}
\subcaptionbox[Short Subcaption]{
     \label{}
}
[
    0.35\textwidth 
]
{
    \fontsize{8pt}{8pt}\selectfont
    \def\svgwidth{0.35\textwidth}
    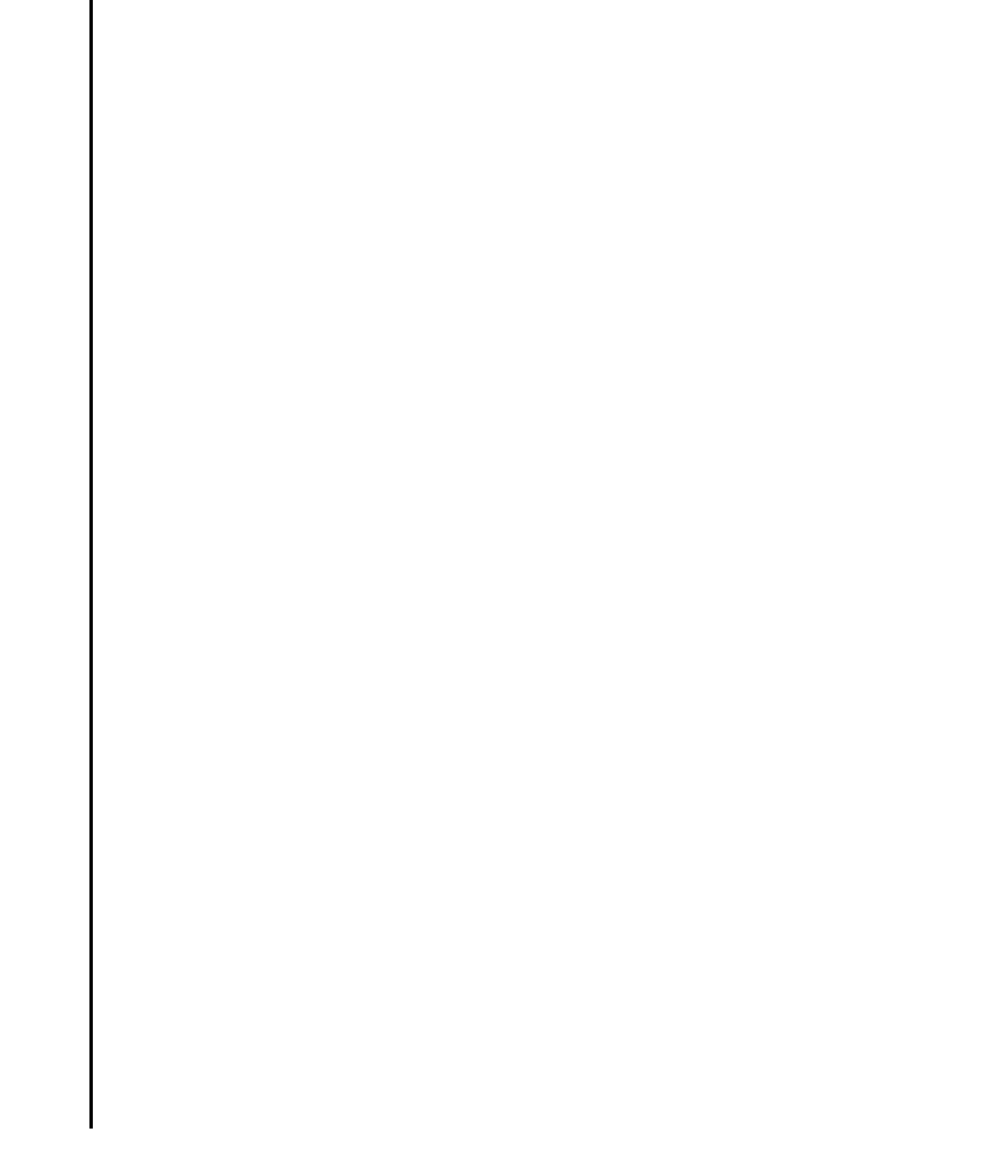
}

\caption[Short Caption]{a) A configuration of point robots. The small robots are to be understood as dimensionless points. There are finitely many points on $\mathcal{L}$ from where the full configuration can not be seen. b) A configuration of fat robots. There is no point on $\mathcal{L}$ from where the robot $r$ can be seen. In fact, unless the robots guarding $r$ move, it is impossible for $r_1, r_2$ to see $r$ no matter where they move to.}
\label{fig: fat}
\end{figure}

The robot model considered in this work assumes that the robots are dimensionless points in the plane. This is not a realistic assumption as in reality, even very small robots occupy some space. Therefore, a more realistic model is the \emph{opaque fat robots} model where the robots are modeled as disks. In this case, two robots $r$ and $r'$ will be visible to each other if and only if there exist points $p$ and $p'$ on the boundary of $r$ and $r'$ respectively such that the line segment joining $p$ and $p'$ does not pass through any other point of any robot. It is not very difficult to modify Stage 2 of our algorithm to adapt it to the opaque fat robot setting without requiring any extra colors. Therefore, $\mathcal{APF}$ is solvable in two axis agreement using 3 colors as \textsc{Leader Election} is trivial in this case. However,  this is not the case in the one axis agreement setting. Recall that our \textsc{Leader Election} strategy was to move the two candidate robots so that they can get the full view of the configuration. This strategy worked because, for the two candidate robots moving along the vertical line $\mathcal{L}$, there are only finitely many points on $\mathcal{L}$ from where the full configuration can not be seen. This is essentially due to the fact that the robots were dimensionless points: a candidate robot can not see the full configuration if and only if it is on a point on $\mathcal{L}$ that is collinear with two non-candidate robots, and there can be at most $^{n-2}C_2$ such points. This does not work for fat robots as there could be no such point on $\mathcal{L}$ from where all robots are visible (see Fig. \ref{fig: fat}). Recently in \cite{BoseAKS20}, $\mathcal{APF}$ has been solved for opaque fat robots with one axis agreement. In \cite{BoseAKS20}, instead of moving the robots so that they can see the entire configuration, some geometric information are communicated among the robots using lights, which are used to elect the leader. In other words, movements are traded off for communication. As a consequence the total number of moves executed by all the robots is $\Theta(n)$. This is also asymptotically optimal. To see this, consider an initial configuration where all robots are collinear. Then if the pattern to be formed does not have three collinear robots, then at least $n-2$ robots need to move. So, the total number of moves required to solve $\mathcal{APF}$ is $\Omega(n)$. However, our algorithm could require $O(n^2)$ moves as in Phase 2, the candidate robots could move $^{n-2}C_2 = O(n^2)$ times. Furthermore, the algorithm of \cite{BoseAKS20} works even for robots with non-rigid movement. All these improvements are achieved by trading off movements for communication using lights. The obvious drawback of this strategy is that it requires significantly more colors. It would be interesting to see if these improvements can be achieved using fewer colors. 


\paragraph{Acknowledgements.} The first and the third author are supported by NBHM, DAE, Govt. of India and CSIR, Govt. of India, respectively. This work was done when the second author was at Jadavpur University, Kolkata, India, supported by UGC, Govt. of India. We would like to thank the anonymous reviewers for their valuable comments which helped us to improve the quality and presentation of this paper.

\bibliographystyle{splncs04}
\bibliography{pattern_opaque}

\begin{thebibliography}{10}
\providecommand{\url}[1]{\texttt{#1}}
\providecommand{\urlprefix}{URL }
\providecommand{\doi}[1]{https://doi.org/#1}

\bibitem{Adhikary18}
Adhikary, R., Bose, K., Kundu, M.K., Sau, B.: Mutual visibility by asynchronous
  robots on infinite grid. In: Algorithms for Sensor Systems - 14th
  International Symposium on Algorithms and Experiments for Wireless Sensor
  Networks, {ALGOSENSORS} 2018, Helsinki, Finland, August 23-24, 2018, Revised
  Selected Papers. pp. 83--101 (2018), doi: \url{10.1007/978-3-030-14094-6\_6}

\bibitem{agathangelou2013distributed}
Agathangelou, C., Georgiou, C., Mavronicolas, M.: A distributed algorithm for
  gathering many fat mobile robots in the plane. In: Proceedings of the 2013
  ACM symposium on Principles of distributed computing. pp. 250--259. ACM
  (2013)

\bibitem{AndoOSY99}
Ando, H., Oasa, Y., Suzuki, I., Yamashita, M.: Distributed memoryless point
  convergence algorithm for mobile robots with limited visibility. {IEEE}
  Trans. Robotics and Automation  \textbf{15}(5),  818--828 (1999), doi:
  \url{10.1109/70.795787}

\bibitem{Bhagat18}
Bhagat, S., Chaudhuri, S.G., Mukhopadhyaya, K.: Gathering of opaque robots in
  3d space. In: Proceedings of the 19th International Conference on Distributed
  Computing and Networking, {ICDCN} 2018, Varanasi, India, January 4-7, 2018.
  pp. 2:1--2:10 (2018), doi: \url{10.1145/3154273.3154322}

\bibitem{bhagat2017optimum}
Bhagat, S., Mukhopadhyaya, K.: Optimum algorithm for mutual visibility among
  asynchronous robots with lights. In: Spirakis, P.G., Tsigas, P. (eds.)
  Stabilization, Safety, and Security of Distributed Systems - 19th
  International Symposium, {SSS} 2017, Boston, MA, USA, November 5-8, 2017,
  Proceedings. Lecture Notes in Computer Science, vol. 10616, pp. 341--355.
  Springer (2017), doi: \url{10.1007/978-3-319-69084-1\_24}

\bibitem{BoseAKS20}
Bose, K., Adhikary, R., Kundu, M.K., Sau, B.: Arbitrary pattern formation by
  opaque fat robots with lights. In: Changat, M., Das, S. (eds.) Algorithms and
  Discrete Applied Mathematics - 6th International Conference, {CALDAM} 2020,
  Hyderabad, India, February 13-15, 2020, Proceedings. Lecture Notes in
  Computer Science, vol. 12016, pp. 347--359. Springer (2020), doi:
  \url{10.1007/978-3-030-39219-2\_28}

\bibitem{bose18}
Bose, K., Adhikary, R., Kundu, M.K., Sau, B.: Arbitrary pattern formation on
  infinite grid by asynchronous oblivious robots. Theor. Comput. Sci.
  \textbf{815},  213--227 (2020), doi: \url{10.1016/j.tcs.2020.02.016}

\bibitem{Bose2019}
Bose, K., Kundu, M.K., Adhikary, R., Sau, B.: Arbitrary pattern formation by
  asynchronous opaque robots with lights. In: Structural Information and
  Communication Complexity - 26th International Colloquium, {SIROCCO} 2019,
  L'Aquila, Italy, July 1-4, 2019, Proceedings. pp. 109--123 (2019), doi:
  \url{10.1007/978-3-030-24922-9\_8}

\bibitem{BramasT16}
Bramas, Q., Tixeuil, S.: Probabilistic asynchronous arbitrary pattern formation
  (short paper). In: Bonakdarpour, B., Petit, F. (eds.) Stabilization, Safety,
  and Security of Distributed Systems - 18th International Symposium, {SSS}
  2016, Lyon, France, November 7-10, 2016, Proceedings. Lecture Notes in
  Computer Science, vol. 10083, pp. 88--93 (2016), doi:
  \url{10.1007/978-3-319-49259-9\_7}

\bibitem{Bramas18}
Bramas, Q., Tixeuil, S.: Arbitrary pattern formation with four robots. In:
  Stabilization, Safety, and Security of Distributed Systems - 20th
  International Symposium, {SSS} 2018, Tokyo, Japan, November 4-7, 2018,
  Proceedings. pp. 333--348 (2018), doi: \url{10.1007/978-3-030-03232-6\_22}

\bibitem{Cicerone17}
Cicerone, S., Di~Stefano, G., Navarra, A.: Asynchronous arbitrary pattern
  formation: the effects of a rigorous approach. Distributed Computing pp.
  1--42 (2018), doi: \url{10.1007/s00446-018-0325-7}

\bibitem{cicerone2018embedded}
Cicerone, S., Di~Stefano, G., Navarra, A.: Embedded pattern formation by
  asynchronous robots without chirality. Distributed Computing pp. 1--25
  (2018), doi: \url{10.1007/s00446-018-0333-7}

\bibitem{Das15}
Das, S., Flocchini, P., Santoro, N., Yamashita, M.: Forming sequences of
  geometric patterns with oblivious mobile robots. Distributed Computing
  \textbf{28}(2),  131--145 (2015), doi: \url{10.1007/s00446-014-0220-9}

\bibitem{di2017mutual}
Di~Luna, G.A., Flocchini, P., Chaudhuri, S.G., Poloni, F., Santoro, N.,
  Viglietta, G.: Mutual visibility by luminous robots without collisions.
  Information and Computation  \textbf{254},  392--418 (2017), doi:
  \url{10.1016/j.ic.2016.09.005}

\bibitem{Petit09}
Dieudonn{\'{e}}, Y., Petit, F., Villain, V.: Leader election problem versus
  pattern formation problem. CoRR  \textbf{abs/0902.2851} (2009),
  \url{http://arxiv.org/abs/0902.2851}

\bibitem{Dieudonne10}
Dieudonn{\'{e}}, Y., Petit, F., Villain, V.: Leader election problem versus
  pattern formation problem. In: Distributed Computing, 24th International
  Symposium, {DISC} 2010, Cambridge, MA, USA, September 13-15, 2010.
  Proceedings. pp. 267--281 (2010), doi: \url{10.1007/978-3-642-15763-9\_26}

\bibitem{FelettiMP18}
Feletti, C., Mereghetti, C., Palano, B.: Uniform circle formation for swarms of
  opaque robots with lights. In: Stabilization, Safety, and Security of
  Distributed Systems - 20th International Symposium, {SSS} 2018, Tokyo, Japan,
  November 4-7, 2018, Proceedings. pp. 317--332 (2018), doi:
  \url{10.1007/978-3-030-03232-6\_21}

\bibitem{Flocchini08}
Flocchini, P., Prencipe, G., Santoro, N., Widmayer, P.: Arbitrary pattern
  formation by asynchronous, anonymous, oblivious robots. Theor. Comput. Sci.
  \textbf{407}(1-3),  412--447 (2008), doi: \url{10.1016/j.tcs.2008.07.026}

\bibitem{Fujinaga15}
Fujinaga, N., Yamauchi, Y., Ono, H., Kijima, S., Yamashita, M.: Pattern
  formation by oblivious asynchronous mobile robots. {SIAM} J. Comput.
  \textbf{44}(3),  740--785 (2015), doi: \url{10.1137/140958682}

\bibitem{Lukovszki14}
Lukovszki, T., {Meyer auf der Heide}, F.: Fast collisionless pattern formation
  by anonymous, position-aware robots. In: Principles of Distributed Systems -
  18th International Conference, {OPODIS} 2014, Cortina d'Ampezzo, Italy,
  December 16-19, 2014. Proceedings. pp. 248--262 (2014), doi:
  \url{10.1007/978-3-319-14472-6\_17}

\bibitem{peleg2005distributed}
Peleg, D.: Distributed coordination algorithms for mobile robot swarms: New
  directions and challenges. In: International Workshop on Distributed
  Computing. pp. 1--12. Springer (2005)

\bibitem{sharma2015mutual}
Sharma, G., Busch, C., Mukhopadhyay, S.: Mutual visibility with an optimal
  number of colors. In: International Symposium on Algorithms and Experiments
  for Sensor Systems, Wireless Networks and Distributed Robotics. pp. 196--210.
  Springer (2015), doi: \url{10.1007/978-3-319-28472-9_15}

\bibitem{sharma2016complete}
Sharma, G., Vaidyanathan, R., Trahan, J.L., Busch, C., Rai, S.: Complete
  visibility for robots with lights in o(1) time. In: International Symposium
  on Stabilization, Safety, and Security of Distributed Systems. pp. 327--345.
  Springer (2016), doi: \url{10.1007/978-3-319-49259-9\_26}

\bibitem{sharma2017log}
Sharma, G., Vaidyanathan, R., Trahan, J.L., Busch, C., Rai, S.: O(log n)-time
  complete visibility for asynchronous robots with lights. In: Parallel and
  Distributed Processing Symposium (IPDPS), 2017 IEEE International. pp.
  513--522. IEEE (2017), doi: \url{10.1109/IPDPS.2017.51}

\bibitem{Yamashita96}
Suzuki, I., Yamashita, M.: Distributed anonymous mobile robots. In: SIROCCO'96,
  The 3rd International Colloquium on Structural Information {\&} Communication
  Complexity, Siena, Italy, June 6-8, 1996. pp. 313--330 (1996)

\bibitem{Yamashita99}
Suzuki, I., Yamashita, M.: Distributed anonymous mobile robots: Formation of
  geometric patterns. {SIAM} J. Comput.  \textbf{28}(4),  1347--1363 (1999),
  doi: \url{10.1137/S009753979628292X}

\bibitem{VaidyanathanST18}
Vaidyanathan, R., Sharma, G., Trahan, J.L.: On fast pattern formation by
  autonomous robots. In: Stabilization, Safety, and Security of Distributed
  Systems - 20th International Symposium, {SSS} 2018, Tokyo, Japan, November
  4-7, 2018, Proceedings. pp. 203--220 (2018), doi:
  \url{10.1007/978-3-030-03232-6\_14}

\bibitem{Yamashita10}
Yamashita, M., Suzuki, I.: Characterizing geometric patterns formable by
  oblivious anonymous mobile robots. Theor. Comput. Sci.  \textbf{411}(26-28),
  2433--2453 (2010), doi: \url{10.1016/j.tcs.2010.01.037}

\bibitem{Yamauchi13}
Yamauchi, Y., Yamashita, M.: Pattern formation by mobile robots with limited
  visibility. In: Structural Information and Communication Complexity - 20th
  International Colloquium, {SIROCCO} 2013, Ischia, Italy, July 1-3, 2013,
  Revised Selected Papers. pp. 201--212 (2013), doi:
  \url{10.1007/978-3-319-03578-9\_17}

\bibitem{Yamauchi14}
Yamauchi, Y., Yamashita, M.: Randomized pattern formation algorithm for
  asynchronous oblivious mobile robots. In: Distributed Computing - 28th
  International Symposium, {DISC} 2014, Austin, TX, USA, October 12-15, 2014.
  Proceedings. pp. 137--151 (2014), doi: \url{10.1007/978-3-662-45174-8\_10}

\end{thebibliography}

\end{document}